\newtheorem{theorem}{Theorem}[section]
\newtheorem{corollary}{Corollary}[section]
\newtheorem{proposition}{Proposition}[section]
\newtheorem{remark}{Remark}[section]
\newtheorem{lemma}{Lemma}[section]
\numberwithin{equation}{section}
\begin{document}
\large
\begin{center}
   \textbf{\Large{Bargmann-Dirichlet Spaces from  Magnetic  Laplacians  \\ and theirs Bargmann Transforms}}
\end{center}
\begin{center}
   \textbf{Nour eddine Askour $^{1,3,a}$, \textbf{Adil Belhaj} $^{2,b}$ and  Mohamed Bouaouid $^{1,c}$}
\end{center}
\begin{center}
$^{1}$ Department of Mathematics, Sultan Moulay Slimane University, Faculty
        of Sciences and Technics, Béni Mellal, Morocco.
\end{center}
\begin{center}
$^{2}$ EPTM, Department of Physics, Sultan Moulay Slimane University, Polydisciplinary Faculty, Béni Mellal, Morocco.
\end{center}
\begin{center}
$^{3}$ Department of Mathematics, Mohammed V University, Faculty of Sciences, Rabat, Morocco.
\end{center}
\begin{center}
$^{a}$ n.askour@usms.ma, \hspace{0.25cm}$^{b}$ belhaj@unizar.es\hspace{0.25cm} and \hspace{0.25cm}$^{c}$ bouaouidfst@gmail.com
\end{center}
\section*{Abstract}
We reconsider the Bargmann-Dirichlet space on the complex plane
$\mathbb{C}$ and its generalizations  considered in \cite{Elh}.
Concretely, we first present a new characterization of such spaces
as harmonic spaces of the magnetic Laplacian with suitable domains.
Then, we elaborate an associated unitary integral transforms of
Bargmann type.
\begin{description}
  \item[Keywords:] Hilbert spaces with reproducing kernels; Partial differential operators; Bargmann-Dirichlet spaces;
   Bargmann transforms; Magnetic Laplacians; Hermite polynomials.

  \item[2010 Mathematics Subject Classification:] 74S70; 47Axx; 81Q10; 35Qxx.
\end{description}
\newpage
\tableofcontents
\newpage
\section{Introduction and summarized results}
The Bargmann transform is an integral transformation which intertwines the Sch\"{o}dinger representation and the complex wave
 (Fock-Bargmann) representation of the quantum harmonic oscillator. It plays an important role in the complex analysis and the
  phase space formulation of quantum mechanics. In the case where this harmonic oscillator corresponds to a charged particle
   moving in the plane $\mathbb{R}^{2}$ under a orthogonal magnetic field $\overrightarrow{B}$ of intensity $\nu=1$, the Bargmann transform
\begin{align}\label{E1.1}
 \nonumber B\hspace{0.2cm}:L^{2}(\mathbb{R},& \hspace{0.2cm}dx)\longrightarrow \mathcal{A}^{2,1}(\mathbb{C})\subset L^{2}(\mathbb{C},
 \hspace{0.2cm}e^{-\mid z\mid^{2}}d\lambda(z))\\
 &\varphi\longmapsto B[\varphi](z):=\pi^{-\frac{3}{4}}\int_{-\infty}^{+\infty}\exp(-\frac{x^{2}}{2}+\sqrt{2}xz-\frac{z^{2}}{2})\varphi(x)dx
 \end{align}
maps isometrically the space $L^{2}(\mathbb{R},\hspace{0.2cm}dx)$ on the Bargmann-Fock space $\mathcal{A}^{2,1}(\mathbb{C})$ of the
 holomorphic functions integrable with respect to the Gaussian measure $e^{-\mid z\mid^{2}}d\lambda(z)$. Here, $d\lambda(z)$
 is the ordinary area measure. The inner product in $\mathcal{A}^{2,1}(\mathbb{C})$ is inherited from
 $L^{2}(\mathbb{C},\hspace{0.2cm}e^{-\mid z\mid^{2}}d\lambda(z))$.\\
The Bargmann-Fock space is a convenient setting for many problems in complex analysis, physics and engineering  applications
 \cite{Fil}. In \cite{Ask}, the authors have linked the Bargmann-Fock space to the context of theoretical physics by proving that this space corresponds to states associated with a minimal energy. The schr\"{o}dinger operator with uniform constant magnetic field
  $\overrightarrow{B}$ of intensity $\nu>0$ and perpendicular to the plane $\mathbb{R}^{2}$ can be written as
\begin{align}\label{E1.2}
H_{\nu}=-\frac{1}{4}\{(\frac{\partial}{\partial x}+i\nu y)^{2}+(\frac{\partial}{\partial y}-i\nu x)^{2}\}
\end{align}
acting on the Hilbert space $L^{2}(\mathbb{R}^{2},\hspace{0.2cm}dxdy)$. The spectral properties of this operator have been
 investigated extensively in many works \cite{Ask,Avr,Hel}. Further, a more general class of such an operator in higher dimensional cases was discussed in \cite{Mal1,Mal2,Shi} with connection to the stochastic oscillator integral.\\
The Hamiltonian operator $H_{\nu}$, defined in $(\ref{E1.2})$, can be intertwined with the following magnetic Laplacian
\begin{align}\label{E1.3}
\tilde{\Delta}_{\nu}=-\frac{\partial^{2}}{\partial z\partial\overline{z}}+\nu\overline{z}\frac{\partial}{\partial\overline{z}}, \hspace{0.25cm}\nu>0,
\end{align}
acting on the Hilbert space $L^{2}(\mathbb{C},\hspace{0.2cm}e^{-\nu\mid z\mid^{2}}d\lambda(z))$ via the connection formula
\begin{align}\label{E1.4}
T_{\nu}\circ (H_{\nu}-\frac{\nu}{2})\circ T_{\nu}^{-1}=\tilde{\Delta}_{\nu },
\end{align}
where $T_{\nu}$ is the following unitary isomorphism
\begin{align}\label{E1.5}
\nonumber T_{\nu}:\hspace{0.25cm}L^{2}(\mathbb{R},&\hspace{0.2cm}dx)\longrightarrow L^{2}(\mathbb{C},\hspace{0.2cm}e^{-\nu\mid z\mid^{2}}d\lambda(z))\\
        &\varphi\longmapsto T_{\nu}[\varphi](z)=e^{\frac{\nu\mid z\mid^{2}}{2}}\varphi.
\end{align}
The formula $(\ref{E1.4})$ connects in an equivalent way the spectral theory of the two operators $H_{\nu}$ and $\tilde{\Delta}_{\nu }$.\\
From \cite{Ask}, the spectrum of the self-adjoint hamiltonian $H_{\nu}$ is reduced to eigenvalues and it is given by
\begin{align}\label{E1.6}
\sigma(H_{\nu})=\{\frac{\nu}{2},\frac{\nu}{2}+1,...\}.
\end{align}
The eigenspace associated with the lower energy $\frac{\nu}{2}$ (the
smallest eigenvalue) in the Schr\"{o}dinger representation
 $H_{\nu}\varphi=\frac{\nu}{2}\varphi$ corresponds to the eigenstate of the magnetic Laplacian (Laundau states) $\tilde{\Delta}_{\nu }$
  associated with the eigenvalue $0$ in the Bargmann representation
\begin{align}\label{E1.7}
\tilde{\Delta}_{\nu }\psi=0,\hspace{0.25cm}\psi\in L^{2}(\mathbb{C},\hspace{0.2cm}e^{-\nu\mid z\mid^{2}}d\lambda(z)).
\end{align}
In \cite{Ask2}, the authors proved that, for the intensity $\nu=1$, the harmonic space
\begin{align}\label{E1.8}
\mathcal{E}_{0}^{2,1}=\{\psi\in L^{2}(\mathbb{C},\hspace{0.2cm}e^{-\mid z\mid^{2}}d\lambda(z)),\hspace{0.2cm}\tilde{\Delta}_{1}\psi=0\}
\end{align}
coincides with the Bargmann-Fock space defined by
\begin{align}\label{E1.9}
\mathcal{A}^{2,1}(\mathbb{C}):=\{f:\mathbb{C}\longrightarrow\mathbb{C}, \hspace{0.15cm} \mbox{holomorphic and} \hspace{0.15cm}
\int_{\mathbb{C}}\mid f(z)\mid^{2}e^{-\mid z\mid^{2}}d\lambda(z)<+\infty\}.
\end{align}
In other words, the Bargmann transform maps, in an isometrical way, the Hilbert space $L^{2}(\mathbb{R},\hspace{0.15cm}dx)$ to
 the harmonic space $\mathcal{E}_{0}^{2,1}$ of the magnetic Laplace $\tilde{\Delta}_{1}$.\\
For $\nu>0$, the present work   will be devoted to an analogue case for the following  Bargmann-Dirichlet space  on the complex plane
\begin{align}\label{E1.10}
\mathcal{A}_{1}^{2,\nu}(\mathbb{C})=:\{f:\mathbb{C}\longrightarrow\mathbb{C},
\hspace{0.15cm} \mbox{holomorphic and}
\hspace{0.15cm}\int_{\mathbb{C}} \mid f^{'}(z)\mid^{2}e^{-\nu\mid
z\mid^{2}}d\lambda(z)<+\infty\},
\end{align}
where $f^{'}$ is the complex derivative of the function $f$. This space has been considered in \cite{Elh} as a Hilbert space equipped with the following scalar product
\begin{align}\label{E1.11}
<f,g>_{\nu,1}=\frac{\pi}{\nu}f(0)\overline{g(0)}+<f^{'},g^{'}>_{\nu},
\end{align}
where $<,>_{\nu}$ is the scalar product associated with the norm
\begin{align}\label{E1.12}
\parallel f\parallel^{2}_{\nu}=\int_{\mathbb{C}}\mid f(z)\mid^{2}e^{-\nu\mid z\mid^{2}}d\lambda(z),
\end{align}
defined on the Hilbert space
$L^{2}(\mathbb{C},\hspace{0.2cm}e^{-\nu\mid z\mid^{2}}d\lambda(z))$.
The authors of  the work  \cite{Elh} have showed that the Bargmann-Dirichlet space $(\mathcal{A}_{1}^{2,\nu}(\mathbb{C}),<,>_{\nu,1})$ is a reproducing kernel Hilbert space (R.K.H.S) and its reproducing kernel is given by
\begin{align}\label{E1.13}
\tilde{K}(z,w)=\frac{\pi}{\nu}\{1+z\overline{w}\prescript{}{2}{F}_2^{}(1,1;2,2;\nu z \overline{w})\},
\end{align}
where $\prescript{}{2}{F}_2^{}$ is the hypergeometric function defined by \cite[p.62]{Mag}
\begin{align}\label{E1.14}
\prescript{}{2}{F}_2^{}(a,b;a^{'},b^{'};x)=\sum_{k=0}^{+\infty}\frac{(a)_{k}(b)_{k}}{(a^{'})_{k}(b^{'})_{k}}\frac{x^{k}}{k!}.
\end{align}
We first prove that the Bargmann-Dirichlet space $\mathcal{A}_{1}^{2,\nu}(\mathbb{C})$ is the null space of the
magnetic Laplacian $\Delta_{\nu}$ with a suitable domain. Precisely, we have the following proposition.
\begin{proposition}\label{P2.4}
Let $\nu>0$ and $\Delta_{\nu}$ be the partial differential operator defined by
\begin{align}\label{E1.15}
\Delta_{\nu}=-\frac{\partial^{2}}{\partial z\partial\overline{z}}+\nu\overline{z}\frac{\partial}{\partial\overline{z}},
\end{align}
acting on the Hilbert space  $L^{2,\nu}(\mathbb{C}):=L^{2}(\mathbb{C},\hspace{0.2cm}e^{-\nu\mid z\mid^{2}}d\lambda(z))$ with the dense
 domain \begin{align}\label{E1.16}
D(\Delta_{\nu})=\{F\in L^{2,\nu}(\mathbb{C}),\hspace{0.25cm}\Delta_{\nu}F\in L^{2,\nu}(\mathbb{C})\hspace{0.25cm}\mbox{and}\hspace{0.25cm}
\frac{\partial F}{\partial z}\in L^{2,\nu}(\mathbb{C})\}.
\end{align}
Then, the Bargmann-Dirichlet space $\mathcal{A}^{2,\nu}_{1}(\mathbb{C})$ is the null space of $\Delta_{\nu}$, that is,
\begin{align}\label{E1.17}
\mathcal{A}^{2,\nu}_{1}(\mathbb{C})=\{F\in D(\Delta_{\nu}),\hspace{0.25cm}\Delta_{\nu}F=0\}.
\end{align}
\end{proposition}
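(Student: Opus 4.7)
The plan is to prove the two inclusions separately, the central ingredient being the Green-type identity
\begin{equation*}
\langle \Delta_{\nu} F,F\rangle_{\nu} \;=\; \|\partial_{\bar z} F\|_{\nu}^{2}, \qquad F\in D(\Delta_{\nu}),
\end{equation*}
which expresses $\Delta_{\nu}$ on this domain as a nonnegative Cauchy--Riemann-type operator and reduces the kernel equation to the holomorphy of $F$.

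For the inclusion $\mathcal{A}^{2,\nu}_{1}(\mathbb{C}) \subset \{F\in D(\Delta_{\nu}):\Delta_{\nu}F=0\}$, I would start from the Taylor expansion $f(z)=\sum_{k\ge 0}a_{k} z^{k}$ of an element $f\in\mathcal{A}^{2,\nu}_{1}(\mathbb{C})$. Using the standard moments $\int_{\mathbb{C}}|z|^{2k}e^{-\nu|z|^{2}}d\lambda(z)=\pi k!/\nu^{k+1}$, the hypothesis $f'\in L^{2,\nu}(\mathbb{C})$ is the finiteness of $\pi\sum_{k\ge 1}k\cdot k!|a_{k}|^{2}/\nu^{k}$; the elementary bound $k!/\nu^{k+1}\le\nu^{-1}(k\cdot k!/\nu^{k})$ then yields $\|f\|_{\nu}^{2} \le (\pi/\nu)|a_{0}|^{2} + \nu^{-1}\|f'\|_{\nu}^{2}$, so in particular $f\in L^{2,\nu}(\mathbb{C})$. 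Since $f$ is holomorphic, $\partial_{\bar z}f = 0$, hence $\Delta_{\nu} f = 0\in L^{2,\nu}(\mathbb{C})$, and $\partial_{z} f = f'\in L^{2,\nu}(\mathbb{C})$ by hypothesis, so $f\in D(\Delta_{\nu})$ and lies in its kernel.

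For the reverse inclusion, I would take $F\in D(\Delta_{\nu})$ with $\Delta_{\nu} F = 0$ and invoke the identity to force $\partial_{\bar z} F = 0$ almost everywhere, i.e.\ $F$ holomorphic; combined with $\partial_{z} F = F'\in L^{2,\nu}(\mathbb{C})$ from the domain, this places $F$ in $\mathcal{A}^{2,\nu}_{1}(\mathbb{C})$. To derive the identity itself, I would integrate the term $-\int_{\mathbb{C}}\partial_{z}(\partial_{\bar z}F)\bar F\,e^{-\nu|z|^{2}}d\lambda(z)$ by parts in $z$, using $\partial_{z}(e^{-\nu|z|^{2}}) = -\nu\bar z\, e^{-\nu|z|^{2}}$ and $\partial_{z}\bar F=\overline{\partial_{\bar z}F}$; the resulting $-\nu\bar z\partial_{\bar z}F\cdot\bar F$ contribution cancels exactly the second term $\nu\bar z\partial_{\bar z}F$ of $\Delta_{\nu}F$, leaving $\|\partial_{\bar z}F\|_{\nu}^{2}$.

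The main technical obstacle is the justification of this integration by parts for a general $F\in D(\Delta_{\nu})$, since the domain controls only $F$, $\partial_{z}F$ and $\Delta_{\nu}F$ in $L^{2,\nu}(\mathbb{C})$ while $\partial_{\bar z}F$ has no \emph{a priori} estimate. I would overcome this by a cutoff/approximation argument: multiply $\bar F$ by a smooth radial cutoff $\chi_{R}$ equal to $1$ on $\{|z|\le R\}$ and supported in $\{|z|\le 2R\}$, perform the (now legitimate) Stokes integration on the compactly supported integrand $\chi_{R}^{2}\bar F\partial_{\bar z}F e^{-\nu|z|^{2}}$, and let $R\to\infty$; the cross terms involving $\partial_{z}\chi_{R}$ are controlled by Cauchy--Schwarz using $F,\partial_{z}F\in L^{2,\nu}(\mathbb{C})$ and vanish in the limit. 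This step simultaneously yields $\partial_{\bar z}F\in L^{2,\nu}(\mathbb{C})$ and the desired identity $\langle\Delta_{\nu}F,F\rangle_{\nu}=\|\partial_{\bar z}F\|_{\nu}^{2}$, completing the proof.
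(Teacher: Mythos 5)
Your proof is correct, and for the harder inclusion it takes a genuinely different route from the paper. The inclusion $\mathcal{A}^{2,\nu}_{1}(\mathbb{C})\subset\ker\Delta_{\nu}$ is argued exactly as in the paper (Taylor coefficients, the moments $\pi k!/\nu^{k+1}$, and the bound $k!/\nu^{k+1}\leq\nu^{-1}k\cdot k!/\nu^{k}$, which is the membership test $(\ref{E2.11})$ in disguise). For the converse, the paper does not use any Green identity: it observes that $F\in L^{2,\nu}(\mathbb{C})$ with $\Delta_{\nu}F=0$ places $F$ in the null space of the self-adjoint maximal extension $\tilde{\Delta}_{\nu}$, and then invokes Proposition $(\ref{P2.3})$ --- i.e.\ the identification of $\mathcal{E}_{0}^{2,\nu}(\mathbb{C})$ with the Bargmann--Fock space via equality of reproducing kernels, which rests on the spectral analysis of \cite{Ask} and Proposition 3.3 of \cite{Pau}. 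Your energy identity $\langle\Delta_{\nu}F,F\rangle_{\nu}=\|\partial_{\bar z}F\|_{\nu}^{2}$, proved by a cutoff integration by parts, replaces all of that machinery by a self-contained and more elementary argument, at the price of having to justify the limit $R\to\infty$. Two details there deserve care: first, the cross term produced by $\partial_{z}\chi_{R}$ is $\int 2\chi_{R}(\partial_{z}\chi_{R})\,\partial_{\bar z}F\,\overline{F}\,e^{-\nu|z|^{2}}d\lambda$, which involves $\partial_{\bar z}F$ rather than $\partial_{z}F$; it is controlled not directly by Cauchy--Schwarz against known quantities but by absorbing $\epsilon\int\chi_{R}^{2}|\partial_{\bar z}F|^{2}e^{-\nu|z|^{2}}$ into the positive main term and using $|\partial_{z}\chi_{R}|\lesssim R^{-1}$ together with $F\in L^{2,\nu}(\mathbb{C})$ --- a self-improving estimate you implicitly acknowledge when you say the step ``simultaneously yields $\partial_{\bar z}F\in L^{2,\nu}(\mathbb{C})$.'' Second, the local integration by parts requires $F$ to be smooth, which follows from elliptic regularity for $\Delta_{\nu}$ once $\Delta_{\nu}F=0$; this should be stated. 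With those points made explicit, your argument is complete and arguably preferable in that it does not import the reproducing-kernel identification of the Landau level $\mathcal{E}_{0}^{2,\nu}(\mathbb{C})$, whereas the paper's route is shorter given that Proposition $(\ref{P2.3})$ is already available.
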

Note that the operator $\Delta_{\nu}$ defined by $(\ref{E1.15})$ and $(\ref{E1.16})$ is a non self-adjoint symmetric operator.
 This result connect the Bargmann-Dirichlet space to the context of quantum dynamics via a non self-adjoint operator. This fact goes in the sprit of the new vision on the role of the non self-adjoint operator in quantum physics. For more details about this subject, we refer to \cite{Bag} and the references therein. Also, as for the classical Bargmann-Fock space, we elaborate a unitary isomorphism from $L^{2}(\mathbb{R},\hspace{0.2cm}dx)$ into the Bargmann-Dirichlet space $\mathcal{A}^{2,\nu}_{1}(\mathbb{C})$. This isomorphism will be called, in the present work, the Bargmann-Dirichlet transform. More precisely, we have the following result.
\begin{theorem}\label{T}
For the Bargmann-Dirichlet space $\mathcal{A}_{1}^{2,\nu}(\mathbb{C})$, we have the following associated Bargmann-Dirichlet transform
\begin{align}\label{E1.18}
 \nonumber B_{\nu}\hspace{0.2cm}:L^{2}(\mathbb{R},& \hspace{0.2cm}dx)\longrightarrow \mathcal{A}_{1}^{2,\nu}(\mathbb{C})\\
 &\varphi\longmapsto B_{\nu}[\varphi](z):=\int_{-\infty}^{+\infty}K_{\nu}(z,x)\varphi(x)dx,
 \end{align}
  where the integral kernel $K_{\nu}(z,x)$ is given by
 \begin{align}\label{E1.19}
 K_{\nu}(z,x)=\sqrt{\nu}(\pi)^{\frac{-3}{4}}e^{-\frac{x^{2}}{2}}[1+\frac{\sqrt{2}}{\sqrt{\pi}}z\int_{0}^{+\infty}\sqrt{t}e^{-t}
\exp(x\sqrt{2\nu}e^{-t}z-\nu e^{-2t}\frac{z^{2}}{2})H_{1}(x-\sqrt{\frac{\nu}{2}}e^{-t}z)dt],
\end{align}
with $H_{1}(x)=2x$ denotes the second classical Hermite polynomial
\cite[p.250]{Mag}.
\end{theorem}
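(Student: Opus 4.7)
I would build $B_\nu$ from orthonormal bases and then identify the integral kernel by an explicit summation involving Hermite polynomials.

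For the orthonormal bases: using the inner product (\ref{E1.11}) together with the standard moment computation $\|z^n\|_\nu^2=\pi n!/\nu^{n+1}$ shows that the monomials $\{z^n\}_{n\ge 0}$ are mutually orthogonal in $\mathcal{A}_1^{2,\nu}(\mathbb{C})$ and give the orthonormal system
\[
e_0(z)=\sqrt{\nu/\pi},\qquad e_n(z)=\sqrt{\tfrac{\nu^n}{n\cdot n!\,\pi}}\,z^n\quad (n\ge 1).
\]
Completeness follows from Proposition \ref{P2.4}, since any $F\in\mathcal{A}_1^{2,\nu}(\mathbb{C})=\ker\Delta_\nu$ is holomorphic and its Taylor series converges in the $\langle\cdot,\cdot\rangle_{\nu,1}$-norm. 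On the $L^2(\mathbb{R},dx)$ side I would take the Hermite basis $h_n(x)=(2^n n!\sqrt{\pi})^{-1/2}e^{-x^2/2}H_n(x)$. Define $B_\nu$ as the unique unitary isomorphism characterized by $B_\nu[h_n]=e_n$; then unitarity and the fact that the range is $\mathcal{A}_1^{2,\nu}(\mathbb{C})$ are automatic.

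The integral kernel of $B_\nu$ is then $K_\nu(z,x)=\sum_{n\ge 0}e_n(z)h_n(x)$. Pulling out the common factor, the $n=0$ term produces the leading $1$ of (\ref{E1.19}) and one finds
\[
K_\nu(z,x)=\sqrt{\nu}\,\pi^{-3/4}e^{-x^2/2}\Bigl[1+\sum_{n\ge 1}\frac{(\sqrt{\nu/2}\,z)^n H_n(x)}{\sqrt{n}\,n!}\Bigr].
\]
To reduce the remaining series to the claimed integral I would combine two ingredients: the gamma-integral representation $n^{-3/2}=\frac{2}{\sqrt{\pi}}\int_0^{\infty}\sqrt{t}\,e^{-nt}\,dt$, and the once-differentiated Hermite generating function $\sum_{n\ge 1}H_n(x)s^n/(n-1)!=sH_1(x-s)e^{2xs-s^2}$, obtained from $\sum_{n\ge 0}H_n(x)s^n/n!=e^{2xs-s^2}$ by differentiation in $s$ and use of $H_1(y)=2y$. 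Writing $1/(\sqrt{n}\,n!)=1/(n^{3/2}(n-1)!)$, substituting the integral representation, interchanging sum and integral, and evaluating the inner series at $s=\sqrt{\nu/2}\,e^{-t}z$ yields precisely the bracket in (\ref{E1.19}), after noting $2\sqrt{\nu/2}=\sqrt{2\nu}$.

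The main obstacle I foresee is twofold and of an analytic nature: justifying the interchange of summation and integration in the kernel computation, and showing that the resulting $K_\nu$ gives a bona fide integral representation $B_\nu[\varphi](z)=\int K_\nu(z,x)\varphi(x)\,dx$ valid for every $\varphi\in L^2(\mathbb{R},dx)$ and not merely on a dense subspace. Both points should follow from Cramer-type Gaussian bounds on $H_n(x)e^{-x^2/2}$ combined with the $\sqrt{t}\,e^{-t}$ weight, which control the tail both in $n$ and in $t$; this delivers $K_\nu(z,\cdot)\in L^2(\mathbb{R},dx)$ for each fixed $z$, so Cauchy--Schwarz gives absolute convergence and the integral representation extends from Schwartz functions to all of $L^2(\mathbb{R},dx)$ by continuity of $B_\nu$.
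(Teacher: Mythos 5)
Your proposal is correct in substance and follows essentially the same route as the paper: expand the kernel over the orthonormal bases $\{\psi_j\}$ and the Hermite functions, use the Gamma-integral representation of $n^{-3/2}$ together with the shifted Hermite generating function $\sum_{k\ge0}H_{k+1}(x)s^{k}/k!=e^{2xs-s^{2}}H_1(x-s)$, and justify the sum--integral interchange and the $L^{2}$ membership of $K_\nu(z,\cdot)$ exactly as you anticipate (the paper does the latter by Parseval, $\sum_j|\psi_j(z)|^{2}<\infty$, and works in $L^{2}(\mathbb{R},e^{-x^{2}}dx)$ before conjugating by $\varphi\mapsto e^{x^{2}/2}\varphi$, which is equivalent to your direct use of Hermite functions). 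Only note that your displayed intermediate kernel drops a factor $\nu^{-1/2}$ in front of the series (one has $e_n(z)h_n(x)=\pi^{-3/4}e^{-x^{2}/2}(\sqrt{\nu/2}\,z)^{n}H_n(x)/(\sqrt{n}\,n!)$, so after factoring out $\sqrt{\nu}$ the sum carries a $1/\sqrt{\nu}$); keeping that factor is what makes the final reduction land exactly on the coefficient $\sqrt{2}/\sqrt{\pi}$ in (\ref{E1.19}).
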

In the analogue way of the Bargmann-Dirichlet space $\mathcal{A}_{1}^{2,\nu}(\mathbb{C})$, we consider the following generalized
Bargmann-Dirichlet space $\mathcal{A}_{m}^{2,\nu}(\mathbb{C})$, $\nu>0$, $m\geq2$, defined by
\begin{align}\label{E1.20}
\mathcal{A}_{m}^{2,\nu}(\mathbb{C}):=\{f:\mathbb{C}\longrightarrow\mathbb{C},
\hspace{0.15cm} \mbox{holomorphic and}
\hspace{0.15cm}\int_{\mathbb{C}}\mid f^{(m)}(z)\mid^{2}e^{-\nu\mid
z\mid^{2}}d\lambda(z)<+\infty\},
\end{align}
where $f^{(m)}(z):=\frac{\partial^{m} f(z)}{\partial z^{m}}$ indicates the complex $m-$derivative of the function $f$.\\
For fixed non-negative integers $m=2,3,...$, any holomorphic function  $f(z)=\displaystyle{\sum_{k=0}^{+\infty}}a_{k}z^{k}$ on
$\mathbb{C}$ can be written as
\begin{align}\label{E1.21}
f(z)=f_{1,m}(z)+f_{2,m}(z),
\end{align}
where  $f_{1,m}(z)=\displaystyle{\sum_{k=0}^{m-1}}a_{k}z^{k}$ and
$f_{2,m}(z)=\displaystyle{\sum_{k=m}^{+\infty}}a_{k}z^{k}$.
 The space $\mathcal{A}_{m}^{2,\nu}(\mathbb{C})$ can be equipped with the following norm
\begin{align}\label{E1.22}
\parallel f\parallel_{\nu,m}^{2}=\parallel f_{1,m}\parallel_{\nu}^{2}+\parallel f_{2,m}^{(m)}\parallel_{\nu}^{2},
\end{align}
where $\parallel.\parallel_{\nu}$ is the norm defined on the Hilbert space $L^{2}(\mathbb{C},\hspace{0.2cm}e^{-\nu\mid z\mid^{2}}d\lambda(z))$
 by the formula $(\ref{E1.12})$.\\
It is noted that the hermitian inner product $<,>_{\nu,m}$ associated with the norm\\
 $\parallel. \parallel_{\nu,m}$ is given through
\begin{align}\label{E1.23}
<f,g>_{\nu,m}=<f_{1,m},g_{1,m}>_{\nu}+<f_{2,m}^{(m)},g_{2,m}^{(m)}>_{\nu}.
\end{align}
The reproducing kernel of the generalized Bargmann-Dirichlet space $\mathcal{A}_{m}^{2,\nu}(\mathbb{C})$ is given by the following expression \cite{Elh}
\begin{align}\label{E1.24}
\tilde{K}_{m,\nu}(z,w)=\frac{\pi}{\nu}\{\sum_{j=0}^{m-1}\frac{(\nu z \overline{w})^{j}}{j!}+\frac{(z\overline{w})^{m}}{(m!)^{2}}\prescript{}{2}{F}_2^{}(1,1;m+1,m+1;\nu z \overline{w})\},
\end{align}
where $\prescript{}{2}{F}_2^{}$ is the hypergeometric function defined in $(\ref{E1.14})$.\\
The generalized Bargmann-Dirichlet space $\mathcal{A}_{m}^{2,\nu}(\mathbb{C})$ could be  connected with the
context of quantum mechanics of a planar particle moving under the action of a uniform constant perpendicular magnetic field of intensity $\nu$. Precisely, we have the following proposition
\begin{proposition}\label{P3.1}
Let  $\nu>0$, $m\in \mathbb{Z}_{+}$, $m\geq2$  and $\Delta_{\nu}$ be the partial differential operator which reads as
\begin{align}\label{E1.25}
\Delta_{\nu}=-\frac{\partial^{2}}{\partial z\partial\overline{z}}+\nu\overline{z}\frac{\partial}{\partial\overline{z}},
\end{align}
acting on the Hilbert space  $L^{2,\nu}(\mathbb{C}):=L^{2}(\mathbb{C},\hspace{0.2cm}e^{-\nu\mid z\mid^{2}}d\lambda(z))$ with
 the dense domain \begin{align}\label{E1.26}
D_{m}(\Delta_{\nu})=\{F\in L^{2,\nu}(\mathbb{C}),\hspace{0.25cm}\Delta_{\nu}F\in L^{2,\nu}(\mathbb{C})\hspace{0.25cm}\mbox{and}\hspace{0.25cm}\frac{\partial^{m} F}{\partial z^{m} }\in L^{2,\nu}(\mathbb{C})\}.
\end{align}
Then, we have the following characterization
\begin{align}\label{E1.27}
\mathcal{A}^{2,\nu}_{m}(\mathbb{C})=\{F\in D_{m}(\Delta_{\nu}),\hspace{0.25cm}\Delta_{\nu}F=0\}.
\end{align}
\end{proposition}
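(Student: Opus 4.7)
The plan is to follow the same strategy as the proof of Proposition \ref{P2.4}, since passing from $m=1$ to a general integer $m\geq 2$ only alters the $z$-derivative condition in the domain (\ref{E1.26}); the two key inputs are the decomposition $f=f_{1,m}+f_{2,m}$ from (\ref{E1.21}) and the factorization of $\Delta_\nu$ as a product of first-order operators. I would establish the two inclusions of (\ref{E1.27}) separately.

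For $\mathcal{A}_m^{2,\nu}(\mathbb{C})\subseteq\{F\in D_m(\Delta_\nu):\Delta_\nu F=0\}$, I would take a holomorphic $f$ with $f^{(m)}\in L^{2,\nu}(\mathbb{C})$. The polynomial piece $f_{1,m}$ is automatically in $L^{2,\nu}(\mathbb{C})$, and for the tail $f_{2,m}(z)=\sum_{k\geq m}a_k z^k$ the orthogonality of the monomials in $L^{2,\nu}(\mathbb{C})$ together with $\|z^k\|_\nu^2=\pi k!/\nu^{k+1}$ yields, by a termwise comparison, the bound $\|f_{2,m}\|_\nu^2\leq (m!\,\nu^m)^{-1}\|f^{(m)}\|_\nu^2<+\infty$. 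Hence $f\in L^{2,\nu}(\mathbb{C})$; holomorphicity of $f$ then gives $\Delta_\nu f=0$ and $\partial^m f/\partial z^m=f^{(m)}\in L^{2,\nu}(\mathbb{C})$, so $f$ sits in the right-hand side.

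For the reverse inclusion, let $F\in D_m(\Delta_\nu)$ satisfy $\Delta_\nu F=0$. The main structural observation is that $\Delta_\nu=A^*A$ where $A=\partial/\partial\overline{z}$ and its formal adjoint with respect to $\langle\cdot,\cdot\rangle_\nu$ is $A^*=-\partial/\partial z+\nu\overline{z}$. Using this factorization together with the symmetry of $\Delta_\nu$ noted after Proposition \ref{P2.4}, an integration by parts on the weighted space gives $\langle\Delta_\nu F,F\rangle_\nu=\|AF\|_\nu^2$, and the hypothesis $\Delta_\nu F=0$ then forces $\partial F/\partial\overline{z}=0$, so $F$ is holomorphic. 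Combined with the remaining domain hypothesis $\partial^m F/\partial z^m\in L^{2,\nu}(\mathbb{C})$, this places $F$ in $\mathcal{A}_m^{2,\nu}(\mathbb{C})$.

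The hard part will be justifying the identity $\langle\Delta_\nu F,F\rangle_\nu=\|AF\|_\nu^2$ rigorously, since $D_m(\Delta_\nu)$ does not explicitly include $AF\in L^{2,\nu}(\mathbb{C})$ and the boundary terms of the integration by parts are not a priori controlled. I see two viable routes. The first is an exhaustion/cutoff argument exploiting the Gaussian weight to kill boundary contributions as $|z|\to\infty$. The second, more robust, is a direct coefficient analysis: writing $F(z,\overline{z})=\sum_{n,k}c_{n,k}z^n\overline{z}^k$, the equation $\Delta_\nu F=0$ is equivalent to the recurrence $(n+1)(k+1)c_{n+1,k+1}=\nu k c_{n,k}$ together with $c_{j,1}=0$ for $j\geq 1$; a residual antiholomorphic seed $c_{0,K}$ with $K\geq 1$ produces $c_{j,K+j}=\nu^j K c_{0,K}/(j!(K+j))$, whose summation yields asymptotics of order $e^{\nu|z|^2}$ in $F$ and therefore contradicts $F\in L^{2,\nu}(\mathbb{C})$. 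This second route sidesteps the boundary-term issue entirely and seems the cleanest way to close the proof.
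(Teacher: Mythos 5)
Your first inclusion, $\mathcal{A}_m^{2,\nu}(\mathbb{C})\subseteq\{F\in D_m(\Delta_\nu):\Delta_\nu F=0\}$, is essentially the paper's argument: holomorphicity gives $\Delta_\nu f=0$, and your termwise bound (which amounts to $\frac{j!}{\nu^j}\leq\frac{1}{m!\,\nu^m}\frac{(j!)^2}{\nu^{j-m}(j-m)!}$ for $j\geq m$) is the same estimate as the paper's inequality $(\ref{E3.15})$ combined with $(\ref{E2.55})$ and the membership test $(\ref{E3.10})$. That half is fine.

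The gap is in the reverse inclusion, i.e.\ in showing that $F\in D_m(\Delta_\nu)$ with $\Delta_\nu F=0$ is holomorphic. The paper gets this in one line: since $D_m(\Delta_\nu)\subset D(\tilde{\Delta}_\nu)$, such an $F$ lies in the harmonic space $\mathcal{E}_0^{2,\nu}(\mathbb{C})$ of the self-adjoint maximal operator, which Proposition $(\ref{P2.3})$ identifies (via reproducing kernels) with the Bargmann--Fock space $\mathcal{A}^{2,\nu}(\mathbb{C})$, hence $F$ is holomorphic. Your substitutes do not close this step. Route (a), the identity $\langle\Delta_\nu F,F\rangle_\nu=\|\partial F/\partial\overline{z}\|_\nu^2$, is exactly what you admit you cannot justify: nothing in $D_m(\Delta_\nu)$ guarantees $\partial F/\partial\overline{z}\in L^{2,\nu}(\mathbb{C})$, and the weighted integration by parts requires decay information on $F$ that you do not have. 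Route (b), which you present as the way to close the proof, rests on expanding $F(z,\overline{z})=\sum_{n,k}c_{n,k}z^n\overline{z}^k$ as a globally convergent double series on all of $\mathbb{C}$. A priori $F$ is only an $L^{2,\nu}$ distributional solution; elliptic regularity (even analytic hypoellipticity) gives smoothness and local real-analyticity, but not a globally convergent expansion, and your contradiction uses the expansion, and the $e^{\nu|z|^2}$ growth it produces, precisely at large $|z|$ where the local expansion is unavailable. You also implicitly need that distinct sectors cannot cancel in $F$; this is true because monomials $z^n\overline{z}^k$ with different $n-k$ are orthogonal in $L^{2,\nu}(\mathbb{C})$ (equivalently, $\Delta_\nu$ commutes with rotations, so one may project onto Fourier modes in $\theta$), and working mode by mode with the resulting radial ODE is also the natural way to repair the global-expansion issue; but none of this is in your text. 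As written, the holomorphicity step is unproved; the immediate fix is the paper's: invoke Proposition $(\ref{P2.3})$ through the inclusion $D_m(\Delta_\nu)\subset D(\tilde{\Delta}_\nu)$.
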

For this generalized Bargmann-Dirichlet space $\mathcal{A}_{m}^{2,\nu}(\mathbb{C})$, we establish the following generalized
 Bargmann-Dirichlet transform. Precisely, we have the following result.

\begin{theorem}\label{T3.1} Let $\nu >0$ and $m\in \mathbb{Z}_{+}$, $m\geq2$. Then, we have the following unitary isomorphism
\begin{align}\label{E1.28}
 \nonumber B_{\nu,m}\hspace{0.2cm}:L^{2}(\mathbb{R},& \hspace{0.2cm}dx)\longrightarrow \mathcal{A}_{m}^{2,\nu}(\mathbb{C})\\
 &\varphi\longmapsto B_{\nu,m}[\varphi](z):=\int_{-\infty}^{+\infty}K_{\nu,m}(z,x)\varphi(x)dx.
 \end{align}
 The integral kernel $K_{\nu,m}(z,x)$ takes the form
 \begin{align}\label{E1.29}
\nonumber K_{\nu,m}(z,x)&=\sqrt{\nu}(\pi)^{\frac{-3}{4}}e^{\frac{-x^{2}}{2}}[\sum_{k=0}^{m-1}\bigg(\sqrt{\frac{\nu}{2}}z\bigg)^{k}\frac{H_{k}(x)}{k!}\\
&+\bigg(\sqrt{\frac{2}{\pi}}\bigg)^{m}z^{m}\int_{0}^{+\infty}\varpi_{m}(t)
\exp(x\sqrt{2\nu}e^{-t}z-\nu e^{-2t}\frac{z^{2}}{2})H_{m}(x-\sqrt{\frac{\nu}{2}}e^{-t}z)dt],
\end{align}
where $H_{m}(x)$ is the classical Hermite polynomial and
$\varpi_{m}(t)$ is the function defined by
\begin{align}\label{E1.30}
\varpi_{m}(t)=(\sqrt{t}e^{-t})*(\sqrt{t}e^{-2t})*...*(\sqrt{t}e^{-mt}), \hspace{0.25cm}m\geq2.
\end{align}
The notation $f*g$ means the convolution product \cite[p.91]{Sch}
\begin{align}\label{E1.31}
f*g(x)=\int_{0}^{x}f(x-y)g(y)dy.
\end{align}
\end{theorem}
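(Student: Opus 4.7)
The plan is to realise $B_{\nu,m}$ as the unique unitary sending the Hermite orthonormal basis of $L^{2}(\mathbb{R},dx)$ to an explicit orthonormal basis of $\mathcal{A}_{m}^{2,\nu}(\mathbb{C})$, then to sum the resulting kernel series and recognise it as (\ref{E1.29}). The heart of the argument is an integral representation of the ratio $(j!/(j+m)!)^{3/2}$ via the Laplace transform of the convolution $\varpi_{m}$.

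First I would compute the orthonormal basis $\{\tilde e_{k}\}_{k\ge 0}$ of $\mathcal{A}_{m}^{2,\nu}(\mathbb{C})$. Using $\|z^{k}\|_{\nu}^{2}=\pi k!/\nu^{k+1}$, the splitting $f=f_{1,m}+f_{2,m}$, and the inner product (\ref{E1.23}), the monomials $z^{k}$ are mutually orthogonal in $\mathcal{A}_{m}^{2,\nu}$ with
\[
\tilde e_{k}(z)=\sqrt{\frac{\nu^{k+1}}{\pi k!}}\,z^{k}\ (0\le k\le m-1),\qquad
\tilde e_{k}(z)=\sqrt{\frac{(k-m)!\,\nu^{k-m+1}}{\pi\,(k!)^{2}}}\,z^{k}\ (k\ge m).
\]
Completeness of $\{\tilde e_{k}\}$ follows by expanding any $f\in\mathcal A_{m}^{2,\nu}(\mathbb C)$ in power series and applying Parseval separately to the two summands in (\ref{E1.22}). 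With the Hermite ONB $\phi_{k}(x)=(2^{k}k!\sqrt\pi)^{-1/2}H_{k}(x)e^{-x^{2}/2}$ of $L^{2}(\mathbb R)$, the prescription $B_{\nu,m}\phi_{k}:=\tilde e_{k}$ then extends to a unitary, whose formal integral kernel is $K_{\nu,m}(z,x)=\sum_{k\ge 0}\tilde e_{k}(z)\phi_{k}(x)$.

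Splitting this series at $k=m$, the finite head $\sum_{k<m}\tilde e_{k}(z)\phi_{k}(x)$ collapses after elementary simplification to the polynomial bracket in (\ref{E1.29}). For the tail $k\ge m$, setting $j=k-m$, I would apply the Hermite identity
\[
\sum_{j\ge 0}\frac{H_{j+m}(x)\,u^{j}}{j!}=H_{m}(x-u)\,e^{2xu-u^{2}},
\]
proved by induction on $m$ from the three-term recursion $H_{n+1}(x)=2xH_{n}(x)-2nH_{n-1}(x)$. Specialising to $u=\sqrt{\nu/2}\,e^{-t}z$ produces exactly the factor $H_{m}(x-\sqrt{\nu/2}\,e^{-t}z)\exp(x\sqrt{2\nu}\,e^{-t}z-\nu e^{-2t}z^{2}/2)$ appearing in (\ref{E1.29}); what remains is to recover, from integration against $\varpi_{m}$, the missing weight $(j!/(j+m)!)^{3/2}$ in the tail coefficient $\sqrt{j!}(\nu/2)^{j/2}/[2^{m/2}(j+m)!^{3/2}]$ of $\tilde e_{j+m}(z)\phi_{j+m}(x)$.

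This last point is the main obstacle and the very reason for the peculiar definition (\ref{E1.30}). Since $\int_{0}^{+\infty}\sqrt{t}\,e^{-(j+k)t}\,dt=\frac{\sqrt\pi/2}{(j+k)^{3/2}}$, and since under the convolution (\ref{E1.31}) the Laplace transform factorises over $[0,+\infty)$, one obtains
\[
\int_{0}^{+\infty}\varpi_{m}(t)\,e^{-jt}\,dt=\prod_{k=1}^{m}\frac{\sqrt\pi/2}{(j+k)^{3/2}}=\frac{\pi^{m/2}}{2^{m}}\bigg(\frac{j!}{(j+m)!}\bigg)^{3/2},
\]
which is precisely the factor required. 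Interchanging the $j$-sum with the $t$-integral (justified by dominated convergence: the inner Hermite generating series has infinite radius of convergence and $\varpi_{m}\in L^{1}(0,+\infty)$) and collecting the prefactors $\sqrt{\nu}\pi^{-3/4}$ and $(\sqrt{2/\pi})^{m}z^{m}$ assembles the closed form (\ref{E1.29}). Unitarity of $B_{\nu,m}$ and surjectivity onto $\mathcal A_{m}^{2,\nu}(\mathbb C)$ are automatic from the orthonormal-basis construction, completing the proof.
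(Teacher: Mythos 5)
Your proposal is correct and follows essentially the same route as the paper: expand the kernel over the orthonormal basis $\psi_j^m$ of $\mathcal{A}_m^{2,\nu}(\mathbb{C})$ paired with the Hermite basis, represent the coefficient ratio $\big(j!/(j+m)!\big)^{3/2}$ through the Laplace transform of $\varpi_m$ (the paper's Lemma \ref{L3.1}), interchange sum and integral by dominated convergence using Hermite asymptotics and $\varpi_m\in L^1(0,+\infty)$, and close the tail with the generating identity $\sum_{j}\frac{H_{j+m}(x)}{j!}u^{j}=e^{2xu-u^{2}}H_m(x-u)$. The only (cosmetic) difference is that you work directly with the Hermite functions in $L^{2}(\mathbb{R},dx)$, whereas the paper first builds the isometry on $L^{2}(\mathbb{R},e^{-x^{2}}dx)$ and then composes with the canonical isometry $\varphi\mapsto e^{x^{2}/2}\varphi$.
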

For $m=2$, we can explicit the function $\varpi_{m}(t)$. Then, we have the following more precise result.
\begin{proposition}\label{P3.12} Let $\nu>0$ and $m=2$. Then, we have the following unitary isomorphism
\begin{align}\label{E1.32}
 \nonumber B_{\nu,2}\hspace{0.2cm}:L^{2}(\mathbb{R},& \hspace{0.2cm}dx)\longrightarrow \mathcal{A}_{2}^{2,\nu}(\mathbb{C})\\
 &\varphi\longmapsto B_{\nu,2}[\varphi](z):=\int_{-\infty}^{+\infty}K_{\nu,2}(z,x)\varphi(x)dx,
 \end{align}
  where the integral kernel $K_{\nu,2}(z,x)$ is given by
 \begin{align}\label{E1.33}
\nonumber K_{\nu,2}(z,x)&=\sqrt{\nu}(\pi)^{\frac{-3}{4}}e^{\frac{-x^{2}}{2}}[1+\sqrt{2\nu}xz\\
&+\frac{z^{2}}{4}\int_{0}^{+\infty}t^{2}
\exp(-2t+x\sqrt{2\nu}e^{-t}z-\nu e^{-2t}\frac{z^{2}}{2})\prescript{}{1}{F}_1^{}(\frac{3}{2};3;t)H_{2}(x-\sqrt{\frac{\nu}{2}}e^{-t}z)dt],
\end{align}
with $\prescript{}{1}{F}_1^{}(\alpha;\beta;t)=\displaystyle{\sum_{j=0}^{+\infty}}\frac{(\alpha)_{j}}{(\beta)_{j}}\frac{z^{j}}{j!}$
is the confluent hypergeometric function and $H_{2}(y)=4y^{2}-2$ is the third Hermite polynomial.
\end{proposition}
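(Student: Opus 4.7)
The plan is to derive the formula for $K_{\nu,2}(z,x)$ by specializing Theorem \ref{T3.1} to $m=2$ and then computing the convolution $\varpi_{2}(t)=(\sqrt{t}e^{-t})*(\sqrt{t}e^{-2t})$ in closed form. When $m=2$, the finite sum in $(\ref{E1.29})$ reduces to $1+\sqrt{2\nu}\,xz$ because $H_{0}(x)=1$, $H_{1}(x)=2x$, so the only nontrivial piece is the integral term carrying $\varpi_{2}(t)$. Everything in Proposition \ref{P3.12} therefore boils down to expressing $\varpi_{2}(t)$ in terms of ${}_{1}F_{1}(3/2;3;t)$ and absorbing the resulting constant into the prefactor $(2/\pi)^{m}z^{m}=(2/\pi)z^{2}$.

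First, I would write out the convolution from definition $(\ref{E1.31})$:
\begin{equation*}
\varpi_{2}(t)=\int_{0}^{t}\sqrt{t-s}\,e^{-(t-s)}\sqrt{s}\,e^{-2s}\,ds=e^{-t}\int_{0}^{t}\sqrt{s(t-s)}\,e^{-s}\,ds.
\end{equation*}
Then I would normalize the integration interval by setting $s=tu$, obtaining
\begin{equation*}
\varpi_{2}(t)=t^{2}e^{-t}\int_{0}^{1}u^{1/2}(1-u)^{1/2}e^{-tu}\,du.
\end{equation*}
The remaining integral is Euler's integral representation of the confluent hypergeometric function, namely
\begin{equation*}
\int_{0}^{1}u^{a-1}(1-u)^{c-a-1}e^{zu}\,du=\frac{\Gamma(a)\Gamma(c-a)}{\Gamma(c)}\,{}_{1}F_{1}(a;c;z),
\end{equation*}
which with $a=c-a=3/2$, $c=3$ and $z=-t$ gives the value $(\pi/8)\,{}_{1}F_{1}(3/2;3;-t)$.

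At this stage $\varpi_{2}(t)=(\pi t^{2}/8)\,e^{-t}\,{}_{1}F_{1}(3/2;3;-t)$, and I must convert the argument $-t$ into $+t$ to match $(\ref{E1.33})$. This is handled by Kummer's transformation ${}_{1}F_{1}(a;c;z)=e^{z}\,{}_{1}F_{1}(c-a;c;-z)$; with $a=c-a=3/2$ this yields ${}_{1}F_{1}(3/2;3;-t)=e^{-t}\,{}_{1}F_{1}(3/2;3;t)$, and therefore
\begin{equation*}
\varpi_{2}(t)=\frac{\pi}{8}\,t^{2}\,e^{-2t}\,{}_{1}F_{1}\!\bigl(\tfrac{3}{2};3;t\bigr).
\end{equation*}
Finally, I would insert this closed form into the $m=2$ specialization of $(\ref{E1.29})$, use $(2/\pi)\cdot(\pi/8)=1/4$ to collect the prefactor of $z^{2}$, and merge the factor $e^{-2t}$ with the existing exponential inside the integral. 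The result is exactly the expression for $K_{\nu,2}(z,x)$ stated in $(\ref{E1.33})$, and the isomorphism property is inherited directly from Theorem \ref{T3.1}. The only nonroutine step is the invocation of Kummer's transformation to convert the sign of the ${}_{1}F_{1}$ argument; without this, the formula would not match the stated one, so I expect that to be the decisive identity. The rest is bookkeeping of constants.
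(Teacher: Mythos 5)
Your proposal is correct and follows essentially the same route as the paper: specialize Theorem \ref{T3.1} to $m=2$ (where the finite sum gives $1+\sqrt{2\nu}\,xz$) and evaluate $\varpi_{2}$ in closed form via Euler's integral representation of the confluent hypergeometric function, with the constant $(2/\pi)\cdot(\pi/8)=1/4$. The only difference is cosmetic: the paper groups the exponentials in the convolution as $e^{-2t}e^{+s}$, so the Euler integral directly yields ${}_{1}F_{1}(\tfrac{3}{2};3;+t)$, whereas your grouping $e^{-t}e^{-s}$ produces ${}_{1}F_{1}(\tfrac{3}{2};3;-t)$ and requires one extra (correctly applied) Kummer transformation to reach the same closed form $\varpi_{2}(t)=\tfrac{\pi}{8}t^{2}e^{-2t}\,{}_{1}F_{1}(\tfrac{3}{2};3;t)$.
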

The paper is organized as follows. In section 2, we review some results on the  spectral analysis of the Schr\"{o}dinger operator with a uniform constant magnetic field on the complex plane $\mathbb{C}$. Then, we characterize the Barmann-Dirichlet space as a harmonic space of a magnetic Laplacian with a suitable domain. We end this section by establishing an integral transformation that realizes an isometry between $L^{2}(\mathbb{R},\hspace{0.2cm}dx)$ and the Bargmann-Dirichlet space $\mathcal{A}_{1}^{2,\nu}(\mathbb{C})$. Section 3 will be devoted to the generalized Bargmann-Dirichlet space $\mathcal{A}_{m}^{2,\nu}(\mathbb{C})$, $m\geq 2$. In this section, we follow the same lines as in the section 2. The last section is  reserved to concluding remarks.
\section{Bargmann-Dirichlet space}
\subsection{Bargmann-Dirichlet space from magnetic Laplacian}
Let $\nu>0$ and $L^{2,\nu}(\mathbb{C}):=L^{2}(\mathbb{C},\hspace{0.2cm}e^{-\nu\mid z\mid^{2}}d\lambda(z))$ is the Hilbert space of all square-integrable functions on $\mathbb{C}$ with respect to the Gaussian measure $d\mu_{\nu}(z):=e^{-\nu\mid z\mid^{2}}d\lambda(z)$,
  where $d\lambda(z)$ is the Lebesgue measure. The hermitian inner product is defined by
\begin{align}\label{E2.1}
<f,g>_{\nu}=\int_{\mathbb{C}}f(z)\overline{g(z)}e^{-\nu\mid z\mid^{2}}d\lambda(z),
\end{align}
and the associated norm is given by
\begin{align}\label{E2.2}
\parallel f\parallel^{2}_{\nu}=\int_{\mathbb{C}}\mid f(z)\mid^{2}e^{-\nu\mid z\mid^{2}}d\lambda(z).
\end{align}
In the analogue way of the weighted Bergmann-Dirichlet space on the unit disk \cite{Elf}, we consider the following functional space, called the
 Bargmann-Dirichlet space on $\mathbb{C}$ (\cite{Elh}). Here, this space will be denoted  $\mathcal{A}_{1}^{2,\nu}(\mathbb{C})$ and defined by
\begin{align}\label{E2.3}
\mathcal{A}_{1}^{2,\nu}(\mathbb{C}):=\{f:\mathbb{C}\longrightarrow\mathbb{C}, \hspace{0.15cm} \mbox{holomorphic and} \hspace{0.15cm}
\int_{\mathbb{C}}\mid f^{'}(z)\mid^{2}e^{-\nu\mid z\mid^{2}}d\lambda(z)<+\infty\},
\end{align}
where $f^{'}(z)$ is the complex derivative defined by $f^{'}(z):=\frac{\partial f(z)}{\partial z}=\frac{1}{2}(\frac{\partial f(z)}{\partial x}-i\frac{\partial f(z)}{\partial y}),$ with $z=x+iy.$\\
On the Bargmann-Dirichlet space
$\mathcal{A}_{1}^{2,\nu}(\mathbb{C})$, we take  the following norm
 \begin{align}\label{E2.4}
 \parallel f\parallel^{2}_{\nu,1}=\frac{\pi}{\nu}\mid f(0)\mid^{2}+\parallel f^{'}\parallel^{2}_{\nu},
 \hspace{0.25cm} f\in \mathcal{A}_{1}^{2,\nu}(\mathbb{C}).
 \end{align}
Notice that the hermitian inner product $<,>_{\nu,1}$, associated with the norm $\parallel.\parallel_{\nu,1}$, is given  through
\begin{align}\label{E2.5}
<f,g>_{\nu,1}=\frac{\pi}{\nu}f(0)\overline{g(0)}+<f^{'},g^{'}>_{\nu}.
\end{align}
It is recalled that the monomials $e_{j}(z)=z^{j}$ belong to $\mathcal{A}_{1}^{2,\nu}(\mathbb{C})$ (\cite{Elh}).
 They are pairwise orthogonal with respect to the hermitian scalar product $<,>_{\nu,1}$ with
\begin{align}\label{E2.6}
\parallel e_{j}\parallel^{2}_{\nu,1}=
\left\{
  \begin{array}{ll}
    \frac{\pi}{\nu}, &\mbox{for} \hspace{0.25cm}j=0, \hbox{} \\\\
    \frac{\pi(j!)^{2}}{\nu^{j}\Gamma(j)}, &\mbox{for} \hspace{0.25cm}j\geq1, \hbox{}
  \end{array}
\right.
\end{align}
where $\Gamma$ is the classical Euler gamma function
\cite[p.1]{Mag}. Then, an orthonormal basis for the space
$\mathcal{A}_{1}^{2,\nu}(\mathbb{C})$ can be written as
\begin{align}\label{E2.7}
   \psi_{j}(z)=\left\{
                 \begin{array}{ll}
                   \sqrt{\frac{\nu}{\pi}},&j=0,\hbox{} \\\\
                   \frac{(\nu)^{\frac{j}{2}}}{\sqrt{\pi j(j!)}}z^{j},&j\geq 1. \hbox{}
                 \end{array}
               \right.
\end{align}
It is well know that the Bargmann-Dirichlet space $\mathcal{A}_{1}^{2,\nu}(\mathbb{C})$ equipped with the above hermitian scalar product is a reproducing kernel
Hilbert space \cite{Elh}. The reproducing kernel is given by the Papadakis formula \cite{Pau}
\begin{align}\label{E2.8}
K(z,w)=\sum_{k=0}^{+\infty}\psi_{j}(z)\overline{\psi_{j}(w)}.
\end{align}
According to \cite{Elh}, this kernel has been given explicitly by the following formula
\begin{align}\label{E2.9}
K(z,w)=\frac{\pi}{\nu}\{1+z\overline{w}\prescript{}{2}{F}_2^{}(1,1;2,2;\nu z \overline{w})\},
\end{align}
where $\prescript{}{2}{F}_2^{}$ is the hypergeometric function defined by \cite[p.62]{Mag}
\begin{align}\label{E2.10}
\prescript{}{2}{F}_2^{}(a,b;a^{'},b^{'};x)=\sum_{k=0}^{+\infty}\frac{(a)_{k}(b)_{k}}{(a^{'})_{k}(b^{'})_{k}}\frac{x^{k}}{k!}.
\end{align}
It is noted that we have the following membership test \cite{Elh}.
\begin{align}\label{E2.11}
f(z):=\sum_{j=0}^{+\infty} a_{j}z^{j}\in \mathcal{A}_{1}^{2,\nu}(\mathbb{C})\Longleftrightarrow
\sum_{j=0}^{+\infty}\frac{j(j!)}{\nu^{j}} \mid a_{j} \mid^{2}<+\infty.
\end{align}
In order to relate the Bargmann-Dirichlet space $\mathcal{A}_{1}^{2,\nu}(\mathbb{C})$ with the Landau states on
 the complex plane, we need to recall the following magnetic Laplacian representing Shr\"{o}dinger operator with a magnetic field having the magnitude $\nu>0$
\begin{align}\label{E2.12}
\tilde{\Delta}_{\nu}=-\frac{\partial^{2}}{\partial z\partial\overline{z}}+\nu\overline{z}\frac{\partial}{\partial\overline{z}},
\end{align}
acting on the Hilbert space
$L^{2,\nu}(\mathbb{C}):=L^{2}(\mathbb{C},\hspace{0.2cm}e^{-\nu\mid z\mid^{2}}d\lambda(z))$ with the maximal domain
\begin{align}\label{E2.13}
D(\tilde{\Delta}_{\nu})=\{F\in L^{2,\nu}(\mathbb{C}),\hspace{0.25cm}\tilde{\Delta}_{\nu}F\in L^{2,\nu}(\mathbb{C})\}.
\end{align}
Note that the operator $\tilde{\Delta}_{\nu}$ can be unitarily intertwined with the operator $\tilde{\Delta}_{1}$ $(\nu=1)$. Precisely, we have
\begin{align}\label{E2.14}
T_{\nu}\circ\tilde{\Delta}_{\nu}\circ T_{\nu}^{-1}=\nu \tilde{\Delta}_{1},
\end{align}
where $T_{\nu}$ is the following isometry defined by
\begin{align}\label{E2.15}
\nonumber T_{\nu}:\hspace{0.25cm}L^{2}(\mathbb{C},&\hspace{0.2cm}e^{-\nu\mid z\mid^{2}}d\lambda(z))\longrightarrow L^{2}(\mathbb{C},\hspace{0.2cm}e^{-\mid z\mid^{2}}d\lambda(z))\\
        &\varphi\longmapsto T_{\nu}[\varphi](z)=\frac{1}{\nu}\varphi(\frac{z}{\sqrt{\nu}}).
\end{align}
From spectral theory point of view, the spectral tools of the operator $\tilde{\Delta}_{\nu}$ can be obtained from
 $\tilde{\Delta}_{1}$ by using the relation $(\ref{E2.14})$.\\
The $n-$dimensional analogue of the operator  $\tilde{\Delta}_{1}$ is given by
\begin{align}\label{E2.16}
\tilde{\Delta}=-\sum_{j=1}^{n}\frac{\partial^{2}}{\partial z_{j}\partial\overline{z_{j}}}+\sum_{j=1}^{n}\overline{z_{j}}\frac{\partial}{\partial\overline{z_{j}}},
\end{align}
which has been extensively studied in \cite{Ask}. By exploiting the results given in \cite{Ask} for the case $n=1$,
 it is not hard to state for the operator $\tilde{\Delta}_{\nu}$ the following proposition.
\begin{proposition}\label{P2.1}
Let $\nu>0$ and let $\tilde{\Delta}_{\nu}$ be the operator defined by $(\ref{E2.12})$ and $(\ref{E2.13})$. Then, we have
 the following statements
\begin{enumerate}
  \item The operator $\tilde{\Delta}_{\nu}$ is unbounded self-adjoint operator on its maximal domain $D(\tilde{\Delta}_{\nu})$.
  \item The spectrum $\sigma(\tilde{\Delta}_{\nu})$ of the operator $\tilde{\Delta}_{\nu}$ is given by
  \begin{align}\label{E2.17}
  \sigma(\tilde{\Delta}_{\nu})=\{\nu\ell,\hspace{0.25cm}\ell=0,1,2,...\}.
  \end{align}
  \item For each $\ell\in\mathbb{Z}_{+}$, the spectral value $\nu\ell$ is an eigenvalue of the operator $\tilde{\Delta}_{\nu}$ and the corresponding
   eigenspace
\begin{align}\label{E2.18}
\mathcal{E}_{\ell}^{2,\nu}(\mathbb{C})=\{F\in
L^{2,\nu}(\mathbb{C}),\hspace{0.25cm}\tilde{\Delta}_{\nu}F=\nu\ell
F\}
\end{align}
is a R.K.H.S with the reproducing kernel
 \begin{align}\label{E2.19}
 \mathcal{K}^{\nu}_{\ell}(z,w)=\frac{\nu}{\pi}e^{\nu z \overline{w}}L_{\ell}(\nu\mid z-w\mid^{2}),
 \end{align}
where $L_{j}(x)$ is the classical Laguerre polynomial \cite[p.239]{Mag}.
\item The operator $\tilde{\Delta}_{\nu}$ admits the following spectral decomposition
\begin{align}\label{E2.20}
\tilde{\Delta}_{\nu}=\int_{-\infty}^{+\infty}\lambda dE_{\lambda}^{\nu},
\end{align}
where the spectral family $\{E_{\lambda}^{\nu},\hspace{0.2cm}\lambda \in \mathbb{R}\}$ is given by the following projectors
\begin{align}\label{E2.21}
\nonumber E_{\lambda}^{\nu}:\hspace{0.25cm}L^{2}(\mathbb{C},\hspace{0.2cm}e^{-\nu\mid z\mid^{2}}d\lambda(z)&)\longrightarrow
 L^{2}(\mathbb{C},\hspace{0.2cm}e^{-\nu\mid z\mid^{2}}d\lambda(z))\\
        &\varphi\longmapsto E_{\lambda}^{\nu}[\varphi],
\end{align}
where
\begin{align}\label{E2.22}
E_{\lambda}^{\nu}[\varphi](z)=
\left\{
  \begin{array}{ll}
    \frac{\pi}{\nu}\int_{\mathbb{C}}e^{\nu z \overline{w}}L_{[\lambda]}^{(1)}(\nu\mid z-w\mid^{2})\varphi(w)e^{-\nu\mid w\mid^{2}}d\lambda(w), &\lambda\geq0, \hbox{} \\\\
    0, &\lambda<0, \hbox{}
  \end{array}
\right.
\end{align}
with $[\lambda]$ is the integer part of the real number $\lambda$.
\end{enumerate}
\end{proposition}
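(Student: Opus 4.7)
The proof would proceed by reducing everything to the case $\nu = 1$, which has been worked out in \cite{Ask}, via the intertwining relation (\ref{E2.14}). First, I would observe that $T_\nu$ in (\ref{E2.15}) is (up to a normalization that makes it isometric) a unitary map $L^{2,\nu}(\mathbb{C}) \to L^{2,1}(\mathbb{C})$, and the identity $T_\nu \circ \tilde\Delta_\nu \circ T_\nu^{-1} = \nu\,\tilde\Delta_1$ implies that $\tilde\Delta_\nu$ is unitarily equivalent, up to the positive scalar $\nu$, to $\tilde\Delta_1$. Self-adjointness on the maximal domain $D(\tilde\Delta_\nu)$ in item (1), and the identification of the spectrum $\sigma(\tilde\Delta_\nu) = \nu\,\sigma(\tilde\Delta_1) = \nu\,\mathbb{Z}_{+}$ in item (2), then both follow at once from the corresponding facts for $\tilde\Delta_1$ established in \cite{Ask} (specialized from the $n$-dimensional case to $n=1$).

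For item (3), each eigenspace $\mathcal{E}_\ell^{2,\nu}(\mathbb{C})$ attached to the eigenvalue $\nu\ell$ of $\tilde\Delta_\nu$ is precisely the $T_\nu^{-1}$-image of the $\ell$-th eigenspace of $\tilde\Delta_1$, whose reproducing kernel is known from \cite{Ask} to be $\frac{1}{\pi}e^{z\overline{w}}L_\ell(|z-w|^2)$. Transporting this kernel through $T_\nu^{-1}$ amounts to performing the substitution $z\mapsto\sqrt{\nu}\,z,\ w\mapsto\sqrt{\nu}\,w$ and collecting the correct weight coming from the measure $e^{-\nu|z|^2}d\lambda(z)$; a direct computation yields the advertised formula
\[
\mathcal{K}^\nu_\ell(z,w) = \frac{\nu}{\pi}\,e^{\nu z\overline{w}}\,L_\ell\bigl(\nu\,|z-w|^2\bigr),
\]
and the reproducing property in $L^{2,\nu}(\mathbb{C})$ follows from the corresponding property in $L^{2,1}(\mathbb{C})$ by change of variables.

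For the spectral decomposition in item (4), since $\tilde\Delta_\nu$ is self-adjoint with pure point spectrum $\nu\mathbb{Z}_{+}$ and the mutually orthogonal closed eigenspaces $\mathcal{E}_\ell^{2,\nu}(\mathbb{C})$ span $L^{2,\nu}(\mathbb{C})$, the resolution of the identity is simply $E_\lambda^\nu = \sum_{k:\ \nu k\le \lambda} \Pi_k$, where $\Pi_k$ is the orthogonal projection onto $\mathcal{E}_k^{2,\nu}(\mathbb{C})$, integrally represented by the reproducing kernel $\mathcal{K}^\nu_k(z,w)$ of item (3). The only remaining piece is to sum these kernels, which reduces via the classical Laguerre identity $\sum_{k=0}^{n}L_k(x) = L_n^{(1)}(x)$ to the closed form displayed in (\ref{E2.22}). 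I expect the only delicate step to be the careful bookkeeping of the normalization constants under $T_\nu$, so that the powers of $\nu$ and the factor of $\pi$ come out as stated in both the kernel and the projector; once this is pinned down, the entire proposition is a mechanical transport of the $\nu=1$ Landau spectral theory of \cite{Ask} through the unitary intertwiner.
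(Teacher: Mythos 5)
Your proposal is correct and follows essentially the same route as the paper, which offers no standalone proof of this proposition: it simply invokes the $n=1$ results of \cite{Ask} together with the intertwining relation (\ref{E2.14})--(\ref{E2.15}), exactly the reduction you spell out, and your use of $\sum_{k=0}^{n}L_{k}(x)=L_{n}^{(1)}(x)$ in item (4) is the same identity the paper employs in (\ref{E2.28}) when proving Proposition (\ref{P2.2}). One remark: summing the eigenprojection kernels over the eigenvalues $\nu k\leq\lambda$, as you correctly do, yields $\frac{\nu}{\pi}e^{\nu z\overline{w}}L^{(1)}_{[\lambda/\nu]}(\nu\mid z-w\mid^{2})$, which matches (\ref{E2.22}) only up to what appear to be misprints there (the prefactor $\frac{\pi}{\nu}$ and the index $[\lambda]$ instead of $[\lambda/\nu]$); this is an issue with the displayed formula, not a gap in your argument.
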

\begin{remark}
In quantum dynamics, the operator $\tilde{\Delta}_{\nu}$ represents the Hamiltonian observable energy of a particle moving
 in the complex plane under the interaction of a uniform constant magnetic field with the magnitude $\nu$.
\end{remark}
The measure of this observable at a physical state $\psi\in L^{2,\nu}(\mathbb{C})$ $(\parallel \psi\parallel_{\nu}=1)$ is given
 by means of the spectral family $\{E_{\lambda}^{\nu}\}_{\lambda\in \mathbb{R}}$. Precisely, for two real values $\alpha<\beta$, the quantity
\begin{align}\label{E2.23}
\sigma^{\nu}_{(\alpha,\beta)}=<(E_{\beta}^{\nu}-E_{\alpha}^{\nu})\psi,\psi>_{\nu}
\end{align}
is the probability for which the measure of the observable energy $\tilde{\Delta}_{\nu}$ at the state $\psi$ to be in the interval $]\alpha,\beta[.$\\
Recall that for a densely defined self-adjoint operator $T$ acting
on an abstract complex Hilbert space $H$, the spectral decomposition
\cite[p.89]{Kon} can be written as
\begin{align}\label{E2.24}
T=\int_{-\infty}^{+\infty}\lambda dE_{\lambda},
\end{align}
where $\{E_{\lambda}\}_{\lambda\in \mathbb{R}}$ is the associated unique spectral family. The spectral density
\begin{align}\label{E2.25}
e_{\lambda}=\frac{dE_{\lambda}}{d\lambda}
\end{align}
is understood as an operator-valued distribution\cite{Est}. This is
an element of the space  $D^{'}(\mathbb{R},L(D(T),H))$,
where $L(D(T),H)$ is the space of bounded operator from the domain $D(T)$ of self-adjointness of $T$ to the whole Hilbert space $H$.\\
Furthermore, in the case where the Hilbert space $H$ is of $L^{2}-$type, that is, $H=L^{2}(M,\hspace{0.2cm}d\mu)$ with $M$ is a smooth
manifold and the operator $T$ is a self-adjoint  elliptic partial differential operator, the spectral density
$e_{\lambda}=\frac{dE_{\lambda}}{d\lambda}$ has an associated Schwartz kernel $e(\lambda,z,w)$ being an element
of $D^{'}(\mathbb{R},D^{'}(M\times M))$ \cite{Est}.\\
Note that it is not hard to see that the operator $\tilde{\Delta}_{\nu}$ defined by $(\ref{E2.12})$ and $(\ref{E2.13})$
 is a self-adjoint elliptic operator \cite{Ask}. Then, based on the formula $(\ref{E2.22})$ for the spectral projectors,
  we can prove the following result.
\begin{proposition}\label{P2.2}
The Schwartz kernel of the spectral density associated with the self-adjoint elliptic operator  $\tilde{\Delta}_{\nu}$ is given by
\begin{align}\label{E2.26}
e^{\nu}(\lambda,z,w)=\frac{\pi}{\nu}\sum_{j=0}^{+\infty}e^{\nu z\overline{w}}L_{j}(\nu\mid z-w\mid^{2})\delta(\lambda-\nu j),
\end{align}
\end{proposition}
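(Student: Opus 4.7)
The plan is to derive the Schwartz kernel of $e_{\lambda}=dE^{\nu}_{\lambda}/d\lambda$ directly from the description of the spectral family $\{E^{\nu}_{\lambda}\}_{\lambda\in\mathbb{R}}$ of Proposition \ref{P2.1}, exploiting the fact that the spectrum of $\tilde{\Delta}_{\nu}$ is the purely discrete set $\{\nu j,\; j\in\mathbb{Z}_+\}$, so that $E^{\nu}_{\lambda}$ is a step operator-valued function of $\lambda$, and hence its distributional derivative is concentrated at the spectral points.

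First, I would recast $E^{\nu}_{\lambda}$ as a sum of the orthogonal projectors onto the eigenspaces $\mathcal{E}^{2,\nu}_{j}(\mathbb{C})$. By the spectral theorem for a self-adjoint operator with pure point spectrum, one has
\begin{equation*}
E^{\nu}_{\lambda}=\sum_{\nu j\leq\lambda} P_{j}^{\nu}=\sum_{j=0}^{+\infty} \mathbf{1}_{[\nu j,+\infty)}(\lambda)\, P_{j}^{\nu},
\end{equation*}
where $P_{j}^{\nu}$ is the orthogonal projection onto the eigenspace $\mathcal{E}^{2,\nu}_{j}(\mathbb{C})$. The Schwartz kernel of $P_{j}^{\nu}$ is precisely the reproducing kernel $\mathcal{K}^{\nu}_{j}(z,w)$ given in equation $(\ref{E2.19})$. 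To justify this rewriting, I would pair $E^{\nu}_{\lambda}$ against a test function, and use the summation identity $\sum_{k=0}^{n}L_{k}(x)=L_{n}^{(1)}(x)$ for the classical Laguerre polynomials to match the closed form appearing in equation $(\ref{E2.22})$: each time $\lambda$ crosses a value $\nu j$, the integer $[\lambda]$ increments and the kernel jumps exactly by $\mathcal{K}^{\nu}_{j}(z,w)$.

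Next, differentiating the above Heaviside-type decomposition in the sense of distributions on $\mathbb{R}$ term by term produces
\begin{equation*}
\frac{dE^{\nu}_{\lambda}}{d\lambda}=\sum_{j=0}^{+\infty} P_{j}^{\nu}\,\delta(\lambda-\nu j)
\end{equation*}
as an element of $D'(\mathbb{R},L(D(\tilde{\Delta}_{\nu}),L^{2,\nu}(\mathbb{C})))$. Passing to Schwartz kernels and substituting the explicit form $(\ref{E2.19})$ for the kernel of each $P_{j}^{\nu}$ yields precisely the claimed expression $(\ref{E2.26})$. To verify the formula rigorously I would test it against an arbitrary $\chi(\lambda)\varphi(w)\overline{\psi(z)}$ with $\chi\in D(\mathbb{R})$ and $\varphi,\psi$ in a suitable dense class (for instance finite linear combinations of the monomials $z^{k}$), reducing the identity to $\chi(\nu j)\langle P_{j}^{\nu}\varphi,\psi\rangle_{\nu}$ on both sides via the spectral theorem $\tilde{\Delta}_{\nu}=\int \lambda\, dE^{\nu}_{\lambda}$.

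The main technical obstacle is the justification that the termwise distributional differentiation commutes with the infinite sum: one has to show the series converges in $D'(\mathbb{R},D'(\mathbb{C}\times\mathbb{C}))$. This is where the locally finite nature of the support of the deltas helps, since on any test function $\chi\in D(\mathbb{R})$ with compact support, only finitely many values $\nu j$ lie inside $\operatorname{supp}(\chi)$, so the pairing with $\chi$ collapses the series to a finite sum of reproducing kernels; the remaining spatial distribution $\mathcal{K}^{\nu}_{j}(z,w)$ is smooth, hence the convergence issue reduces to the standard representation of a Heaviside step by a delta derivative. Apart from that point, the argument is essentially a direct bookkeeping translation of Proposition \ref{P2.1}, so no new computation beyond recognizing the Laguerre summation identity is needed.
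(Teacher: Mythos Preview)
Your proposal is correct and follows essentially the same route as the paper: decompose $E^{\nu}_{\lambda}$ as $\sum_{\ell\leq[\lambda/\nu]}P^{\nu}_{\ell}$ via the Laguerre summation identity $\sum_{j=0}^{k}L_{j}(x)=L_{k}^{(1)}(x)$, then differentiate the resulting step function in the distributional sense to obtain the sum of $\delta(\lambda-\nu j)$ weighted by the reproducing kernels $\mathcal{K}^{\nu}_{j}(z,w)$. Your additional remarks on testing against $\chi\in D(\mathbb{R})$ and the locally finite support of the deltas make the convergence justification a bit more explicit than the paper's, but the underlying argument is the same.
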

where $\delta$ is the Dirac delta distribution.
\begin{proof}
By using the point $(3)$ of the proposition $(\ref{P2.1})$, it is not hard to see that the orthogonal projector on the
eigenspace $\mathcal{E}_{\ell}^{2,\nu}(\mathbb{C})$ defined in $(\ref{E2.18})$ is the integral operator defined by
\begin{align}\label{E2.27}
\nonumber P_{\ell}^{\nu}:\hspace{0.25cm}L^{2}(\mathbb{C},&\hspace{0.2cm}e^{-\nu\mid z\mid^{2}}d\lambda(z))\longrightarrow \mathcal{E}_{\ell}^{2,\nu}(\mathbb{C})\\
        &\varphi\longmapsto P_{\ell}^{\nu}[\varphi](z)=\int_{\mathbb{C}}\varphi(w)\mathcal{K}_{\ell}^{\nu}(z,w)e^{-\nu\mid w\mid^{2}}d\lambda(w),
\end{align}
where $\mathcal{K}_{\ell}^{\nu}(z,w)$ is the reproducing kernel of $\mathcal{E}_{\ell}^{2,\nu}(\mathbb{C})$.\\
Applying the formula \cite[p.240]{Mag}
\begin{align}\label{E2.28}
\sum_{j=0}^{k}L_{j}^{(\alpha)}(x)=L_{k}^{(\alpha+1)}(x),
\end{align}
for $\alpha=0$, $k=[\lambda]$, $\lambda\geq0$ and $x=\nu\mid z-w\mid^{2}$, we can see that the spectral projector
 $E_{\lambda}^{\nu}$ defined in $(\ref{E2.21})$ can be rewritten as
\begin{align}\label{E2.29}
E_{\lambda}^{\nu}=
\left\{
  \begin{array}{ll}
    \displaystyle{\sum_{\ell=0}^{[\lambda]}}P_{\ell}^{\nu}, &\lambda\geq0, \hbox{} \\\\
    0, &\lambda<0. \hbox{}
  \end{array}
\right.
\end{align}
Notice that the values $\lambda_{\ell}=\nu\ell$ ($\ell\in \mathbb{Z}_{+}$) are the only discontinuity points
 of the vector valued function $\lambda\longmapsto E_{\lambda}^{\nu}$ for which the corresponding jumps are given by
 \begin{align}\label{E2.30}
E_{\lambda_{\ell}+0}^{\nu}-E_{\lambda_{\ell}-0}^{\nu}=P_{\ell}^{\nu}.
\end{align}
In addition, $\lambda\longmapsto E_{\lambda}^{\nu}$ is constant on the interval $]\lambda_{\ell},\lambda_{\ell+1}[$,
 for each $\ell\in \mathbb{Z}_{+}$.\\
Now, we are in position to give the following derivation formula of $\lambda\longmapsto E_{\lambda}^{\nu}$ in the distribution sense.
 Namely, we have
\begin{align}\label{E2.31}
\frac{d E_{\lambda}^{\nu}}{d\lambda}=\sum_{\ell=0}^{+\infty}(E_{\lambda_{\ell}+0}^{\nu}-E_{\lambda_{\ell}-0}^{\nu})
\delta(\lambda-\nu\ell).
\end{align}
Combining $(\ref{E2.27})$ and $(\ref{E2.30})$, we show that the Schwartz kernel of the distribution $\frac{d E_{\lambda}^{\nu}}{d\lambda}$ is given by
\begin{align}\label{E2.32}
e^{\nu}(\lambda,z,w)=\frac{\pi}{\nu}\sum_{j=0}^{+\infty}e^{\nu z\overline{w}}L_{j}(\nu\mid z-w\mid^{2})\delta(\lambda-\nu j).
\end{align}
\end{proof}
Note that the Schwartz kernel of the spectral density associated with a self adjoint elliptic operator $T$ acting on a
complex Hilbert space $L^{2}(M,d\mu)$ can be used to build a functional calculus for the operator $T$ \cite{Est}. Precisely, for
 a suitable Borelian function $f\in\mathcal{B}(\mathbb{R})$, we have the following formula
\begin{align}\label{E2.33}
f(T)[\varphi](z)=\int_{M}\Omega_{f}(w,z)\varphi(w)d\mu(w),
\end{align}
where the kernel $\Omega_{f}(w,z)$ is defined by
\begin{align}\label{E2.34}
\Omega_{f}(w,z)=\int_{\sigma(T)}e(\lambda,z,w)f(\lambda)d\lambda,
\end{align}
and $\sigma(T)$ is the spectrum of the operator $T$. The equations $(\ref{E2.33})$ and $(\ref{E2.34})$ are
 understood in the distribution sense with Gel'fand-Shilov notation \cite{Gel}.\\
As application of the proposition $(\ref{P2.2})$, we have the following corollary.
\begin{corollary}\label{C2.1}
\begin{enumerate}
  \item The distribution integral kernel associated with the heat propagator $e^{-t\tilde{\Delta}_{\nu}}$ (heat semigroup) is given by
\begin{align}\label{E2.35}
K_{\nu}(t,z,w)=\frac{\pi}{\nu}e^{\nu z\overline{w}}\sum_{k=0}^{+\infty}e^{-\nu tk}L_{k}(\nu\mid z-w\mid^{2}),
\hspace{0.2cm}t>0\hspace{0.2cm}\mbox{and}\hspace{0.2cm}z,\hspace{0.1cm}w\in \mathbb{C}.
\end{align}
\item The distribution integral kernel associated with the Schr\"{o}dinger propagator $e^{-it\tilde{\Delta}_{\nu}}$ (dynamical group)
 is given by
\begin{align}\label{E2.36}
G_{\nu}(t,z,w)=\frac{\pi}{\nu}e^{\nu z\overline{w}}\sum_{k=0}^{+\infty}e^{-i\nu tk}L_{k}(\nu\mid z-w\mid^{2}),
\hspace{0.2cm}t\in \mathbb{R}\hspace{0.2cm}\mbox{and}\hspace{0.2cm}z, \hspace{0.1cm}w\in \mathbb{C}.
\end{align}
\end{enumerate}
\end{corollary}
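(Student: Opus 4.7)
The plan is to apply the distributional functional calculus formulas (\ref{E2.33})--(\ref{E2.34}) recalled just above the corollary, feeding in the Schwartz kernel of the spectral density of $\tilde{\Delta}_{\nu}$ obtained in Proposition \ref{P2.2}. For the two propagators, I would take respectively the Borelian functions $f_{t}(\lambda)=e^{-t\lambda}$ (heat semigroup, $t>0$) and $g_{t}(\lambda)=e^{-it\lambda}$ (dynamical group, $t\in\mathbb{R}$), both of which are bounded on the spectrum $\sigma(\tilde{\Delta}_{\nu})=\{\nu\ell,\ \ell\in\mathbb{Z}_{+}\}$, so that $f_{t}(\tilde{\Delta}_{\nu})$ and $g_{t}(\tilde{\Delta}_{\nu})$ are well-defined bounded operators on $L^{2,\nu}(\mathbb{C})$ by the spectral theorem, and fit into the framework of (\ref{E2.33})--(\ref{E2.34}).

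For point (1), I would substitute the Schwartz kernel (\ref{E2.26}) into (\ref{E2.34}) with $f=f_{t}$:
\begin{equation*}
\Omega_{f_{t}}(w,z)=\int_{\sigma(\tilde{\Delta}_{\nu})}e^{\nu}(\lambda,z,w)e^{-t\lambda}d\lambda
=\int_{\mathbb{R}}\frac{\pi}{\nu}\sum_{j=0}^{+\infty}e^{\nu z\overline{w}}L_{j}(\nu|z-w|^{2})\delta(\lambda-\nu j)e^{-t\lambda}d\lambda,
\end{equation*}
and then evaluate term-by-term using the sifting property $\int\delta(\lambda-\nu j)e^{-t\lambda}d\lambda=e^{-\nu tj}$ of the Dirac distribution. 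This immediately yields the announced formula (\ref{E2.35}) for $K_{\nu}(t,z,w)$. The computation for point (2) is identical, with $f_{t}$ replaced by $g_{t}$, producing the factor $e^{-i\nu tj}$ in place of $e^{-\nu tj}$ and delivering (\ref{E2.36}).

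The only non-routine point is justifying the interchange of the infinite sum with the integral against the delta distribution. I would handle this by working in the distribution space $\mathcal{D}'(\mathbb{R},\mathcal{D}'(\mathbb{C}\times\mathbb{C}))$ in which the spectral density of Proposition \ref{P2.2} lives: pairing $e^{\nu}(\lambda,z,w)$ with a test function in $\lambda$ is, by definition, the sum of the pointwise evaluations at $\lambda=\nu j$, and the fact that $f_{t}$ and $g_{t}$ are smooth and bounded (together with their decay or modulus one behavior on the discrete spectrum) makes the pairing legitimate and the resulting series convergent in the sense of distributions in $(z,w)$. For the heat kernel case the resulting series even converges absolutely for $t>0$ thanks to the exponential decay $e^{-\nu tj}$ combined with standard polynomial bounds on the Laguerre polynomials on compacta, which gives an extra sanity check on the answer. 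I expect no substantial obstacle beyond this bookkeeping; the whole corollary is essentially a direct readout of Proposition \ref{P2.2} through the Gel'fand--Shilov functional calculus (\ref{E2.33})--(\ref{E2.34}).
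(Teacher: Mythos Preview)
Your approach is essentially the paper's: apply (\ref{E2.33})--(\ref{E2.34}) to the spectral density (\ref{E2.26}) with $f_{t}(\lambda)=e^{-t\lambda}$ and $f_{t}(\lambda)=e^{-it\lambda}$, and read off the series via the Dirac deltas. Two small points where the paper is more specific than your sketch: for (1) it goes further and sums the series in closed form using the Laguerre generating function $\sum_{k}L_{k}(x)s^{k}=(1-s)^{-1}\exp(-xs/(1-s))$; for (2) the distributional convergence is established in $\mathcal{D}'(\mathbb{R}_{t})$ (not in $(z,w)$ as you wrote) by invoking the asymptotic $|L_{k}(x)|\lesssim x^{-1/4}k^{-1/4}$ to get a polynomial-in-$k$ bound on the coefficients of $\sum_{k}e^{-i\nu tk}L_{k}(\nu|z-w|^{2})$, which is the standard criterion for a Fourier-type series to define a periodic distribution in $t$.
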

\begin{proof}
According to the formulas $(\ref{E2.33})$ and $(\ref{E2.34})$, the distribution integral kernel associated with
the Borelian function $f_{t}(\lambda)=e^{-\lambda t}$ is given by
\begin{align}\label{E2.37}
\nonumber K_{\nu}(t,z,w)&=\Omega_{e^{-\lambda t}}(w,z)\\
&=\frac{\pi}{\nu}e^{\nu z\overline{w}}\sum_{k=0}^{+\infty}e^{-\nu tk}L_{k}(\nu\mid z-w\mid^{2}),\hspace{0.2cm}t>0.
\end{align}
The above series has also for $t>0$ a pointwise sense. Indeed, it is a convergent series and its sum
 is given by the classical generating  function for the Laguerre polynomials \cite[p.114]{Bea}
\begin{align}\label{E2.38}
\sum_{k=0}^{+\infty}L_{k}^{(\beta)}(x)s^{k}=(1-s)^{-\beta-1}\exp(\frac{-xs}{1-s}),
\end{align}
for $\beta=0$, $x=\nu\mid z-w\mid^{2}$ and $s=e^{-\nu t}$. Hence,
the kernel $K_{\nu}(t,z,w)$ can be also written as
\begin{align}\label{E2.39}
K_{\nu}(t,z,w)=\frac{\pi e^{\nu z\overline{w}}}{\nu(1-e^{-\nu t})}\exp(\frac{-\nu \mid z-w \mid^{2}e^{-\nu t}}{1-e^{-\nu t}}).
\end{align}
Then, $(1)$ is proved.\\
Also, by using the formulas $(\ref{E2.33})$ and $(\ref{E2.34})$, we see that the distribution integral kernel associated with the Borelian function $f_{t}(\lambda)=e^{-i\lambda t}$ is given by
\begin{align}\label{E2.40}
\nonumber G_{\nu}(t,z,w)&=\Omega_{e^{-i\lambda t}}(w,z)\\
&=\frac{\pi}{\nu}e^{\nu z\overline{w}}\sum_{k=0}^{+\infty}e^{-i\nu tk}L_{k}(\nu\mid z-w\mid^{2}).
\end{align}
The above series is well defined as a distribution in $\mathcal{D}^{'}(\mathbb{R}_{t})$. Indeed, by using the asymptotic formula for
Laguerre polynomials \cite[p.248]{Mag}
\begin{align}\label{E2.41}
 L_{k}^{(\beta)}(x)=x^{\frac{-\beta}{2}-\frac{1}{4}}O(k^{\frac{\beta}{2}-\frac{1}{4}}),\hspace{0.15cm}\mbox{as}\hspace{0.15cm} k\longrightarrow+\infty,\hspace{0.15cm} \mbox{for} \hspace{0.15cm}\frac{c}{k}\leq x\leq w,
 \end{align}
where $c$ and $w$ are fixed positive constants,  we obtain, for
$\beta=0$ and $x=\nu\mid z-w\mid^{2}>0$,  the following estimate
\begin{align}\label{E2.42}
\mid L_{k}(\nu\mid z-w\mid^{2})\mid\leq M_{\nu}\mid z-w\mid^{-\frac{1}{2}}k^{-\frac{1}{4}},
 \end{align}
where $M_{\nu}$ is a positive constant and $k$ is enough large. This above inequality leads us to obtain
for the general term of the involved series in $(\ref{E2.40})$, the following behavior
\begin{align}\label{E2.43}
\mid e^{-\nu t k}L_{k}(\nu\mid z-w\mid^{2})\mid\leq A \mid k\mid^{m}+B,\hspace{0.2cm}z\neq w\hspace{0.2cm}\mbox{as}\hspace{0.2cm}k\longrightarrow+\infty,
 \end{align}
for $A=M_{\nu}\mid z-w\mid^{-\frac{1}{2}}$, $B=0$ and $m=0$. In the case $z=w$, the inequality $(\ref{E2.43})$ stay true with $A=0$, $B=1$ and $m$ is any fixed integer (we have used the fact $L_{k}(0)=1$ (\cite[p.240]{Mag})). Then, according to \cite[p.97]{Vla}, the kernel $G_{\nu}(t,z,w)$ is well defined as an element of $\mathcal{D}^{'}(\mathbb{R}_{t})$.  Hence, the point $(2)$ is proved. This closes the proof of the corollary.
\end{proof}
The ellipticity of the operator $\tilde{\Delta}_{\nu}$ assures that any eigenfunction of the operator $\tilde{\Delta}_{\nu}$
 is of class $C^{\infty}$ (see \cite{Hor} for the general theory). Furthermore, the eigenstate
  $\psi\in \mathcal{E}_{0}^{2,\nu}(\mathbb{C})$ corresponding to the lower energy $0$ involves  a
  complex regularity. Precisely, we have the following proposition.
\begin{proposition}\label{P2.3}
The harmonic space
\begin{align}\label{E2.44}
\mathcal{E}_{0}^{2,\nu}(\mathbb{C})=\{F\in D(\tilde{\Delta}_{\nu}),\hspace{0.2cm}\tilde{\Delta}_{\nu}F=0\},
\end{align}
of the self-adjoint operator $\tilde{\Delta}_{\nu}$ defined by $(\ref{E2.12})$ and $(\ref{E2.13})$ coincides with the Bargmann-Fock space on $\mathbb{C}$
\begin{align}\label{E2.45}
\mathcal{A}^{2,\nu}(\mathbb{C})=\{f:\mathbb{C}\longrightarrow\mathbb{C}, \hspace{0.15cm} \mbox{holomorphic and}
 \hspace{0.15cm}\int_{\mathbb{C}}\mid f(z)\mid^{2}e^{-\nu\mid z\mid^{2}}d\lambda(z)<+\infty\}.
\end{align}
\end{proposition}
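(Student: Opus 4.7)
My plan is to establish the equality as a pair of inclusions, using the factorization of $\tilde{\Delta}_{\nu}$ as a product of a first-order differential operator and its $L^{2,\nu}$-adjoint. The forward inclusion $\mathcal{A}^{2,\nu}(\mathbb{C}) \subseteq \mathcal{E}_{0}^{2,\nu}(\mathbb{C})$ is immediate: if $F$ is entire and lies in $L^{2,\nu}(\mathbb{C})$, then $\partial_{\bar z} F = 0$, which gives $\tilde{\Delta}_{\nu} F = 0 \in L^{2,\nu}(\mathbb{C})$. Hence $F \in D(\tilde{\Delta}_{\nu})$ and $F$ belongs to the null space.

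For the reverse inclusion, the key observation is that one can write
\begin{equation*}
\tilde{\Delta}_{\nu} = \bigl(-\partial_{z} + \nu\bar{z}\bigr)\,\partial_{\bar z}.
\end{equation*}
A direct integration by parts against the Gaussian weight $e^{-\nu|z|^{2}}$ shows that the formal $L^{2,\nu}$-adjoint of $T := \partial_{\bar z}$ is exactly $T^{*} = -\partial_{z} + \nu\bar{z}$. Thus $\tilde{\Delta}_{\nu} = T^{*}T$. Given $F \in D(\tilde{\Delta}_{\nu})$ with $\tilde{\Delta}_{\nu} F = 0$, elliptic regularity (the operator $\tilde{\Delta}_{\nu}$ is elliptic, as already noted in the text) forces $F \in C^{\infty}(\mathbb{C})$. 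Pairing against $F$ in $L^{2,\nu}(\mathbb{C})$ yields
\begin{equation*}
0 = \langle \tilde{\Delta}_{\nu} F, F\rangle_{\nu} = \langle T^{*}T F, F\rangle_{\nu} = \|T F\|_{\nu}^{2},
\end{equation*}
which gives $\partial_{\bar z} F = 0$. Combined with $F \in L^{2,\nu}(\mathbb{C})$, this places $F$ in $\mathcal{A}^{2,\nu}(\mathbb{C})$.

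The main obstacle is the rigorous justification of the pairing $\langle T^{*}T F, F\rangle_{\nu} = \|T F\|_{\nu}^{2}$ under the sole assumption that $F$ lies in the maximal domain $D(\tilde{\Delta}_{\nu})$; a priori, one has no $L^{2,\nu}$ control on $\partial_{\bar z} F$. I would handle this by a truncation argument: multiply by a smooth radial cutoff $\chi_{R}$ supported in $|z|<R$, perform the integration by parts on the support, and let $R \to +\infty$. The boundary contributions are controlled by the Gaussian decay of $e^{-\nu|z|^{2}}$ together with the membership $F, \tilde{\Delta}_{\nu} F \in L^{2,\nu}(\mathbb{C})$, while interior commutators involving $\nabla \chi_{R}$ are shown to vanish in the limit by dominated convergence.

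As an independent cross-check (and an alternative route that avoids the analytic subtlety entirely), I would note that Proposition~\ref{P2.1}(3) identifies $\mathcal{E}_{0}^{2,\nu}(\mathbb{C})$ as the R.K.H.S. with reproducing kernel $\mathcal{K}_{0}^{\nu}(z,w) = \frac{\nu}{\pi} e^{\nu z \overline{w}}$, since $L_{0}\equiv 1$. As this is precisely the classical reproducing kernel of the Bargmann--Fock space $\mathcal{A}^{2,\nu}(\mathbb{C})$, uniqueness of a R.K.H.S. by its reproducing kernel forces $\mathcal{E}_{0}^{2,\nu}(\mathbb{C}) = \mathcal{A}^{2,\nu}(\mathbb{C})$.
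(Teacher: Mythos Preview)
Your proposal is correct, and in fact your ``alternative route'' via reproducing kernels is precisely the proof the paper gives: it invokes Proposition~\ref{P2.1}(3) to identify the reproducing kernel of $\mathcal{E}_{0}^{2,\nu}(\mathbb{C})$ as $\frac{\nu}{\pi}e^{\nu z\overline{w}}L_{0}(\nu|z-w|^{2})=\frac{\nu}{\pi}e^{\nu z\overline{w}}$, matches it with the classical Bargmann--Fock kernel, and concludes by uniqueness of a R.K.H.S.\ from its kernel. So what you present as a cross-check is the paper's entire argument.

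Your primary approach, by contrast, is genuinely different: it is a direct PDE argument based on the factorization $\tilde{\Delta}_{\nu}=T^{*}T$ with $T=\partial_{\bar z}$, yielding $\|TF\|_{\nu}^{2}=0$ and hence holomorphy. This is more self-contained in that it does not rely on the R.K.H.S.\ machinery or on Proposition~\ref{P2.1}, and it exposes the mechanism (positivity of $T^{*}T$) behind why the null space consists of holomorphic functions. The price is exactly the technical point you flag: justifying $\langle T^{*}TF,F\rangle_{\nu}=\|TF\|_{\nu}^{2}$ on the maximal domain requires the cutoff argument, and the commutator terms carrying $\partial_{z}F$ or $\partial_{\bar z}F$ against $\nabla\chi_{R}$ are not obviously controlled by $F,\tilde{\Delta}_{\nu}F\in L^{2,\nu}$ alone. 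This can be closed (e.g.\ via local elliptic estimates or by first showing $TF\in L^{2,\nu}$ through the spectral calculus of the self-adjoint $\tilde{\Delta}_{\nu}$), but it is real work that the paper's reproducing-kernel shortcut avoids entirely.
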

\begin{proof}
First, recall that the Bargmann-Fock space $\mathcal{A}^{2,\nu}(\mathbb{C})$ is a R.K.H.S with the reproducing kernel \cite{Bar}
\begin{align}\label{E2.46}
\mathcal{K}_{\nu}(z,w)=\frac{\nu}{\pi}e^{\nu z\overline{w}}.
\end{align}
From the point $(3)$ of the proposition $(\ref{P2.1})$, the reproducing kernel of the eigenspace $\mathcal{E}_{0}^{2,\nu}(\mathbb{C})$ is given by
\begin{align}\label{E2.47}
\mathcal{K}^{\nu}_{0}(z,w)=\frac{\nu}{\pi}e^{\nu z\overline{w}}L_{0}(\nu\mid z-w\mid^{2}).
\end{align}
Using the fact that $L_{0}(x)\equiv1$, we see that the eigenstate $\mathcal{E}_{0}^{2,\nu}(\mathbb{C})$ admits
 also the same reproducing kernel as the Bargmann-Fock space $\mathcal{A}^{2,\nu}(\mathbb{C})$. Thus, using the proposition $(3.3)$
  in \cite{Pau}, we get the desired result.
\end{proof}
Here, the partial differential operator
$-\frac{\partial^{2}}{\partial z\partial\overline{z}}+\nu\overline{z}\frac{\partial}{\partial\overline{z}}$
will plays an important role in the characterization of the Bargmann-Dirichlet space $\mathcal{A}^{2,\nu}_{1}(\mathbb{C})$
as harmonic space of a second order elliptic partial differential operator. Precisely, we have the following proposition
\begin{proposition}\label{P2.4}
Let $\nu>0$ and let $\Delta_{\nu}$ be the partial differential operator defined by
\begin{align}\label{E2.48}
\Delta_{\nu}=-\frac{\partial^{2}}{\partial z\partial\overline{z}}+\nu\overline{z}\frac{\partial}{\partial\overline{z}},
\end{align}
acting on the Hilbert space  $L^{2,\nu}(\mathbb{C}):=L^{2}(\mathbb{C},\hspace{0.2cm}e^{-\nu\mid z\mid^{2}}d\lambda(z))$ with the dense domain
\begin{align}\label{E2.49}
D(\Delta_{\nu})=\{F\in L^{2,\nu}(\mathbb{C}),\hspace{0.25cm}\Delta_{\nu}F\in L^{2,\nu}(\mathbb{C})\hspace{0.25cm}\mbox{and}\hspace{0.25cm}\frac{\partial F}{\partial z}\in L^{2,\nu}(\mathbb{C})\}.
\end{align}
Then, the Bargmann-Dirichlet space $\mathcal{A}^{2,\nu}_{1}(\mathbb{C})$ is the null space of $\Delta_{\nu}$, that is,
\begin{align}\label{E2.50}
\mathcal{A}^{2,\nu}_{1}(\mathbb{C})=\{F\in D(\Delta_{\nu}),\hspace{0.25cm}\Delta_{\nu}F=0\}.
\end{align}
\end{proposition}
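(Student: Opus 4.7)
My plan is to prove the two inclusions separately, leveraging Proposition~\ref{P2.3}, which already identifies the null space of the maximal-domain magnetic Laplacian $\tilde{\Delta}_{\nu}$ with the Bargmann--Fock space $\mathcal{A}^{2,\nu}(\mathbb{C})$. The key structural remark is that $\Delta_{\nu}$ and $\tilde{\Delta}_{\nu}$ are literally the same differential operator; only their domains differ, with $D(\Delta_{\nu})=D(\tilde{\Delta}_{\nu})\cap\{F:\partial F/\partial z\in L^{2,\nu}(\mathbb{C})\}$. Once this is recognized, the $\supseteq$ direction becomes a short appeal to Proposition~\ref{P2.3}.

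For the inclusion $\mathcal{A}^{2,\nu}_{1}(\mathbb{C})\subseteq\{F\in D(\Delta_{\nu}):\Delta_{\nu}F=0\}$, I would take an arbitrary holomorphic $F(z)=\sum_{j\geq 0}a_{j}z^{j}$ in the Bargmann--Dirichlet space and use the membership criterion $(\ref{E2.11})$ together with the classical norms $\|z^{j}\|^{2}_{\nu}=\pi j!/\nu^{j+1}$ to verify that $F\in L^{2,\nu}(\mathbb{C})$. The convergence $\sum_{j\geq 1}j\cdot j!\,|a_{j}|^{2}/\nu^{j}<+\infty$ trivially dominates the weaker series $\sum_{j\geq 1}j!\,|a_{j}|^{2}/\nu^{j}<+\infty$, so both summands of $\|F\|^{2}_{\nu}=\frac{\pi}{\nu}|a_{0}|^{2}+\sum_{j\geq 1}\pi j!|a_{j}|^{2}/\nu^{j+1}$ are finite. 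Holomorphy forces $\partial F/\partial\bar{z}=0$, hence $\Delta_{\nu}F=0\in L^{2,\nu}(\mathbb{C})$, while $\partial F/\partial z=F'$ lies in $L^{2,\nu}(\mathbb{C})$ by the very definition of $\mathcal{A}^{2,\nu}_{1}(\mathbb{C})$. All three conditions defining $D(\Delta_{\nu})$ are thus met, and $\Delta_{\nu}F=0$.

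For the reverse direction, suppose $F\in D(\Delta_{\nu})$ satisfies $\Delta_{\nu}F=0$. Since $D(\Delta_{\nu})\subseteq D(\tilde{\Delta}_{\nu})$ and the two operators act identically on their common domain, $F$ lies in the null space of $\tilde{\Delta}_{\nu}$. Proposition~\ref{P2.3} then yields $F\in\mathcal{A}^{2,\nu}(\mathbb{C})$, so $F$ is holomorphic; consequently $\partial F/\partial z=F'$, and the extra membership condition $\partial F/\partial z\in L^{2,\nu}(\mathbb{C})$ becomes exactly $\int_{\mathbb{C}}|F'(z)|^{2}e^{-\nu|z|^{2}}d\lambda(z)<+\infty$, which is the defining condition of $\mathcal{A}^{2,\nu}_{1}(\mathbb{C})$.

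The only delicate point is the verification $\mathcal{A}^{2,\nu}_{1}(\mathbb{C})\subseteq L^{2,\nu}(\mathbb{C})$ in the forward inclusion, which requires the short series computation via $(\ref{E2.11})$ indicated above; this presents no real obstacle. Conceptually, the proof reduces to the observation that adjoining $\partial F/\partial z\in L^{2,\nu}(\mathbb{C})$ to the maximal domain is precisely the right restriction that carves the Bargmann--Dirichlet subspace out of the Bargmann--Fock kernel.
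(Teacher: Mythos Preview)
Your proof is correct and follows essentially the same route as the paper's: both directions hinge on Proposition~\ref{P2.3} to extract holomorphy from $\Delta_{\nu}F=0$, and the forward inclusion uses the membership test $(\ref{E2.11})$ together with the elementary domination $j!\leq j\cdot j!$ to secure $F\in L^{2,\nu}(\mathbb{C})$. The only cosmetic difference is that the paper spells out the computation $\|F\|_{\nu}^{2}=\frac{\pi}{\nu}\sum_{j\geq 0}\frac{j!}{\nu^{j}}|a_{j}|^{2}$ from scratch via polar coordinates and Parseval, whereas you invoke the monomial norms $\|z^{j}\|_{\nu}^{2}=\pi j!/\nu^{j+1}$ directly.
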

\begin{proof} Let $F\in D(\Delta_{\nu})$ such that $\Delta_{\nu}F=0$. Then, we have
\begin{center}
$\textbf{(i)}$  $F\in L^{2,\nu}(\mathbb{C}),$ $\textbf{(ii)}$ $\Delta_{\nu}F=0$ and
$\textbf{(iii)}$  $\frac{\partial F}{\partial z}\in L^{2,\nu}(\mathbb{C})$.
\end{center}
By the proposition $(\ref{P2.3})$, we see that both conditions $\textbf{(i)}$ and $\textbf{(ii)}$ imply that $F$ is holomorphic.
Then, by combining this property with the fact $\frac{\partial F}{\partial z}\in L^{2,\nu}(\mathbb{C})$, we get that $F$ belongs to
the Bargmann-Dirichlet space $\mathcal{A}^{2,\nu}_{1}(\mathbb{C})$. Inversely, let $F(z):=\displaystyle{\sum_{j=0}^{+\infty}} a_{j}z^{j}$ be
 an element of the space $\mathcal{A}_{1}^{2,\nu}(\mathbb{C})$ defined in $(\ref{E2.3})$. By using the polar coordinates,  we obtain
\begin{align}\label{E2.51}
\int_{\mathbb{C}}\mid F(z)\mid^{2}e^{-\nu\mid z\mid^{2}}d\lambda(z)=
2\pi\int_{0}^{+\infty}[\int_{0}^{2\pi}\mid\sum_{j=0}^{+\infty} a_{j}r^{j}e^{ij\theta}\mid^{2}\frac{d\theta}{2\pi}]
e^{-\nu r^{2}}rdr.
\end{align}
By Parseval's formula, for $r\in[0,1]$, we get
\begin{align}\label{E2.52}
\int_{0}^{2\pi}\mid\sum_{j=0}^{+\infty} a_{j}r^{j}e^{ij\theta}\mid^{2}\frac{d\theta}{2\pi}=\sum_{j=0}^{+\infty}r^{2j}\mid a_{j}\mid^{2}.
\end{align}
Inserting the last equality in the right hand side of the equation $(\ref{E2.51})$ and using the monotone convergence theorem,
the equation $(\ref{E2.51})$ becomes
\begin{align}\label{E2.53}
\int_{\mathbb{C}}\mid F(z)\mid^{2}e^{-\nu\mid z\mid^{2}}d\lambda(z)=
2\pi\sum_{j=0}^{+\infty} \mid a_{j}\mid^{2}\int_{0}^{+\infty} r^{2j+1}e^{-\nu r^{2}}dr.
\end{align}
Using the change of variable $s=\nu r^{2}$, the involved integral in the right hand side of the equation $(\ref{E2.53})$
takes the following form
\begin{align}\label{E2.54}
\nonumber \int_{0}^{+\infty} r^{2j+1}e^{-\nu r^{2}}dr&=\frac{1}{2\nu^{j+1}}\int_{0}^{+\infty}s^{j}e^{-s}ds\\
\nonumber&=\frac{1}{2\nu^{j+1}} \Gamma(j+1)\\
&=\frac{1}{2\nu^{j+1}}j!,
\end{align}
where $\Gamma$ is the classical Euler Gamma function. Thus, the equation $(\ref{E2.53})$ becomes
\begin{align}\label{E2.55}
\int_{\mathbb{C}}\mid F(z)\mid^{2}e^{-\nu\mid z\mid^{2}}d\lambda(z)=\frac{\pi}{\nu}\sum_{j=0}^{+\infty} \frac{j!}{\nu^{j}}\mid a_{j}\mid^{2}.
\end{align}
Consider the following inequality
\begin{align}\label{E2.56}
\sum_{j=1}^{+\infty} \frac{1}{\nu^{j+1}}j!\mid a_{j}\mid^{2}\leq\frac{1}{\nu}\sum_{j=1}^{+\infty} \frac{1}{\nu^{j}}j(j!)\mid a_{j}\mid^{2}.
\end{align}
The membership test given in $(\ref{E2.11})$  ensures that the last series is convergent. This with the inequality  $(\ref{E2.56})$ imply
that $F$ is in $ L^{2,\nu}(\mathbb{C})$. The holomorphicity of $F$ proves that $\frac{\partial F}{\partial \overline{z}}=0$. Then, we obtain
\begin{align}\label{E2.57}
-\frac{\partial^{2} F}{\partial z\partial\overline{z}}+\nu\overline{z}\frac{\partial F}{\partial\overline{z}}=0, \hspace{0.25cm}\nu>0,
\end{align}
Thus, one has $F\in L^{2,\nu}(\mathbb{C}),$ $\frac{\partial F}{\partial z}\in L^{2,\nu}(\mathbb{C})$ and  $\Delta_{\nu}F=0.$\\ Hence,
 we get $F\in \{F\in D(\Delta_{\nu}),\hspace{0.25cm}\Delta_{\nu}F=0\}.$ The proof of the proposition is closed.
\end{proof}
Notice that the Bargmann-Fock space $\mathcal{A}^{2,\nu}(\mathbb{C})$ defined in $(\ref{E2.45})$ is not stable by any derivative
 order $\frac{\partial^{m}(.)}{\partial z^{m}}$. Precisely, we have the following lemma.
\begin{lemma}\label{L2.1} Let $\nu>0$ and $m\in \mathbb{Z}_{+}$, $m\geq1$. Then, for the holomorphic function defined by
\begin{align}\label{E2.58}
\varphi_{\nu}(z)=\sum_{j=0}^{+\infty}\frac{\nu^{\frac{j+1}{2}}}{\sqrt{(j+2)!}}z^{j},
\end{align}
 we have the following properties:
\begin{enumerate}
  \item $\varphi_{\nu}(z)$ belongs to the Bargmann-Fock space $\mathcal{A}^{2,\nu}(\mathbb{C})$ defined in $(\ref{E2.45})$.
  \item $\frac{\partial^{m} \varphi_{\nu}(z)}{\partial z^{m}}$ does not belong to the Bargmann-Fock space $\mathcal{A}^{2,\nu}(\mathbb{C})$, for all $m\geq1$.
\end{enumerate}
\end{lemma}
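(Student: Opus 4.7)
The plan is to read off the two claims from the Taylor coefficient characterization of $\mathcal{A}^{2,\nu}(\mathbb{C})$: as established in equation (2.55) of the proof of Proposition~\ref{P2.4}, a holomorphic function $F(z)=\sum_{j\ge 0} a_j z^j$ on $\mathbb{C}$ lies in $\mathcal{A}^{2,\nu}(\mathbb{C})$ precisely when
\begin{equation*}
\|F\|_\nu^2=\frac{\pi}{\nu}\sum_{j=0}^{+\infty}\frac{j!}{\nu^{j}}|a_j|^{2}<+\infty.
\end{equation*}
Thus both statements reduce to checking convergence/divergence of a single numerical series.

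For part (1), I would substitute the coefficients $a_j=\nu^{(j+1)/2}/\sqrt{(j+2)!}$ of $\varphi_\nu$ into the above criterion. A direct calculation gives $\frac{j!}{\nu^{j}}|a_j|^{2}=\frac{\nu}{(j+1)(j+2)}$, so that
\begin{equation*}
\|\varphi_\nu\|_\nu^{2}=\frac{\pi}{\nu}\sum_{j=0}^{+\infty}\frac{\nu}{(j+1)(j+2)}=\pi\sum_{j=0}^{+\infty}\!\left(\frac{1}{j+1}-\frac{1}{j+2}\right),
\end{equation*}
a telescoping series equal to $\pi$. Hence $\varphi_\nu\in\mathcal{A}^{2,\nu}(\mathbb{C})$.

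For part (2), I would differentiate term-by-term, justified by the fact that $\varphi_\nu$ is entire (so the series converges uniformly on compact sets). After the index shift $k=j-m$, the $m$-th derivative has Taylor coefficients
\begin{equation*}
b_k=\frac{(k+m)!}{k!}\cdot\frac{\nu^{(k+m+1)/2}}{\sqrt{(k+m+2)!}},\qquad k\ge 0,
\end{equation*}
and a short manipulation yields
\begin{equation*}
\frac{k!}{\nu^{k}}|b_k|^{2}=\frac{\nu^{m+1}(k+m)!}{k!\,(k+m+1)(k+m+2)}.
\end{equation*}
Since $(k+m)!/k!$ is a polynomial in $k$ of degree $m\ge 1$, the general term is asymptotically of order $k^{m-2}$ as $k\to+\infty$; in particular it does not tend to zero. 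Therefore the series $\sum_{k\ge 0}\frac{k!}{\nu^{k}}|b_k|^{2}$ diverges, proving that $\partial^{m}\varphi_\nu/\partial z^{m}\notin\mathcal{A}^{2,\nu}(\mathbb{C})$.

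There is no real obstacle here; the only care needed is the index bookkeeping in the differentiation step and invoking (2.55) with the Taylor coefficients of $\partial^{m}\varphi_\nu/\partial z^{m}$ rather than of $\varphi_\nu$ itself. The factor $1/\sqrt{(j+2)!}$ in the definition of $\varphi_\nu$ is precisely calibrated so that the square-sum with weights $j!/\nu^{j}$ is borderline convergent (telescoping) for $\varphi_\nu$, and differentiating once or more inflates those weights by a polynomial factor, breaking convergence.
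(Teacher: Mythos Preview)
Your approach is essentially identical to the paper's: both parts reduce via formula~(2.55) to a numerical series, part~(1) is the same telescoping computation (the paper merely notes convergence without evaluating the sum), and part~(2) differentiates term-by-term and reads off the asymptotic order of the general term.

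There is, however, a small slip in your justification of divergence in part~(2). You correctly compute that the general term behaves like a constant times $k^{m-2}$, but then write ``in particular it does not tend to zero.'' This is false for $m=1$: the general term is then of order $k^{-1}$ and does tend to zero. The series still diverges in that case, of course, since it is comparable to the harmonic series $\sum k^{-1}$; but the reason you give does not cover it. The fix is immediate: simply say that $\sum_{k\ge 1} k^{m-2}$ diverges for all $m\ge 1$ (either because the terms fail to vanish when $m\ge 2$, or by comparison with the harmonic series when $m=1$). The paper handles this uniformly by comparing with the Riemann series $\sum (j+1)^{-(2-m)}$, which converges iff $2-m>1$, i.e.\ $m<1$.
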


\begin{proof}
By using the asymptotic Stirling formula \cite[p.312]{Nik}
\begin{align}\label{E2.59}
j!\sim \sqrt{2\pi j}\bigg(\frac{j}{e}\bigg)^{j},\hspace{0.2cm}\mbox{as}\hspace{0.2cm}j\longrightarrow+\infty,
\end{align}
we can see easily that the series in $(\ref{E2.58})$ is an entire series with infinite radius. Thus, the holomorphicity of $\varphi_{\nu}$ follows.
 Applying the formula in  $(\ref{E2.55})$ to the above function $\varphi_{\nu}$ defined in $(\ref{E2.58})$, we obtain

\begin{align}\label{E2.60}
\int_{\mathbb{C}}\mid \varphi_{\nu}(z)\mid^{2}e^{-\nu\mid z\mid^{2}}d\lambda(z)=\frac{\pi}{\nu}\sum_{j=0}^{+\infty}\frac{j!}{\nu^{j}}
\mid\frac{\nu^{\frac{j+1}{2}}}{\sqrt{(j+2)!}}\mid^{2}=
\pi\sum_{j=0}^{+\infty}\frac{1}{(j+1)(j+2)}<+\infty.
\end{align}
Using the convergence of the last series, we assure that the function $\varphi_{\nu}$ belongs to the Bargmann-Fock
space $\mathcal{A}^{2,\nu}(\mathbb{C})$.\\
Now, using the second theorem of Weirstrass \cite[p.115]{Cha}, on each compact disk $\overline{D}(0,R)=\{z\in\mathbb{C},\hspace{0.2cm}
\mid z\mid\leq R\}$, $R>0$, the complex derivative of order $m\geq1$ for the entire series $\varphi_{\nu}(z)$ can be obtained by deriving
term-by-term. Then, for each $z\in\mathbb{C}=\displaystyle{\cup_{R>0}}\overline{D}(0,R)$, we have
\begin{align}\label{E2.61}
\nonumber \frac{\partial^{m} \varphi_{\nu}(z)}{\partial z^{m}}&=\sum_{j=m}^{+\infty}\frac{\nu^{\frac{j+1}{2}}j(j-1)...(j-(m-1))}{\sqrt{(j+2)!}}z^{j-m}\\
&=\sum_{j=0}^{+\infty}\frac{\nu^{\frac{j+m+1}{2}}(j+m)(j+m-1)...(j+1)}{\sqrt{(j+m+2)!}}z^{j}.
\end{align}
Also, by applying again the formula $(\ref{E2.55})$ to the above holomorphic series, we obtain
\begin{align}\label{E2.62}
\nonumber \int_{\mathbb{C}}\mid \frac{\partial^{m}\varphi_{\nu}(z)}{\partial z^{m}}\mid^{2}e^{-\nu\mid z\mid^{2}}
d\lambda(z)&=\frac{\pi}{\nu}\sum_{j=0}^{+\infty}\frac{j!}{\nu^{j}}\mid\frac{\nu^{\frac{j+m+1}{2}}(j+m)(j+m-1)...(j+1)}{\sqrt{(j+m+2)!}}\mid^{2}\\
&=\pi\nu^{m}\sum_{j=0}^{+\infty}\frac{(j+m)^{2}(j+m-1)^{2}...(j+1)^{2}}{(j+1)(j+2)...(j+m+2)}.
\end{align}
Using the fact that the general member of the last series is equivalent to the term $\frac{\pi\nu^{m}}{(j+1)^{2-m}}$.
Then, the series in $(\ref{E2.62})$ has the same convergence nature as the Riemann series $\displaystyle{\sum_{j=0}^{+\infty}}\frac{1}{(j+1)^{2-m}}$,
which is convergent if and only if $2-m>1$. This implies that $m<1$. Hence, the holomorphic function $\frac{\partial^{m} \varphi_{\nu}(z)}{\partial z^{m}}$
 does not belong to the Bargmann-Fock space $\mathcal{A}^{2,\nu}(\mathbb{C})$ for all $m\geq1$.
\end{proof}
Now, we give some spectral properties for the operator $\Delta_{\nu}$ defined by $(\ref{E2.48})$ and  $(\ref{E2.49})$. Precisely, by using
 the same notations as in the proposition $(\ref{P2.4})$, we can state the following proposition.
\begin{proposition}\label{P2.5} We have the following properties:
\begin{enumerate}
 \item The operator $\Delta_{\nu}$ is closable and admits a self-adjoint extension.
  \item The operator $\Delta_{\nu}$ is an unbounded non self-adjoint operator.
  \item $0$ belongs to the point spectrum of $\Delta_{\nu}$.
\end{enumerate}
\end{proposition}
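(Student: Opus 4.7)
The plan is to hang the whole argument on the relation between $\Delta_\nu$ and the self-adjoint operator $\tilde{\Delta}_\nu$ from Proposition~\ref{P2.1}. The two operators act via the same differential expression $-\partial_z\partial_{\bar z}+\nu\bar z\,\partial_{\bar z}$, and $D(\Delta_\nu)$ is obtained from the maximal domain $D(\tilde{\Delta}_\nu)$ by imposing the additional integrability condition $\partial F/\partial z\in L^{2,\nu}(\mathbb{C})$. Thus one has the operator inclusion
\[
\Delta_\nu\subset\tilde{\Delta}_\nu,
\]
whose strictness will itself be one of the points to verify. This inclusion is the pivot for all three items.

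For (1) I would argue that, since $\tilde{\Delta}_\nu$ is self-adjoint it is a closed operator, so its restriction $\Delta_\nu$ (whose domain is dense because it contains all polynomials in $z$ and $\bar z$) is closable, and $\tilde{\Delta}_\nu$ itself already provides a self-adjoint extension. For the unboundedness half of (2), a quick direct computation shows that the antiholomorphic monomials $\bar z^{n}$ lie in $D(\Delta_\nu)$ for every $n$ (they belong to $L^{2,\nu}(\mathbb{C})$, their $z$-derivative vanishes, and $\Delta_\nu\bar z^{n}=n\nu\bar z^{n}$), which supplies an unbounded sequence of eigenvalues and therefore rules out boundedness. For the non-self-adjointness half, I would take adjoints in $\Delta_\nu\subset\tilde{\Delta}_\nu=\tilde{\Delta}_\nu^{*}$ to obtain $\tilde{\Delta}_\nu\subset\Delta_\nu^{*}$, and then exhibit a single element of $D(\tilde{\Delta}_\nu)\setminus D(\Delta_\nu)$. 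The natural witness is the holomorphic function $\varphi_\nu$ of Lemma~\ref{L2.1}: it lies in $\mathcal{A}^{2,\nu}(\mathbb{C})\subset L^{2,\nu}(\mathbb{C})$ and is annihilated by the differential expression, so $\varphi_\nu\in D(\tilde{\Delta}_\nu)$; yet Lemma~\ref{L2.1}(2) with $m=1$ gives $\varphi'_\nu\notin L^{2,\nu}(\mathbb{C})$, so $\varphi_\nu\notin D(\Delta_\nu)$. Therefore $D(\Delta_\nu)\subsetneq D(\Delta_\nu^{*})$ and $\Delta_\nu$ is not self-adjoint. Item (3) is then immediate from Proposition~\ref{P2.4}: the null space of $\Delta_\nu$ equals $\mathcal{A}_1^{2,\nu}(\mathbb{C})$, a non-trivial subspace (it contains, e.g., the constants), so $0$ belongs to the point spectrum of $\Delta_\nu$.

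The only genuinely delicate step is the non-self-adjointness assertion: it is not sufficient to observe that $D(\Delta_\nu)$ looks smaller than $D(\tilde{\Delta}_\nu)$, one must produce an explicit function lying in the latter but not the former. This is exactly where Lemma~\ref{L2.1} earns its keep, by turning the strictness of the chain $\Delta_\nu\subsetneq\tilde{\Delta}_\nu\subseteq\Delta_\nu^{*}$ into a concrete fact. Everything else collapses to standard functional-analytic bookkeeping once the inclusion $\Delta_\nu\subset\tilde{\Delta}_\nu$ is established.
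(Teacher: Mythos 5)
Your proposal is correct and follows essentially the same route as the paper: both arguments pivot on the inclusion $\Delta_{\nu}\subset\tilde{\Delta}_{\nu}$ (giving closability and the self-adjoint extension), both use the function $\varphi_{\nu}$ of Lemma (\ref{L2.1}) to show $D(\Delta_{\nu})\neq D(\tilde{\Delta}_{\nu})$ and deduce non-self-adjointness from the adjoint inclusion $\tilde{\Delta}_{\nu}\subset\Delta_{\nu}^{*}$, and both read off item (3) from Proposition (\ref{P2.4}). The one point where you go beyond the paper is the explicit unboundedness argument via the eigenvectors $\overline{z}^{\,n}$ with eigenvalues $n\nu$, which the paper's proof leaves implicit; this is a welcome addition, not a discrepancy.
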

\begin{proof}
We return back to the self-adjoint operator $\tilde{\Delta}_{\nu}$ defined by $(\ref{E2.12})$ and  $(\ref{E2.13})$. It is easy to see that the operator
 $\tilde{\Delta}_{\nu}$ is an extension of the operator  $\Delta_{\nu}$ (see \cite[p.4]{Kon} for the general theory). We write this fact by
\begin{align}\label{E2.63}
\Delta_{\nu}\subset \tilde{\Delta}_{\nu}.
\end{align}
Thus, $(1)$ is proved.\\
For proving $(2)$, first we shall prove that $D(\Delta_{\nu})\neq
D(\tilde{\Delta}_{\nu})$. To do so,
 we consider the holomorphic function $\varphi_{\nu}(z)$ defined by $(\ref{E2.58})$ in lemma $(\ref{L2.1})$.
 By using the point $1$ of the last lemma, we have $\varphi_{\nu}$ is holomorphic and
  $\varphi_{\nu}\in L^{2}(\mathbb{C},\hspace{0.2cm}e^{-\nu\mid z\mid^{2}}d\lambda(z))$. Then, $\frac{\partial \varphi_{\nu}}{\partial \overline{z}}=0$ and
   so we obtain
$-\frac{\partial^{2} \varphi_{\nu}}{\partial z\partial\overline{z}}+\nu\overline{z}\frac{\partial \varphi_{\nu}}{\partial\overline{z}}=0$.
Hence, $\varphi_{\nu}$ belongs to the domain $D(\tilde{\Delta}_{\nu})$ defined in $(\ref{E2.13})$. From the point $2$ of the Lemma $(\ref{L2.1})$,
 we have $\varphi_{\nu}:=\frac{\partial \varphi_{\nu}}{\partial z}\notin L^{2}(\mathbb{C},\hspace{0.2cm}e^{-\nu\mid z\mid^{2}}d\lambda(z))$.
  This proves that $\varphi_{\nu}\notin D(\Delta_{\nu})$ defined in $(\ref{E2.49})$. Hence, it is remarked that  $D(\Delta_{\nu})\neq D(\tilde{\Delta}_{\nu})$.
Now, if we suppose that $\Delta_{\nu}$ is self-adjoint, we obtain from $(\ref{E2.63})$ and the proposition $(1.6)$ in \cite[p.9]{Kon} that
\begin{align}\label{E2.64}
 \tilde{\Delta}_{\nu}=(\tilde{\Delta}_{\nu})^{*} \subset \Delta_{\nu}^{*}=\Delta_{\nu}.
\end{align}
Using $(\ref{E2.63})$ and $(\ref{E2.64})$, we get $\Delta_{\nu}=\tilde{\Delta}_{\nu}$, which implies that
$D(\Delta_{\nu})=D(\tilde{\Delta}_{\nu})$. This contradicts the fact $D(\Delta_{\nu})\neq D(\tilde{\Delta}_{\nu})$.
 Thus, $(2)$ is proved. The point $(3)$ is just an other way to state the proposition $(\ref{P2.4})$. The proof is closed.
\end{proof}
\begin{remark}
 The operator $\Delta_{\nu}$ is not closed.
\end{remark}
Indeed, let us consider the following sequence of the holomorphic polynomials
\begin{align}\label{E2.65}
\varphi_{\nu,k}(z)=\sum_{j=0}^{k}\frac{\nu^{\frac{j+1}{2}}}{\sqrt{(j+2)!}}z^{j}.
\end{align}
It is easy to see that $\varphi_{\nu,k}$ belongs to the domain
$D(\Delta_{\nu})$ defined in $(\ref{E2.49})$. By using the formula
$(\ref{E2.55})$,
 it is not hard to show that
\begin{align}\label{E2.66}
\nonumber \parallel \varphi_{\nu,k}-\varphi_{\nu}\parallel_{\nu}^{2}&=\int_{\mathbb{C}}\mid \varphi_{\nu,k}(z)-\varphi_{\nu}(z)\mid^{2}e^{-\nu\mid z\mid^{2}}d\lambda(z)\\
&=\pi\sum_{j=k+1}^{+\infty}\frac{1}{(j+1)(j+2)},
\end{align}
where $\varphi_{\nu}$ is the function defined in $(\ref{E2.58})$.
Thank's to the convergence of the series
$\displaystyle{\sum_{j=0}^{+\infty}}\frac{1}{(j+1)(j+2)}$, we see
easily that the series appearing in $(\ref{E2.66})$ goes
  to zero as $k\longrightarrow+\infty$. From the lemma $(\ref{L2.1})$,  it has been shown that  $\frac{\partial \varphi_{\nu}}{\partial z}$
   does not belong to the Hilbert space $L^{2}(\mathbb{C},\hspace{0.2cm}e^{-\nu\mid z\mid^{2}}d\lambda(z))$. This last fact implies
    that $\varphi_{\nu}$ is not in the domain $D(\Delta_{\nu})$. Thus, we have
\begin{align}\label{E2.67}
\left\{
  \begin{array}{ll}
    \varphi_{\nu,k}\in D(\Delta_{\nu}),  \;\; \displaystyle{\lim_{k\longrightarrow+\infty}}\varphi_{\nu,k}=\varphi_{\nu}\hspace{0.2cm}
     \mbox{in} \hspace{0.2cm} L^{2}(\mathbb{C},\hspace{0.2cm}e^{-\nu\mid z\mid^{2}}d\lambda(z)),  & \hbox{} \\
\\
    \Delta_{\nu}\varphi_{\nu,k}=0\longrightarrow0,\hspace{0.2cm}
     \mbox{as} \hspace{0.2cm}k\longrightarrow+\infty & \hbox{} \\
\\
    \varphi_{\nu}\notin D(\Delta_{\nu}). & \hbox{}
  \end{array}
\right.
\end{align}
Hence, the above tree properties assures that the operator $\Delta_{\nu}$ is not closed.\\

\subsection{Bargmann transform associated with the Bargmann-Dirichlet space}
Now, we give a Bargmann transform associated with the Bargmann-Dirichlet space.
\begin{theorem}\label{T2.1}
For the Bargmann-Dirichlet space $\mathcal{A}_{1}^{2,\nu}(\mathbb{C})$, we have the following associated Bargmann-Dirichlet transform
\begin{align}\label{E2.68}
 \nonumber B_{\nu}\hspace{0.2cm}:L^{2}(\mathbb{R},&\hspace{0.2cm} dx)\longrightarrow \mathcal{A}_{1}^{2,\nu}(\mathbb{C})\\
 &\varphi\longmapsto B_{\nu}[\varphi](z):=\int_{-\infty}^{+\infty}K_{\nu}(z,x)\varphi(x)dx,
 \end{align}
  where the integral kernel $K_{\nu}(z,x)$ is given by
 \begin{align}\label{E2.69}
 K_{\nu}(z,x)=\sqrt{\nu}(\pi)^{\frac{-3}{4}}e^{\frac{-x^{2}}{2}}[1+\frac{\sqrt{2}}{\sqrt{\pi}}z\int_{0}^{+\infty}\sqrt{t}e^{-t}
\exp(x\sqrt{2\nu}e^{-t}z-\nu e^{-2t}\frac{z^{2}}{2})H_{1}(x-\sqrt{\frac{\nu}{2}}e^{-t}z)dt],
\end{align}
with $H_{1}(x)=2x$ denote the second classical Hermite polynomial.
\end{theorem}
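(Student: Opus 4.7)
The strategy is to exhibit the action of $B_\nu$ on a convenient orthonormal basis of $L^2(\mathbb{R}, dx)$ and match it to the orthonormal basis $\{\psi_j\}_{j\geq 0}$ of $\mathcal{A}_1^{2,\nu}(\mathbb{C})$ given in (\ref{E2.7}). The natural basis of $L^2(\mathbb{R}, dx)$ to take is the family of Hermite functions
\begin{equation*}
h_j(x) = \frac{1}{\sqrt{2^j j!\sqrt{\pi}}}\, e^{-x^2/2} H_j(x),\qquad j\geq 0,
\end{equation*}
and the plan is to prove the identity $B_\nu[h_j] = \psi_j$ for every $j\geq 0$, from which unitarity of $B_\nu$ follows by the usual argument that an operator sending one orthonormal basis to another is a unitary isomorphism.

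For the case $j=0$, only the constant term $1$ of the bracket in the kernel (\ref{E2.69}) contributes nontrivially, and the pairing of $e^{-x^2/2}$ against $h_0(x) = \pi^{-1/4}e^{-x^2/2}$ collapses to a standard Gaussian integral yielding $\sqrt{\nu/\pi} = \psi_0(z)$. For $j\geq 1$, the orthogonality of Hermite polynomials kills the $1$ contribution, so one is left with the inner integral
\begin{equation*}
I_j(w) := \int_{-\infty}^{+\infty} e^{-x^2/2} \exp\!\bigl(2xw - w^2\bigr)\, H_1(x-w)\, h_j(x)\, dx,
\qquad w := \sqrt{\tfrac{\nu}{2}}\, e^{-t} z.
\end{equation*}
The key step is to perform the change of variable $y = x - w$, which absorbs the quadratic exponent and reduces the computation to $\int e^{-y^2} H_1(y)\, H_j(y+w)\, dy$. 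Applying the Hermite addition formula $H_j(y+w) = \sum_{k=0}^{j}\binom{j}{k}(2w)^k H_{j-k}(y)$ together with the orthogonality relation $\int e^{-y^2} H_m(y) H_n(y)\,dy = 2^n n!\sqrt{\pi}\,\delta_{mn}$ isolates the single surviving term $k = j-1$, producing a closed form for $I_j(w)$ proportional to $w^{j-1}$.

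Substituting $w = \sqrt{\nu/2}\, e^{-t} z$ back and performing the $t$-integral reduces to a Gamma integral $\int_0^{+\infty} \sqrt{t}\, e^{-jt}\,dt = \Gamma(3/2)/j^{3/2} = \sqrt{\pi}/(2 j^{3/2})$; collecting the prefactor $\sqrt{\nu}\,\pi^{-3/4}\cdot \sqrt{2/\pi}\cdot z$ from (\ref{E2.69}) and simplifying the powers of $2$, $\nu$, $\pi$ and the factorials yields precisely $\nu^{j/2} z^j/\sqrt{\pi j \cdot j!} = \psi_j(z)$. This verifies $B_\nu[h_j] = \psi_j$ for all $j\geq 0$.

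The main technical obstacle is the bookkeeping of the constants in the last step: the interplay between the factor $j$ arising from $\binom{j}{j-1}$, the factor $1/j^{3/2}$ from the Gamma integral, and the exponential normalizations from the Hermite functions must combine to give exactly the $1/\sqrt{j\cdot j!}$ appearing in $\psi_j$. One also should justify swapping the $t$-integral with the $x$-integral, which follows from absolute convergence via the pointwise estimate $|H_j(y+w)e^{-y^2/2}| \lesssim (1+|w|)^j e^{-y^2/4}$ and the exponential decay of $\sqrt{t}e^{-t}$ at infinity. Once $B_\nu$ is shown to carry $\{h_j\}$ isometrically onto $\{\psi_j\}$, the surjectivity onto $\mathcal{A}_1^{2,\nu}(\mathbb{C})$ and preservation of the inner product $\langle\cdot,\cdot\rangle_{\nu,1}$ in (\ref{E2.5}) are automatic, completing the proof that $B_\nu$ is a unitary isomorphism.
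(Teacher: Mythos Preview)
Your approach is correct and genuinely different from the paper's. The paper proceeds in the \emph{opposite} direction: it \emph{defines} the kernel as the series $\tilde{K}_\nu(z,x)=\sum_{j\geq 0}\varphi_j(x)\psi_j(z)$ (with $\varphi_j$ the orthonormalized Hermite polynomials in $L^2(\mathbb{R},e^{-x^2}dx)$), so that $\tilde{B}_\nu[\varphi_j]=\psi_j$ is automatic, and then works to obtain the closed integral formula (\ref{E2.69}). It does this by writing $\frac{1}{(k+1)^{3/2}}=\frac{1}{\Gamma(3/2)}\int_0^\infty \sqrt{t}\,e^{-(k+1)t}\,dt$, interchanging sum and integral (justified via the asymptotics $|H_k(x)|\lesssim 2^{k/2}(k!)^{1/2}k^{-1/4}$ and Stirling), and applying the generating identity $\sum_{k\geq 0}\frac{H_{k+1}(x)}{k!}s^k=e^{2xs-s^2}H_1(x-s)$. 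You instead take the closed kernel as given and \emph{verify} $B_\nu[h_j]=\psi_j$ by the contour shift $y=x-w$, the Hermite addition formula, and orthogonality, followed by the Gamma integral in $t$. The paper's route explains where the kernel comes from; yours is shorter and more elementary once one accepts (\ref{E2.69}).

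Two small points to tighten. First, your substitution $y=x-w$ is a shift by a \emph{complex} number $w=\sqrt{\nu/2}\,e^{-t}z$, so it is really a contour deformation in $\mathbb{C}$; this is legitimate because the integrand is entire and decays like a polynomial times $e^{-\mathrm{Re}(x-w)^2}$, but it deserves one sentence. Second, the statement ``an operator sending one orthonormal basis to another is a unitary isomorphism'' presupposes that the integral defining $B_\nu[\varphi](z)$ converges for every $\varphi\in L^2(\mathbb{R},dx)$ and that the resulting map is bounded; the paper handles this by the pointwise bound $\|\tilde{K}_\nu(z,\cdot)\|_{L^2(e^{-x^2}dx)}^2\leq \pi^{-1}(\nu+e^{\nu|z|^2})$, which you could also invoke to make the extension from the dense span of $\{h_j\}$ to all of $L^2$ rigorous.
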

\begin{proof} We will give our proof in two steps. In the first step, we consider the kernel function defined on $\mathbb{C}\times\mathbb{R}$ by
\begin{align}\label{E2.70}
 \tilde{K}_{\nu}(z,x)=\sum_{j=0}^{+\infty}\varphi_{j}(x)\psi_{j}(z),
\end{align}
associated with the orthonormal basis $\{\psi_{j}\}_{i\in \mathbb{Z}_{+}}$, that we have mentioned in $(\ref{E2.7})$ and
the orthonormal basis of $L^{2}(\mathbb{R},\hspace{0.2cm}e^{-x^{2}}dx)$ defined by means of the orthonormalized
 Hermite polynomials \cite[p.109]{Bea}
\begin{align}\label{E2.71}
 \varphi_{j}(x)=\frac{1}{\pi^{\frac{1}{4}}\sqrt{2^{j}j!}}H_{j}(x),\hspace{0.25cm}j\in \mathbb{Z}_{+}.
\end{align}
First, we have to compute the above kernel $\tilde{K}_{\nu}(z,x)$. To do so, we rewrite the kernel $\tilde{K}_{\nu}(z,x)$
in terms of the expression of $\psi_{j}(z)$ and $\varphi_{j}(x)$ as follows
\begin{align}\label{E2.72}
 \nonumber \tilde{K}_{\nu}(z,x)&=(\pi)^{\frac{-3}{4}}[\sqrt{\nu}+
 \sum_{j=1}^{+\infty}\frac{H_{j}(x)}{\sqrt{2^{j}j!}}\frac{\nu^{\frac{j}{2}}}{\sqrt{j(j!)}}z^{j}]\\
\nonumber&=(\pi)^{\frac{-3}{4}}[\sqrt{\nu}+\sum_{j=1}^{+\infty}
\frac{H_{j}(x)}{j!}\frac{\nu^{\frac{j}{2}}}{2^{\frac{j}{2}}}\frac{z^{j}}{\sqrt{j}}]\\
\nonumber&=(\pi)^{\frac{-3}{4}}[\sqrt{\nu}+\sqrt{\frac{\nu}{2}}z\sum_{k=0}^{+\infty}
\frac{H_{k+1}(x)}{(k+1)!\sqrt{k+1}}(\sqrt{\frac{\nu}{2}}z)^{k}]\\
&=\sqrt{\nu}(\pi)^{\frac{-3}{4}}[1+\frac{z}{\sqrt{2}}T_{\nu}(z,x)],
\end{align}
where
\begin{align}\label{E2.73}
 T_{\nu}(z,x)=\sum_{k=0}^{+\infty}
\frac{H_{k+1}(x)}{(k)!(k+1)^{\frac{3}{2}}}(\sqrt{\frac{\nu}{2}}z)^{k}.
\end{align}
Note that (up to our knowledge) the above series does not appear in the literature as a standard closed generating
 formula for the Hermite polynomials. To avoid this problem, we consider the following integral representation \cite[p.42]{Sch}
\begin{align}\label{E2.74}
 \frac{1}{\lambda^{s}}=\frac{1}{\Gamma(s)}\int_{0}^{+\infty}e^{-\lambda t}t^{s-1}dt, \hspace{0.25cm}Re(\lambda)>0,
\end{align}
for $s=\frac{3}{2}$ and $\lambda=k+1$. Then, we can write
\begin{align}\label{E2.75}
 \frac{1}{(k+1)^{\frac{3}{2}}}=\frac{1}{\Gamma(\frac{3}{2})}\int_{0}^{+\infty}e^{-k t}(e^{-t}\sqrt{t})dt.
\end{align}
By inserting the above formula in the equation $(\ref{E2.73})$, we obtain
\begin{align}\label{E2.76}
 T_{\nu}(z,x)=\frac{1}{\Gamma(\frac{3}{2})}\sum_{k=0}^{+\infty}\frac{H_{k+1}(x)}{k!}
\int_{0}^{+\infty}e^{-k t}(e^{-t}\sqrt{t})(\sqrt{\frac{\nu}{2}}z)^{k}dt.
\end{align}
To compute the  function $T_{\nu}(z,x)$, we need to permute the integral and the sum.
To do so, we first recall the following asymptotic formula for the Hermite polynomials \cite[p.112 and p.338]{Bea}
\begin{align}\label{E2.77}
 H_{k}(x)=2^{\frac{k}{2}}\frac{2^{\frac{1}{4}}(k!)^{\frac{1}{2}}}{(k\pi)^{\frac{1}{4}}}e^{\frac{x^{2}}{2}}
[\cos(\sqrt{2k+1}x-\frac{k\pi}{2})+O(k^{\frac{-1}{2}})],\hspace{0.2cm}as\hspace{0.1cm}k\longrightarrow+\infty.
\end{align}
This formula help us to obtain, for each fixed $x$, that
\begin{align}\label{E2.78}
 \mid H_{k}(x)\mid\leq C(x)\frac{2^{\frac{k}{2}}(k!)^{\frac{1}{2}}}{k^{\frac{1}{4}}},
 \hspace{0.2cm}\mbox{for}\hspace{0.2cm} k\hspace{0.2cm} \mbox{enough large}.
\end{align}
Let $p_{0}$ be a fixed integer enough large and let $p\geq p_{0}$, then we have the following inequality
\begin{align}\label{E2.79}
 \mid\sum_{k=0}^{p}\frac{H_{k+1}(x)}{k!}
e^{-k t}(\sqrt{\frac{\nu}{2}}z)^{k}\mid\leq
\sum_{k=0}^{p_{0}}\frac{\mid H_{k+1}(x)\mid}{k!}
\mid\sqrt{\frac{\nu}{2}}z\mid^{k}
+\sum_{k=p_{0}}^{p}\frac{\mid H_{k+1}(x)\mid}{k!}
\mid\sqrt{\frac{\nu}{2}}z\mid^{k},\hspace{0.2cm}t\geq0.
\end{align}
By using $(\ref{E2.78})$, we obtain for the last sum in the right hand side of $(\ref{E2.79})$
\begin{align}\label{E2.80}
\sum_{k=p_{0}}^{p}\frac{\mid H_{k+1}(x)\mid}{k!}
\mid\sqrt{\frac{\nu}{2}}z\mid^{k}\leq \sqrt{2}C(x)\sum_{k=p_{0}}^{p}\frac{2^{\frac{k}{2}}(k+1)^{\frac{1}{4}}}{\sqrt{k!}}
\mid\sqrt{\frac{\nu}{2}}z\mid^{k}.
\end{align}
Applying the asymptotic Stirling formula \cite[p.312]{Nik}
\begin{align}\label{E2.81}
k!\sim \sqrt{2\pi
k}\bigg(\frac{k}{e}\bigg)^{k},\hspace{0.2cm}\mbox{as}\hspace{0.2cm}k\longrightarrow+\infty,
\end{align}
the inequality given in $(\ref{E2.80})$ can be rewritten as
\begin{align}\label{E2.82}
\nonumber\sum_{k=p_{0}}^{p}\frac{\mid H_{k+1}(x)\mid}{k!}
\mid\sqrt{\frac{\nu}{2}}z\mid^{k}&\leq C_{1}(x)\sum_{k=p_{0}}^{p}\frac{(k+1)^{\frac{1}{4}}}{k^{\frac{1}{4}}k^{\frac{k}{2}}}
\mid\sqrt{\nu}z\mid^{k}e^{\frac{k}{2}}\\
&=C_{1}(x)\sum_{k=p_{0}}^{p}\frac{(k+1)^{\frac{1}{4}}}{k^{\frac{1}{4}}}\frac{e^{\frac{k}{2}}\mid\sqrt{\nu}z\mid^{k}}{k^{\frac{k}{4}}}
\frac{1}{k^{\frac{k}{4}}}.
\end{align}
where $C_{1}(x)$ is a positive constant.\\
Considering  the fact that
$\frac{e^{\frac{k}{2}}\mid\sqrt{\nu}z\mid^{k}}{k^{\frac{k}{4}}} $
goes to zero when $k$ goes to infinity, we can obtain, from the
above inequality, the following estimate
\begin{align}\label{E2.83}
\nonumber \sum_{k=p_{0}}^{p}\frac{\mid H_{k+1}(x)\mid}{k!}
\mid\sqrt{\frac{\nu}{2}}z\mid^{k}&\leq C_{2}(x)\sum_{k=p_{0}}^{p}\frac{1}{k^{\frac{k}{4}}}\\
&\leq  C_{2}(x)\sum_{k=p_{0}}^{+\infty}\frac{1}{k^{\frac{k}{4}}}=:M(x)<+\infty,
\end{align}
where $C_{2}(x)$ is a positive constant and the convergence of the last series is assured by the Cauchy criterion.\\
Returning back to the inequality  $(\ref{E2.79})$ and using  $(\ref{E2.83})$, we get the following estimate
\begin{align}\label{E2.84}
 \mid\sum_{k=0}^{p}\frac{H_{k+1}(x)}{k!}
e^{-k t}(\sqrt{\frac{\nu}{2}}z)^{k}\mid\leq C_{\nu}(p_{0},x,z),\hspace{0.2cm}\forall\hspace{0.2cm} p\geq p_{0},
\end{align}
where $C_{\nu}(p_{0},x,z)=M(x)+\displaystyle{\sum_{k=0}^{p_{0}}}\frac{\mid H_{k+1}(x)\mid}{k!}\mid\sqrt{\frac{\nu}{2}}z\mid^{k}.$\\
This last estimate combined with the fact $\int_{0}^{+\infty}\sqrt{t}e^{-t}dt<+\infty$, help us to use the Lebesgue dominate convergence
 theorem, for interchanging  the sum and the integral given in $(\ref{E2.76})$. Then, we obtain the following equality
\begin{align}\label{E2.85}
 T_{\nu}(z,x)=\frac{1}{\Gamma(\frac{3}{2})}\int_{0}^{+\infty}[\sum_{k=0}^{+\infty}\frac{H_{k+1}(x)}{k!}
(e^{-t}\sqrt{\frac{\nu}{2}}z)^{k}]e^{-t}\sqrt{t}dt.
\end{align}
Next, applying the generating function \cite[p.102]{Mou}
\begin{align}\label{E2.86}
\sum_{k=0}^{+\infty}\frac{H_{k+\ell}(x)}{k!}s^{k}=\exp(2xs-s^{2})H_{\ell}(x-s),
\end{align}
for $\ell=1$ and $s=\sqrt{\frac{\nu}{2}}e^{-t}z$, we find
\begin{align}\label{E2.87}
T_{\nu}(z,x)=\frac{1}{\Gamma(\frac{3}{2})}\int_{0}^{+\infty}\sqrt{t}e^{-t} \exp(x\sqrt{2\nu}e^{-t}z-\frac{\nu}{2}
 e^{-2t}z^{2})H_{1}(x-\sqrt{\frac{\nu}{2}}e^{-t}z)dt.
\end{align}
By using the fact $\Gamma(\frac{3}{2})=\frac{\sqrt{\pi}}{2}$ (\cite[p.311]{Nik}), the above equation becomes
 \begin{align}\label{E2.88}
T_{\nu}(z,x)=\frac{2}{\sqrt{\pi}}\int_{0}^{+\infty}\sqrt{t}e^{-t} \exp(x\sqrt{2\nu}e^{-t}z-\frac{\nu}{2}
 e^{-2t}z^{2})H_{1}(x-\sqrt{\frac{\nu}{2}}e^{-t}z)dt.
\end{align}
Finally, returning back to $(\ref{E2.72})$ and replacing the function $T_{\nu}(z,x)$ by its expression given in $(\ref{E2.88})$,
 we get, for the kernel defined in $(\ref{E2.70})$, the following formula
\begin{align}\label{E2.89}
 \tilde{K}_{\nu}(z,x)=\sqrt{\nu}(\pi)^{\frac{-3}{4}}[1+\frac{\sqrt{2}}{\sqrt{\pi}}z\int_{0}^{+\infty}\sqrt{t}e^{-t}
\exp(x\sqrt{2\nu}e^{-t}z-\nu e^{-2t}\frac{z^{2}}{2})H_{1}(x-\sqrt{\frac{\nu}{2}}e^{-t}z)dt].
\end{align}
The aim of the second step is to prove that the integral transformation
\begin{align}\label{E2.90}
 \nonumber\hspace{0.25cm}\tilde{B}_{\nu} \hspace{0.2cm}:L^{2}(\mathbb{R},& \hspace{0.2cm}e^{-x^{2}}dx)\longrightarrow \mathcal{A}_{1}^{2,\nu}(\mathbb{C})\\
 &\varphi\longmapsto \tilde{B}_{\nu}[\varphi](z):=\int_{-\infty}^{+\infty}\tilde{K}_{\nu}(z,x)\varphi(x)e^{-x^{2}}dx
 \end{align}
is an isometry operator. For this, we will prove that the function $\tilde{K}_{\nu}(z,.)$ belongs to $L^{2}(\mathbb{R},\hspace{0.2cm}e^{-x^{2}}dx)$,
 for each fixed $z\in \mathbb{C}.$\\
Recall that the function $\tilde{K}_{\nu}(z,x)$ was defined in the formula $(\ref{E2.70})$ by
\begin{align}\label{E2.91}
 \tilde{K}_{\nu}(z,x)=\sum_{j=0}^{+\infty}\varphi_{j}(x)\psi_{j}(z).
\end{align}
Then, by the use of the Parseval's formula for the Fourier series in a Hilbert space, we have
\begin{align}\label{E2.92}
 \nonumber \parallel \tilde{K}_{\nu}(z,.)\parallel_{L^{2}(\mathbb{R},\hspace{0.2cm}e^{-x^{2}}dx)}^{2}&=\sum_{j=0}^{+\infty}\mid\psi_{j}(z)\mid^{2}\\
\nonumber&=\frac{\nu}{\pi}+\frac{1}{\pi}\sum_{j=1}^{+\infty}\frac{1}{j(j!)}\mid\nu z^{2}\mid^{j}\\
&\leq \frac{1}{\pi}[\nu+\exp(\nu \mid z\mid^{2})].
\end{align}
This proves that $\tilde{K}_{\nu}(z,.)\in L^{2}(\mathbb{R},\hspace{0.2cm}e^{-x^{2}}dx)$ for every fixed $z\in \mathbb{C}$. For the kernel function $\tilde{K}_{\nu}$, we define the integral transform $\tilde{B}_{\nu}$ by
\begin{align}\label{E2.93}
 \nonumber \tilde{B}_{\nu}[\varphi](z)&=<\tilde{K}_{\nu}(z,.),\overline{\varphi}>_{L^{2}(\mathbb{R},\hspace{0.2cm}e^{-x^{2}}dx)}\\
&=\int_{-\infty}^{+\infty}\tilde{K}_{\nu}(z,x)\varphi(x)e^{-x^{2}}dx,\hspace{0.25cm}z\in \mathbb{C},
\end{align}
provided that the integral exists.\\
Applying the Cauchy-Schwartz inequality to the first equality in the formula $(\ref{E2.93})$, we get
\begin{align}\label{E2.94}
\nonumber \mid\tilde{B}_{\nu}[\varphi](z)\mid&=\mid<\tilde{K}_{\nu}(z,.),\overline{\varphi}>_{L^{2}(\mathbb{R},\hspace{0.2cm}e^{-x^{2}}dx)}\mid\\
&\leq\parallel\tilde{K}_{\nu}(z,.)\parallel_{L^{2}(\mathbb{R},\hspace{0.2cm}e^{-x^{2}}dx)}
\parallel\varphi\parallel_{L^{2}(\mathbb{R},\hspace{0.2cm}e^{-x^{2}}dx)}.
\end{align}
From the inequality given in $(\ref{E2.92})$, one can obtain the following estimate
\begin{align}\label{E2.95}
\mid\tilde{B}_{\nu}[\varphi](z)\mid\leq \sqrt{\frac{1}{\pi}[\nu+\exp(\nu \mid z\mid^{2})]}\parallel\varphi\parallel_{L^{2}(\mathbb{R},\hspace{0.2cm}e^{-x^{2}}dx)},
\end{align}
for all $\varphi \in L^{2}(\mathbb{R},\hspace{0.2cm}e^{-x^{2}}dx)$
and $z$ fixed in $\mathbb{C}$. The inequality $(\ref{E2.95})$
traduces the continuity of the following linear functional
\begin{align}\label{E2.96}
\nonumber L^{2}(\mathbb{R},\hspace{0.2cm} &e^{-x^{2}}dx)\longrightarrow \mathbb{C}\\
 &\varphi\longmapsto \tilde{B}_{\nu}[\varphi](z).
 \end{align}
Note that it is quite easy to see that, for every nonnegative
integer $j\in \mathbb{Z}_{+}$, we have
\begin{align}\label{E2.97}
 \tilde{B}_{\nu}[\varphi_{j}](z)=\psi_{j}(z).
\end{align}
Indeed, this is an immediate consequence of the equations $(\ref{E2.91})$ and $(\ref{E2.93})$. The continuity of the linear functional $\varphi\longmapsto \tilde{B}_{\nu}[\varphi](z)$ ensures the following equality
\begin{align}\label{E2.98}
 \nonumber \tilde{B}_{\nu}[\varphi](z)&=\sum_{j=0}^{+\infty}\lambda_{j}\tilde{B}_{\nu}[\varphi_{j}](z)\\
&=\sum_{j=0}^{+\infty}\lambda_{j}\psi_{j}(z),
\end{align}
for every $\varphi=\displaystyle{\sum_{j=0}^{+\infty}}\lambda_{j}\varphi_{j}$ in $L^{2}(\mathbb{R},\hspace{0.2cm}e^{-x^{2}}dx)$.
 Moreover, $\tilde{B}_{\nu}[\varphi](z)$ converges absolutely for all $z\in \mathbb{C}$. Indeed, by Cauchy-Schwarz inequality,
  we obtain
\begin{align}\label{E2.99}
 \nonumber \mid \tilde{B}_{\nu}[\varphi](z)\mid&\leq \sum_{j=0}^{+\infty}\mid\lambda_{j}\mid \mid \psi_{j}(z)\mid\\
\nonumber &\leq  (\sum_{j=0}^{+\infty}\mid\lambda_{j}\mid^{2})^{\frac{1}{2}}(\sum_{j=0}^{+\infty}\mid \psi_{j}(z)\mid^{2})^{\frac{1}{2}}\\
&\leq \sqrt{\frac{\nu +e^{\nu \mid z\mid^{2}}}{\pi}}\parallel\varphi\parallel_{L^{2}(\mathbb{R},\hspace{0.2cm}e^{-x^{2}}dx)},
 \end{align}
where we have used $(\ref{E2.92})$. Moreover, we have
\begin{align}\label{E2.100}
 \nonumber \parallel \tilde{B}_{\nu}[\varphi]\parallel_{\nu,1}^{2}&=\sum_{j=0}^{+\infty} \mid \lambda_{j}\mid^{2}\parallel \psi_{j}\parallel_{\nu,1}^{2}\\
 \nonumber &=\sum_{j=0}^{+\infty} \mid \lambda_{j}\mid^{2}\\
&=\parallel\varphi\parallel^{2}_{L^{2}(\mathbb{R},\hspace{0.2cm}e^{-x^{2}}dx)}.
 \end{align}
This shows that $\tilde{B}_{\nu}$ is a well defined isometry from $L^{2}(\mathbb{R},\hspace{0.2cm}e^{-x^{2}}dx)$ onto $\mathcal{A}_{1}^{2,\nu}(\mathbb{C})$.
Finally, by considering the following isometry

\begin{align}\label{E2.101}
 \nonumber B_{\nu}\hspace{0.2cm}:L^{2}(\mathbb{R},&\hspace{0.2cm} dx)\longrightarrow \mathcal{A}_{1}^{2,\nu}(\mathbb{C})\\
 &\varphi\longmapsto B_{\nu}[\varphi]=\tilde{B}_{\nu}\circ T[\varphi],
 \end{align}
where $T$ is the canonical isometry given by
\begin{align}\label{E2.102}
 \nonumber T\hspace{0.2cm}:L^{2}(\mathbb{R},&\hspace{0.2cm} dx)\longrightarrow L^{2}(\mathbb{R},\hspace{0.2cm} e^{-x^{2}}dx)\\
 &\varphi\longmapsto e^{\frac{x^{2}}{2}}\varphi,
 \end{align}
we give the desired result and then the proof of theorem $(\ref{T2.1})$ is complete.
\end{proof}
\section{Generalized Bargmann-Dirichlet spaces}
In this section,  we would like  to associate a new class of
Bargmann transforms to a class of generalized Bargmann-Dirichlet
 spaces $\mathcal{A}_{m}^{2,\nu}(\mathbb{C})$, $\nu>0$ and $m\in \mathbb{Z}_{+}$, called weighted Bargmann-Dirichlet
  spaces of order $m$. The functional spaces $\mathcal{A}_{m}^{2,\nu}(\mathbb{C})$ have been considered by Intissar
   \texttt{et} al. in \cite{Elh}. In order to avoid any confusion, it should be noted that the spaces $\mathcal{A}_{m}^{2,\nu}(\mathbb{C})$
   have been  noted in the last reference by $\mathbb{B}_{m}^{2,\nu}(\mathbb{C})$.
\subsection{Generalized Bargmann-Dirichlet spaces from magnetic Laplacian}
To introduce the weighted Bargmann-Dirichlet space $\mathcal{A}_{m}^{2,\nu}(\mathbb{C})$ of order $m$, we need to fixe some notations.\\
Let $\nu>0$ and $L^{2,\nu}(\mathbb{C})=L^{2,\nu}(\mathbb{C},\hspace{0.2cm}e^{-\nu \mid z\mid^{2}}d\lambda(z))$ is the Hilbert space of all
square-integrable functions on $\mathbb{C}$ with respect to the Gaussian measure $d\mu_{\nu}(z)=e^{-\nu \mid z\mid^{2}}d\lambda(z)$,
 where $d\lambda(z)$ is the Lebesgue measure. The hermitian scalar product is defined by
\begin{align}\label{E3.1}
<f,g>_{\nu}=\int_{\mathbb{C}}f(z)\overline{g(z)}e^{-\nu\mid z\mid^{2}}d\lambda(z).
 \end{align}
The associated norm is given by
\begin{align}\label{E3.2}
\parallel f\parallel_{\nu}^{2}=\int_{\mathbb{C}}\mid f(z) \mid^{2}e^{-\nu\mid z\mid^{2}}d\lambda(z).
 \end{align}
In the analogue way of the Bargmann-Dirichlet space $\mathcal{A}_{1}^{2,\nu}(\mathbb{C})$ dealt with in the last section,
we consider here the following functional space $\mathcal{A}_{m}^{2,\nu}(\mathbb{C})$ defined by
\begin{align}\label{E3.3}
\mathcal{A}_{m}^{2,\nu}(\mathbb{C})=\{f:\mathbb{C}\longrightarrow\mathbb{C}, \hspace{0.15cm} \mbox{holomorphic and}
 \hspace{0.15cm}\int_{\mathbb{C}}\mid f^{(m)}(z)\mid^{2}e^{-\nu\mid z\mid^{2}}d\lambda(z)<+\infty\},
\end{align}
where $f^{(m)}(z):=\frac{\partial^{m} f(z)}{\partial z^{m}}$ indicates the complex $m-$derivative of the function $f$.\\
For fixed nonnegative integer $m=2,3,...$, any holomorphic function
$f(z)=\displaystyle{\sum_{k=0}^{+\infty}}a_{k}z^{k}$ on $\mathbb{C}$
can be written as
\begin{align}\label{E3.4}
f(z)=f_{1,m}(z)+f_{2,m}(z),
\end{align}
where  $f_{1,m}(z)=\displaystyle{\sum_{k=0}^{m-1}}a_{k}z^{k}$ and
$f_{2,m}(z)=\displaystyle{\sum_{k=m}^{+\infty}}a_{k}z^{k}$.  The
space $\mathcal{A}_{m}^{2,\nu}(\mathbb{C})$ can be equipped with the
following norm
\begin{align}\label{E3.5}
\parallel f\parallel_{\nu,m}^{2}=\parallel f_{1,m}\parallel_{\nu}^{2}+\parallel f_{2,m}^{(m)}\parallel_{\nu}^{2}.
\end{align}
It is noted that the hermitian inner product $<,>_{\nu,m}$ associated with the norm\\
 $\parallel. \parallel_{\nu,m}$ is given through
\begin{align}\label{E3.6}
<f,g>_{\nu,m}=<f_{1,m},g_{1,m}>_{\nu}+<f_{2,m}^{(m)},g_{2,m}^{(m)}>_{\nu}.
\end{align}
Note that the monomials $e_{j}(z)=z^{j}$, $j\in \mathbb{Z}_{+}$
belong to $\mathcal{A}_{m}^{2,\nu}(\mathbb{C})$ \cite{Elh},
 and they are pairwise orthogonal with respect to the hermitian scalar product $<,>_{\nu,m}$ with

\begin{align}\label{E3.7}
\parallel e_{j}\parallel^{2}_{\nu,m}=
\left\{
  \begin{array}{ll}
    \frac{j!\pi}{\nu^{j+1}}, &\mbox{for} \hspace{0.25cm}j\leq m-1, \hbox{} \\\\
    \frac{(j!)^{2}\pi}{\nu^{j-m+1}\Gamma(j-m+1)}, &\mbox{for} \hspace{0.25cm}j\geq m. \hbox{}
  \end{array}
\right.
\end{align}
Then, an orthonormal basis for the space $\mathcal{A}_{m}^{2,\nu}(\mathbb{C})$ can be given by:
\begin{align}\label{E3.8}
   \psi_{j}^{m}(z)=\left\{
                 \begin{array}{ll}
                   \frac{\nu^{\frac{j+1}{2}}}{\sqrt{\pi j!}}z^{j},&j\leq m-1\hbox{} \\\\
                   \frac{(\nu)^{\frac{j-m+1}{2}}\sqrt{\Gamma(j-m+1)}}{\sqrt{\pi}j!}z^{j},&j\geq m. \hbox{}
                 \end{array}
               \right.
\end{align}
According to \cite{Elh}, it is proved that the generalized Bargmann-Dirichlet space $\mathcal{A}_{m}^{2,\nu}(\mathbb{C})$
 is a reproducing kernel Hilbert space. Concretely, the reproducing kernel for the space $\mathcal{A}_{m}^{2,\nu}(\mathbb{C})$
  has been given explicitly by the following formula
\begin{align}\label{E3.9}
\tilde{K}_{m,\nu}(z,w)=\frac{\pi}{\nu}\{\sum_{j=0}^{m-1}\frac{(\nu z \overline{w})^{j}}{j!}+\frac{(z\overline{w})^{m}}
{(m!)^{2}}\prescript{}{2}{F}_2^{}(1,1;m+1,m+1;\nu z \overline{w})\},
\end{align}
where $\prescript{}{2}{F}_2^{}$ is the hypergeometric function defined in $(\ref{E2.10})$.\\
For a holomorphic function $f(z)=\displaystyle{\sum_{j=0}^{+\infty}}a_{j}z^{j}$, we have the following membership test
\begin{align}\label{E3.10}
f(z):=\sum_{j=0}^{+\infty} a_{j}z^{j}\in \mathcal{A}_{m}^{2,\nu}(\mathbb{C})\Longleftrightarrow
\sum_{j=m}^{+\infty}\frac{(j!)^{2}}{\nu^{j-m} (j-m)!} \mid a_{j} \mid^{2}<+\infty.
\end{align}
As in the case of the space  $\mathcal{A}_{1}^{2,\nu}(\mathbb{C})$, we relate the generalized Bargmann-Dirichlet space
 $\mathcal{A}_{m}^{2,\nu}(\mathbb{C})$ to the magnetic Laplacian on the complex plane. Precisely, we have the following proposition.
\begin{proposition}\label{P3.1}
Let  $\nu>0$, $m\in \mathbb{Z}_{+}$, $m\geq2$  and let $\Delta_{\nu}$ be the partial differential operator defined by
\begin{align}\label{E3.11}
\Delta_{\nu}=-\frac{\partial^{2}}{\partial z\partial\overline{z}}+\nu\overline{z}\frac{\partial}{\partial\overline{z}},
\end{align}
acting on the Hilbert space  $L^{2,\nu}(\mathbb{C}):=L^{2}(\mathbb{C},\hspace{0.2cm}e^{-\nu\mid z\mid^{2}}d\lambda(z))$ with the dense domain
\begin{align}\label{E3.12}
D_{m}(\Delta_{\nu})=\{F\in L^{2,\nu}(\mathbb{C}),\hspace{0.25cm}\Delta_{\nu}F\in L^{2,\nu}(\mathbb{C})\hspace{0.25cm}
\mbox{and}\hspace{0.25cm}\frac{\partial^{m} F}{\partial z^{m} }\in L^{2,\nu}(\mathbb{C})\}.
\end{align}
Then, we have
\begin{align}\label{E3.13}
\mathcal{A}^{2,\nu}_{m}(\mathbb{C})=\{F\in D_{m}(\Delta_{\nu}),\hspace{0.25cm}\Delta_{\nu}F=0\}.
\end{align}
\end{proposition}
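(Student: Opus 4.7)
The plan is to mirror the two-sided argument used for Proposition 2.4, simply replacing the first-order derivative condition by the $m$-th order one, and invoking the appropriate membership test (\ref{E3.10}) for the space $\mathcal{A}_{m}^{2,\nu}(\mathbb{C})$ in place of (\ref{E2.11}). The structure is a double inclusion: functions in the null space are forced to be holomorphic and have $\partial^{m}F/\partial z^{m} \in L^{2,\nu}$, and conversely holomorphic functions in $\mathcal{A}_{m}^{2,\nu}(\mathbb{C})$ automatically satisfy all three defining conditions of $D_{m}(\Delta_{\nu})$ together with $\Delta_{\nu}F = 0$.

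For the forward inclusion, take $F \in D_{m}(\Delta_{\nu})$ with $\Delta_{\nu}F = 0$. By definition of the domain, $F \in L^{2,\nu}(\mathbb{C})$ and $\partial^{m}F/\partial z^{m} \in L^{2,\nu}(\mathbb{C})$. The identity $\Delta_{\nu}F = 0$ combined with $F \in L^{2,\nu}(\mathbb{C})$ places $F$ in the harmonic space $\mathcal{E}_{0}^{2,\nu}(\mathbb{C})$ of the self-adjoint extension $\tilde{\Delta}_{\nu}$, which by Proposition \ref{P2.3} coincides with the Bargmann--Fock space, so $F$ is holomorphic. Holomorphicity together with the $L^{2,\nu}$-integrability of $\partial^{m}F/\partial z^{m}$ is precisely the definition of $\mathcal{A}_{m}^{2,\nu}(\mathbb{C})$.

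For the reverse inclusion, take $F(z) = \sum_{j \geq 0} a_{j} z^{j} \in \mathcal{A}_{m}^{2,\nu}(\mathbb{C})$. Holomorphicity gives $\partial F / \partial \bar{z} = 0$, hence $\Delta_{\nu}F = 0$, so I only need to verify that $F$ itself lies in $L^{2,\nu}(\mathbb{C})$ (the condition $\partial^{m}F/\partial z^{m} \in L^{2,\nu}(\mathbb{C})$ is built into membership of $\mathcal{A}_{m}^{2,\nu}(\mathbb{C})$). Applying the Gaussian-integral identity (\ref{E2.55}), which still holds by the same polar-coordinate computation, reduces this to checking $\sum_{j \geq 0} \frac{j!}{\nu^{j}} |a_{j}|^{2} < +\infty$. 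Splitting this series at $j = m$, the first $m$ terms are harmless. For $j \geq m$ I would use the elementary inequality
\begin{equation*}
\frac{j!}{\nu^{j}} \;=\; \frac{(j-m)!}{\nu^{m}\, j!} \cdot \frac{(j!)^{2}}{\nu^{j-m}(j-m)!} \;\leq\; \frac{1}{\nu^{m}\, m!} \cdot \frac{(j!)^{2}}{\nu^{j-m}(j-m)!},
\end{equation*}
since $j(j-1)\cdots(j-m+1) \geq m!$ for $j \geq m$. The membership test (\ref{E3.10}) then ensures that the tail $\sum_{j \geq m}$ is finite, so $F \in L^{2,\nu}(\mathbb{C})$ and consequently $F \in D_{m}(\Delta_{\nu})$ with $\Delta_{\nu}F = 0$.

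The only genuinely new ingredient compared to the $m=1$ proof is the coefficient comparison above, which replaces the inequality (\ref{E2.56}); this is the single step that deserves care, and the Stirling-type gap between $j!/\nu^{j}$ and $(j!)^{2}/(\nu^{j-m}(j-m)!)$ is exactly what makes the bound work. Everything else (holomorphicity coming from $\Delta_{\nu}F = 0$ via Proposition \ref{P2.3}, the Gaussian integral identity (\ref{E2.55}), and the bookkeeping between the two directions) transfers verbatim from the proof of Proposition \ref{P2.4}, so I expect no further obstacle.
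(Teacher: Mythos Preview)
Your proposal is correct and follows essentially the same two-inclusion argument as the paper: holomorphicity from Proposition~\ref{P2.3}, then the coefficient comparison via the membership test (\ref{E3.10}) and the Gaussian identity (\ref{E2.55}). The only cosmetic difference is that the paper bounds $\frac{j!}{\nu^{j+1}}\le \frac{1}{\nu^{m+1}}\cdot\frac{(j!)^{2}}{\nu^{j-m}(j-m)!}$ using simply $\frac{(j-m)!}{j!}\le 1$, whereas you use the slightly sharper $\frac{(j-m)!}{j!}\le \frac{1}{m!}$; both suffice.
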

\begin{proof}
Let $F\in D_{m}(\Delta_{\nu})$, such that $\Delta_{\nu} F=0$. Then, we have
\begin{center}
$\textbf{(i)}$  $F\in L^{2,\nu}(\mathbb{C}),$ $\textbf{(ii)}$  $\frac{\partial^{m} F}{\partial z^{m}}\in L^{2,\nu}(\mathbb{C})$ and $\textbf{(iii)}$
 $\Delta_{\nu}F=0.$
\end{center}
By the proposition $(\ref{P2.3})$, we see that the both conditions
$\textbf{(i)}$ and $\textbf{(ii)}$ imply  that $F$ is holomorphic.
Combining this property with the fact that $\frac{\partial^{m}
F}{\partial z^{m}}\in L^{2,\nu}(\mathbb{C})$,
we show that $F$ belongs to the generalized Bargmann-Dirichlet space $\mathcal{A}_{m}^{2,\nu}(\mathbb{C})$.\\
Conversely, let  $F(z)=\displaystyle{\sum_{k=0}^{+\infty}}a_{k}z^{k}$ be a holomorphic function on $\mathbb{C}$ such
that\\ $\int_{\mathbb{C}}\mid F^{(m)}(z)\mid^{2}e^{-\nu\mid z\mid^{2}}d\lambda(z)<+\infty$. By the holomorphicity of $F$, we
 have $\frac{\partial F}{\partial\overline{z}}=0$, then we obtain
\begin{align}\label{E3.14}
-\frac{\partial^{2} F}{\partial z\partial\overline{z}}+\nu\overline{z}\frac{\partial F}{\partial\overline{z}}=0.
\end{align}
For proving that $F$ is in the kernel of the operator $\Delta_{\nu}$
defined by $(\ref{E3.11})$ and $(\ref{E3.12})$, we should
 to show that $F\in L^{2,\nu}(\mathbb{C})$. It is not hard to see that we have the following inequality
\begin{align}\label{E3.15}
\sum_{j=m}^{+\infty} \frac{1}{\nu^{j+1}}j!\mid a_{j}\mid^{2}\leq
\frac{1}{\nu^{m+1}}\sum_{j=m}^{+\infty}  \frac{(j!)^{2}}{\nu^{j-m}(j-m)!}\mid a_{j}\mid^{2}.
\end{align}
Using the membership test $(\ref{E3.10})$ with the fact
\begin{align}\label{E3.16}
\int_{\mathbb{C}}\mid F(z)\mid^{2}e^{-\nu\mid z\mid^{2}}d\lambda(z)=\frac{\pi}{\nu}\sum_{j=0}^{+\infty} \frac{j!}{\nu^{j}}\mid a_{j}\mid^{2},
\end{align}
given in $(\ref{E2.55})$, we obtain from the above inequality that $F$ belongs to the Hilbert space $L^{2,\nu}(\mathbb{C})$.
 Thus, one has  $F\in \{F\in D_{m}(\Delta_{\nu}),\hspace{0.25cm}\Delta_{\nu}F=0\}$. Hence, we have proved that
\begin{align}\label{E3.17}
\mathcal{A}^{2,\nu}_{m}(\mathbb{C})=\{F\in D_{m}(\Delta_{\nu}),\hspace{0.25cm}\Delta_{\nu}F=0\}.
\end{align}
\end{proof}
In the same way as in the proposition $(\ref{P2.5})$, we can state the following result.
\begin{proposition}\label{P3.2} We have the following properties:
\begin{enumerate}
 \item The operator $\Delta_{\nu}$ acting in $D_{m}(\Delta_{\nu})$ is closable and admits a self-adjoint extension.
  \item The operator $\Delta_{\nu}$ with the domain $D_{m}(\Delta_{\nu})$ is an unbounded non self-adjoint operator.
  \item $0$ belongs to the point spectrum of $\Delta_{\nu}$ considered on the domain $D_{m}(\Delta_{\nu})$.
\end{enumerate}
\end{proposition}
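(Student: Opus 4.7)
The plan is to mirror the argument of Proposition \ref{P2.5} step by step, replacing the domain $D(\Delta_{\nu})$ with $D_{m}(\Delta_{\nu})$ and using the $m$-fold analogues of the auxiliary results invoked there. The self-adjoint operator $\tilde{\Delta}_{\nu}$ on its maximal domain $D(\tilde{\Delta}_{\nu})$ given in \eqref{E2.13} will again serve as the comparison operator, and Lemma \ref{L2.1} will provide the separating function that pins down a strict inclusion of domains.

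For point (1), I would first observe that $D_{m}(\Delta_{\nu})\subset D(\tilde{\Delta}_{\nu})$ and the two operators act identically on $D_{m}(\Delta_{\nu})$, so $\Delta_{\nu}\subset\tilde{\Delta}_{\nu}$. Since $\tilde{\Delta}_{\nu}$ is self-adjoint, it is in particular a closed symmetric extension of $\Delta_{\nu}$, which is enough to conclude that $\Delta_{\nu}$ is closable and admits a self-adjoint extension (namely $\tilde{\Delta}_{\nu}$ itself).

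For point (2), the key step is to exhibit an element of $D(\tilde{\Delta}_{\nu})\setminus D_{m}(\Delta_{\nu})$. I would take precisely the holomorphic function $\varphi_{\nu}$ of Lemma \ref{L2.1}: part (1) of that lemma ensures $\varphi_{\nu}\in L^{2,\nu}(\mathbb{C})$, and the holomorphicity forces $\Delta_{\nu}\varphi_{\nu}=0\in L^{2,\nu}(\mathbb{C})$, so $\varphi_{\nu}\in D(\tilde{\Delta}_{\nu})$; part (2) of the same lemma gives $\partial^{m}\varphi_{\nu}/\partial z^{m}\notin L^{2,\nu}(\mathbb{C})$ for every $m\geq 1$, so $\varphi_{\nu}\notin D_{m}(\Delta_{\nu})$. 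Hence $D_{m}(\Delta_{\nu})\subsetneq D(\tilde{\Delta}_{\nu})$. Then, if $\Delta_{\nu}$ were self-adjoint, the chain
\begin{equation*}
\tilde{\Delta}_{\nu}=\tilde{\Delta}_{\nu}^{\ast}\subset\Delta_{\nu}^{\ast}=\Delta_{\nu}\subset\tilde{\Delta}_{\nu}
\end{equation*}
(using that the adjoint reverses inclusions for densely defined operators) would force $\Delta_{\nu}=\tilde{\Delta}_{\nu}$ and hence $D_{m}(\Delta_{\nu})=D(\tilde{\Delta}_{\nu})$, contradicting the strict inclusion just established. Unboundedness then follows because any self-adjoint extension of a bounded symmetric densely defined operator is itself bounded, and $\tilde{\Delta}_{\nu}$ is known from Proposition \ref{P2.1} to be unbounded.

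Point (3) is essentially a restatement of Proposition \ref{P3.1}: that proposition identifies $\mathcal{A}^{2,\nu}_{m}(\mathbb{C})$ with the kernel of $\Delta_{\nu}$ on $D_{m}(\Delta_{\nu})$, and this kernel is obviously nontrivial (it contains every polynomial of degree at most $m-1$, and in particular the constants). Therefore $0$ is an eigenvalue, and in particular belongs to the point spectrum. The main conceptual step is the one in (2): verifying that Lemma \ref{L2.1} provides a genuine witness for $D_{m}(\Delta_{\nu})\subsetneq D(\tilde{\Delta}_{\nu})$ for the full range $m\geq 2$, which is exactly why that lemma was stated uniformly in $m\geq 1$ rather than only for $m=1$. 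No further technical obstacles are expected since, once the strict inclusion of domains is in hand, the remaining deductions are standard abstract operator-theoretic manipulations.
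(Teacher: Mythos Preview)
Your proposal is correct and follows exactly the approach the paper intends: the paper's proof of Proposition~\ref{P3.2} simply states that it is ``the same as for the proposition~(\ref{P2.5}) and can be omitted,'' and what you have written is precisely that argument with $D(\Delta_{\nu})$ replaced by $D_{m}(\Delta_{\nu})$, using Lemma~\ref{L2.1} (which was deliberately stated for all $m\geq 1$) to produce the separating function and Proposition~\ref{P3.1} in place of Proposition~\ref{P2.4} for point~(3).
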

\begin{proof}
The proof is the same as for the proposition $(\ref{P2.5})$ and can be omitted.
\end{proof}
\subsection{Bargmann transform associated with generalized Bargmann-Dirichlet spaces}
To the generalized Bargmann-Dirichlet space $\mathcal{A}_{m}^{2,\nu}(\mathbb{C})$, we shall associate a Bargmann transform.
 Precisely, we have the following theorem
\begin{theorem}\label{T3.1} Let $\nu$ and $m\in \mathbb{Z}_{+}$, $m\geq2$. Then, we have the following unitary isomorphism
\begin{align}\label{E3.18}
 \nonumber B_{\nu,m}\hspace{0.2cm}:L^{2}(\mathbb{R},&\hspace{0.2cm} dx)\longrightarrow \mathcal{A}_{m}^{2,\nu}(\mathbb{C})\\
 &\varphi\longmapsto B_{\nu,m}[\varphi](z):=\int_{-\infty}^{+\infty}K_{\nu,m}(z,x)\varphi(x)dx.
 \end{align}
 The integral kernel $K_{\nu,m}(z,x)$ is given by
 \begin{align}\label{E3.19}
\nonumber K_{\nu,m}(z,x)&=\sqrt{\nu}(\pi)^{\frac{-3}{4}}e^{\frac{-x^{2}}{2}}[\sum_{k=0}^{m-1}(\sqrt{\frac{\nu}{2}}z)^{k}\frac{H_{k}(x)}{k!}\\
&+\big(\sqrt{\frac{2}{\pi}}\big)^{m}z^{m}\int_{0}^{+\infty}\varpi_{m}(t)
\exp(x\sqrt{2\nu}e^{-t}z-\nu e^{-2t}\frac{z^{2}}{2})H_{m}(x-\sqrt{\frac{\nu}{2}}e^{-t}z)dt],
\end{align}
where $H_{m}(x)$ is the Hermite polynomials and $\varpi_{m}(t)$ is the function defined by
\begin{align}\label{E3.20}
\varpi_{m}(t)=(\sqrt{t}e^{-t})*(\sqrt{t}e^{-2t})*...*(\sqrt{t}e^{-mt}), \hspace{0.25cm}m\geq2.
\end{align}
The notation $f*g$ means the following convolution product \cite[p.91]{Sch}
\begin{align}\label{E3.21}
f*g(x)=\int_{0}^{x}f(x-y)g(y)dy.
\end{align}
\end{theorem}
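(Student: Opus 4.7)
The plan is to follow the same two-step strategy used in the proof of Theorem~\ref{T2.1}, with the orthonormal system $\{\psi_{j}^{m}\}_{j\in\mathbb{Z}_{+}}$ of $\mathcal{A}_{m}^{2,\nu}(\mathbb{C})$ given in $(\ref{E3.8})$ replacing the system $\{\psi_{j}\}_{j\in\mathbb{Z}_{+}}$. First I would introduce the candidate kernel
\begin{align*}
\tilde{K}_{\nu,m}(z,x)=\sum_{j=0}^{+\infty}\varphi_{j}(x)\psi_{j}^{m}(z),
\end{align*}
where $\varphi_{j}(x)=\pi^{-1/4}(2^{j}j!)^{-1/2}H_{j}(x)$, and then split the series according to the two cases in $(\ref{E3.8})$. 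The first block $j=0,\dots,m-1$ is a finite sum that simplifies immediately into $\sqrt{\nu}\pi^{-3/4}\sum_{k=0}^{m-1}(\sqrt{\nu/2}\,z)^{k}H_{k}(x)/k!$, which matches the polynomial part of $K_{\nu,m}(z,x)$.

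The heart of the argument is the tail $j\geq m$. Setting $k=j-m$ and collecting constants, this tail reduces to a series of the form $C(m)\,z^{m}\sum_{k=0}^{+\infty}\frac{H_{k+m}(x)}{k!}\bigl(\sqrt{\nu/2}\,z\bigr)^{k}\cdot c_{k,m}$, where the coefficient $c_{k,m}=\bigl(k!/(k+m)!\bigr)^{3/2}$ is the obstruction that prevents a direct use of the generating function $(\ref{E2.86})$. I would handle this by applying the integral representation $(\ref{E2.74})$ with $s=3/2$ to each factor $(k+\ell)^{-3/2}$ for $\ell=1,\dots,m$, so that $c_{k,m}$ becomes an $m$-fold integral, and then recognizing this $m$-fold integral as a single one-dimensional Laplace transform $\int_{0}^{+\infty}\varpi_{m}(t)e^{-kt}\,dt$ via the Laplace convolution theorem; this is precisely how the function $\varpi_{m}$ in $(\ref{E1.30})$ arises. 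Once the coefficient is expressed as a Laplace transform, interchanging sum and integral (justified by estimates on $H_{k+m}(x)$ via the asymptotic $(\ref{E2.77})$ and Stirling $(\ref{E2.81})$, exactly as in the passage leading to $(\ref{E2.84})$) allows me to apply the Hermite generating function $(\ref{E2.86})$ with $\ell=m$ and $s=\sqrt{\nu/2}\,e^{-t}z$. This yields precisely the integral in $(\ref{E3.19})$.

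For the second step I would mirror the arguments $(\ref{E2.92})$--$(\ref{E2.100})$. Parseval's identity applied to the orthonormal series defining $\tilde{K}_{\nu,m}(z,\cdot)$ yields
\begin{align*}
\|\tilde{K}_{\nu,m}(z,\cdot)\|_{L^{2}(\mathbb{R},\,e^{-x^{2}}dx)}^{2}=\sum_{j=0}^{+\infty}|\psi_{j}^{m}(z)|^{2},
\end{align*}
which is finite for every fixed $z\in\mathbb{C}$ (the convergence follows from $(\ref{E3.8})$ since $\Gamma(j-m+1)/(j!)^{2}$ decays fast enough; this is where the choice of weight $m$ in the norm $(\ref{E3.5})$ is essential). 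Hence the integral transform $\tilde{B}_{\nu,m}[\varphi](z)=\int_{\mathbb{R}}\tilde{K}_{\nu,m}(z,x)\varphi(x)e^{-x^{2}}dx$ defines a continuous linear functional in $\varphi$, and the identity $\tilde{B}_{\nu,m}[\varphi_{j}]=\psi_{j}^{m}$ together with continuity gives $\tilde{B}_{\nu,m}[\sum_{j}\lambda_{j}\varphi_{j}]=\sum_{j}\lambda_{j}\psi_{j}^{m}$, from which the isometry onto $\mathcal{A}_{m}^{2,\nu}(\mathbb{C})$ (with the norm $\|\cdot\|_{\nu,m}$) is immediate. Composition with the canonical isometry $T$ of $(\ref{E2.102})$ then produces the stated $B_{\nu,m}$.

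The main obstacle I anticipate is step involving the identification of the coefficient $\bigl(k!/(k+m)!\bigr)^{3/2}$ with the Laplace transform of $\varpi_{m}$: it requires iterating the gamma-type representation $m$ times and then folding the resulting $m$-fold integral into a one-variable convolution by Fubini, keeping track of all the $\Gamma(3/2)^{m}=(\sqrt{\pi}/2)^{m}$ factors that combine with $\sqrt{\nu}\,\pi^{-3/4}$ to give the prefactor $(\sqrt{2/\pi})^{m}$ in $(\ref{E3.19})$. The convergence/interchange estimates are bookkeeping copies of $(\ref{E2.79})$--$(\ref{E2.84})$ and should go through since $\varpi_{m}(t)$ inherits integrable decay in $t$ from each of its factors $\sqrt{t}e^{-\ell t}$.
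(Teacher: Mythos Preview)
Your proposal is correct and follows essentially the same approach as the paper: the paper also defines $\tilde{K}_{\nu,m}(z,x)=\sum_{j}\varphi_{j}(x)\psi_{j}^{m}(z)$, splits at $j=m$, rewrites the tail coefficient $[(k+1)\cdots(k+m)]^{-3/2}$ as $(\Gamma(3/2))^{-m}\mathscr{L}(\varpi_{m})(k)$ via the Laplace convolution theorem (packaged there as a separate Lemma~\ref{L3.1}, which also supplies the bound $\varpi_{m}(t)\leq (\mathcal{B}(3/2,3/2))^{m-1}t^{(3m-2)/2}e^{-t}$ needed for integrability), interchanges sum and integral using the Hermite/Stirling estimates, applies $(\ref{E2.86})$ with $\ell=m$, and then repeats the Parseval/continuity/isometry argument and the composition with $T$. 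Your anticipated obstacles are exactly the points the paper addresses.
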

Before giving the proof of the theorem, we will state the following more precise result for $m=2$. Concretely, we have the following proposition
\begin{proposition}\label{P3.12} Let $\nu>0$ and $m=2$. Then, we have the following unitary isomorphism
\begin{align}\label{E3.22}
 \nonumber B_{\nu,2}\hspace{0.2cm}:L^{2}(\mathbb{R},& \hspace{0.2cm}dx)\longrightarrow \mathcal{A}_{2}^{2,\nu}(\mathbb{C})\\
 &\varphi\longmapsto B_{\nu,2}[\varphi](z):=\int_{-\infty}^{+\infty}K_{\nu,2}(z,x)\varphi(x)dx,
 \end{align}
  where the integral kernel $K_{\nu,2}(z,x)$ is given by
 \begin{align}\label{E3.23}
\nonumber K_{\nu,2}(z,x)&=\sqrt{\nu}(\pi)^{\frac{-3}{4}}e^{\frac{-x^{2}}{2}}[1+\sqrt{2\nu}xz\\
&+\frac{z^{2}}{4}\int_{0}^{+\infty}t^{2}
\exp(-2t+x\sqrt{2\nu}e^{-t}z-\nu e^{-2t}\frac{z^{2}}{2})\prescript{}{1}{F}_1^{}(\frac{3}{2};3;t)H_{2}(x-\sqrt{\frac{\nu}{2}}e^{-t}z)dt].
\end{align}
$\prescript{}{1}{F}_1^{}(\alpha;\beta;t)=\displaystyle{\sum_{j=0}^{+\infty}}\frac{(\alpha)_{j}}{(\beta)_{j}}\frac{z^{j}}{j!}$
is the confluent hypergeometric function and $H_{2}(y)=4y^{2}-2$.
\end{proposition}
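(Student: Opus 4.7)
The plan is to deduce Proposition~\ref{P3.12} from Theorem~\ref{T3.1} by specialising to $m=2$ and then evaluating the auxiliary weight $\varpi_{2}(t)$ in closed form. The only non-routine point is to recognise $\varpi_{2}(t)$ as a confluent hypergeometric function; everything else is bookkeeping.

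First I would write out the kernel $K_{\nu,2}(z,x)$ from the formula \eqref{E3.19} of Theorem~\ref{T3.1}. The sum for $k=0,1$ evaluates immediately using $H_{0}(x)=1$, $H_{1}(x)=2x$, giving the polynomial part
\begin{equation*}
\sum_{k=0}^{1}\bigl(\sqrt{\tfrac{\nu}{2}}z\bigr)^{k}\frac{H_{k}(x)}{k!}=1+\sqrt{2\nu}\,xz.
\end{equation*}
For the integral part, $(\sqrt{2/\pi})^{2}=2/\pi$, so the task reduces to showing that the factor $\frac{2}{\pi}\varpi_{2}(t)$ coincides with $\frac{1}{4}t^{2}e^{-2t}\,{}_{1}F_{1}(\tfrac{3}{2};3;t)$ appearing in \eqref{E3.23}.

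The central computation is the convolution $\varpi_{2}(t)=(\sqrt{t}e^{-t})*(\sqrt{t}e^{-2t})$. Using the definition \eqref{E3.21} and the substitution $y=tu$, $u\in[0,1]$, I would rewrite
\begin{equation*}
\varpi_{2}(t)=\int_{0}^{t}\sqrt{t-y}\,e^{-(t-y)}\sqrt{y}\,e^{-2y}\,dy
=t^{2}e^{-t}\int_{0}^{1}\sqrt{u(1-u)}\,e^{-tu}\,du.
\end{equation*}
The remaining integral is a textbook Euler-type representation of a confluent hypergeometric function. Specifically, with parameters $\alpha=\tfrac{3}{2}$ and $\beta=3$, the standard formula
\begin{equation*}
{}_{1}F_{1}(\alpha;\beta;z)=\frac{\Gamma(\beta)}{\Gamma(\alpha)\Gamma(\beta-\alpha)}\int_{0}^{1}e^{zu}u^{\alpha-1}(1-u)^{\beta-\alpha-1}\,du,
\end{equation*}
combined with $\Gamma(3)=2$ and $\Gamma(\tfrac{3}{2})^{2}=\pi/4$, yields
\begin{equation*}
\int_{0}^{1}\sqrt{u(1-u)}\,e^{-tu}\,du=\tfrac{\pi}{8}\,{}_{1}F_{1}(\tfrac{3}{2};3;-t).
\end{equation*}
Hence $\varpi_{2}(t)=\tfrac{\pi}{8}t^{2}e^{-t}\,{}_{1}F_{1}(\tfrac{3}{2};3;-t)$.

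To match the form stated in the proposition I would then apply Kummer's transformation ${}_{1}F_{1}(\alpha;\beta;z)=e^{z}\,{}_{1}F_{1}(\beta-\alpha;\beta;-z)$, which for $\alpha=\tfrac{3}{2}$, $\beta=3$ is self-dual and gives ${}_{1}F_{1}(\tfrac{3}{2};3;-t)=e^{-t}\,{}_{1}F_{1}(\tfrac{3}{2};3;t)$. Therefore
\begin{equation*}
\frac{2}{\pi}\varpi_{2}(t)=\frac{t^{2}}{4}e^{-2t}\,{}_{1}F_{1}(\tfrac{3}{2};3;t),
\end{equation*}
and substituting into \eqref{E3.19} reproduces exactly the kernel \eqref{E3.23}. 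The isometry statement of Proposition~\ref{P3.12} then follows automatically from that of Theorem~\ref{T3.1}. The main obstacle, and the only real content of the argument, is recognising the rescaled convolution as an Euler integral for ${}_{1}F_{1}(\tfrac{3}{2};3;\cdot)$ and applying Kummer's transformation to convert between the two equivalent forms; all other steps are mechanical substitutions.
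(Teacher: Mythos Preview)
Your proof is correct and follows essentially the same route as the paper: specialise Theorem~\ref{T3.1} to $m=2$ and identify $\varpi_{2}(t)$ via the Euler integral representation of ${}_{1}F_{1}(\tfrac{3}{2};3;\cdot)$. The only difference is that the paper orders the convolution as $\int_{0}^{t}\sqrt{s}\,e^{-s}\sqrt{t-s}\,e^{-2(t-s)}\,ds$, which after the same substitution gives $t^{2}e^{-2t}\int_{0}^{1}\sqrt{x(1-x)}\,e^{+tx}\,dx$ and hence ${}_{1}F_{1}(\tfrac{3}{2};3;t)$ directly, so Kummer's transformation is not needed.
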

In order to prove the above theorem, we need some auxiliary results. Precisely, with the help of the notations given in theorem
 $(\ref{T3.1})$, we have the following lemma.

\begin{lemma} \label{L3.1}Let $m\in\mathbb{Z}^{+}$ such that $m\geq2$.
\begin{enumerate}
  \item The following estimate holds
  \begin{align}\label{E3.24}
  \varpi_{m}(t)\leq ({\cal{B}}(\frac{3}{2},\frac{3}{2}))^{m-1}(\sqrt{t})^{3m-2}e^{-t},
  \end{align}
  where ${\cal{B}}(x,y)$ denotes the beta special function \cite[p.7]{Mag}, define by
  \begin{align}
  {\cal{B}}(x,y)=\int_{0}^{1}t^{x-1}(1-t)^{y-1}dt, \hspace{0.2cm}x>0,\hspace{0.2cm}y>0.
  \end{align}
  \item The Laplace transform of $\varpi_{(\alpha,m)}(t)$ is well defined. Moreover, we have
\begin{align}\label{E3.25}
\mathscr{L}(\varpi_{m}(t))(k)=\frac{(\Gamma(\frac{3}{2}))^{m}}{[(k+1)(k+2)...(k+m)]^{\frac{3}{2}}},\hspace{0.25cm}
k \in \mathbb{Z}_{+},
\end{align}
where $\mathscr{L}$ denotes the classical Laplace transform defined by \cite[p.2]{Sch}
\begin{align}\label{E3.26}
\mathscr{L}(f(t))(\lambda):=\int_{0}^{+\infty}e^{-\lambda t}f(t)dt, \hspace{0.25cm}\lambda>0.
\end{align}
\end{enumerate}
\end{lemma}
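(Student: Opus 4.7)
The plan is to handle the two parts by different tools: part $(2)$ via the convolution identity for the Laplace transform, and part $(1)$ by induction on $m$ using the beta function.

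For part $(2)$, I would use that the one-sided convolution defined in $(\ref{E3.21})$ satisfies $\mathscr{L}(f*g)=\mathscr{L}(f)\mathscr{L}(g)$. The Euler integral $\int_{0}^{+\infty} t^{s-1}e^{-\lambda t}\,dt = \Gamma(s)/\lambda^{s}$ with $s=3/2$ and $\lambda=k+j$ gives
\begin{align*}
\mathscr{L}(\sqrt{t}\,e^{-jt})(k)=\int_{0}^{+\infty}t^{1/2}e^{-(k+j)t}\,dt=\frac{\Gamma(3/2)}{(k+j)^{3/2}}.
\end{align*}
Multiplying these identities for $j=1,2,\dots,m$ produces exactly $(\Gamma(3/2))^{m}/[(k+1)(k+2)\cdots(k+m)]^{3/2}$, as claimed.

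For part $(1)$, I would proceed by induction on $m\geq 2$. For the base case, a direct expansion of $\varpi_{2}(t)$, the bound $e^{-s}\leq 1$, and the substitution $s=tu$ reduce the integral to $t^{2}\int_{0}^{1}\sqrt{u(1-u)}\,du = t^{2}\,{\cal B}(3/2,3/2)$, which matches the target bound with $3m-2=4$. For the inductive step I write $\varpi_{m+1}=\varpi_{m}*(\sqrt{t}\,e^{-(m+1)t})$, apply the inductive hypothesis inside the convolution integral, dominate $e^{-ms}\leq 1$, and perform the same change of variables. This yields
\begin{align*}
\varpi_{m+1}(t)\leq ({\cal B}(3/2,3/2))^{m-1}e^{-t}\,t^{(3m+1)/2}\,{\cal B}(3/2,\,3m/2).
\end{align*}

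The main obstacle is exactly this last line: the beta factor that appears is ${\cal B}(3/2,3m/2)$, not the uniform constant ${\cal B}(3/2,3/2)$ required by the lemma. To absorb it, I would observe that $y\mapsto {\cal B}(3/2,y)=\Gamma(3/2)\Gamma(y)/\Gamma(y+3/2)$ is strictly decreasing on $(0,+\infty)$, which one can see either from Stirling ($\Gamma(y)/\Gamma(y+3/2)\sim y^{-3/2}$) or, more cleanly, from monotonicity of the digamma function since $\frac{d}{dy}\log{\cal B}(3/2,y)=\psi(y)-\psi(y+3/2)<0$. Consequently ${\cal B}(3/2,3m/2)\leq {\cal B}(3/2,3/2)$ for every $m\geq 1$, and this collapses the sequence of constants into the pure $(m-1)$-th power, closing the induction.
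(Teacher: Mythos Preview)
Your argument is correct and follows the same overall strategy as the paper: induction on $m$ for part~(1) and the convolution law for the Laplace transform for part~(2). The only substantive difference is how you control the beta constant in the inductive step. The paper sidesteps your monotonicity observation by a cruder trick: inside the convolution integral it writes $(\sqrt{s})^{3(m-1)-2}=(\sqrt{s})^{3(m-1)-3}\sqrt{s}$ and bounds the first factor by $(\sqrt{t})^{3(m-1)-3}$ (since $0\le s\le t$), so that the remaining integral is always exactly $\int_0^t\sqrt{t-s}\,\sqrt{s}\,ds={\cal B}(3/2,3/2)\,t^{2}$, with the \emph{same} beta constant at every step. Your route---computing the natural factor ${\cal B}(3/2,3m/2)$ and then bounding it by ${\cal B}(3/2,3/2)$ via monotonicity of $y\mapsto{\cal B}(3/2,y)$---is equally valid and perhaps more transparent about where the inequality is actually used; the paper's route is marginally shorter since it never needs the monotonicity fact at all.
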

\begin{proof}
 For $m=2$, we obtain
\begin{align}\label{E3.27}
\nonumber\varpi_{2}(t)&=(\sqrt{t}e^{-t})*(\sqrt{t}e^{-2t})\\
\nonumber&=\int_{0}^{t}\sqrt{t-s}e^{-(t-s)}\sqrt{s}e^{-2s}ds\\
\nonumber&=e^{-t}\int_{0}^{t}\sqrt{t-s}\sqrt{s}e^{-s}ds\\
\nonumber&\leq e^{-t}\int_{0}^{t}\sqrt{t-s}\sqrt{s}ds\\
\nonumber&=e^{-t}\int_{0}^{1}t\sqrt{t-tx}\sqrt{tx}dx, \hspace{0.25cm} (x:=\frac{s}{t})\\
\nonumber&=e^{-t}t^{2}\int_{0}^{1}\sqrt{1-x}\sqrt{x}dx\\
\nonumber&=e^{-t}t^{2}{\cal{B}}(\frac{3}{2},\frac{3}{2})\\
&={\cal{B}}(\frac{3}{2},\frac{3}{2})(\sqrt{t})^{3\times2-2}e^{-t}.
\end{align}
For $m=3$, however,  we get the following estimate
\begin{align}\label{E3.28}
\nonumber\varpi_{3}(t)&=\varpi_{2}(t)*(\sqrt{t}e^{-3t})\\
\nonumber&\leq{\cal{B}}(\frac{3}{2},\frac{3}{2})\int_{0}^{t}\sqrt{t-s}e^{-3(t-s)})(\sqrt{s})^{3\times2-2}e^{-s}ds\\
\nonumber&\leq{\cal{B}}(\frac{3}{2},\frac{3}{2})e^{-t}(\sqrt{t})^{3}\int_{0}^{t}\sqrt{t-s}\sqrt{s}ds\\
\nonumber&\leq{\cal{B}}(\frac{3}{2},\frac{3}{2})e^{-t}(\sqrt{t})^{3}({\cal{B}}(\frac{3}{2},\frac{3}{2})t^{2})\\
         &\leq({\cal{B}}(\frac{3}{2},\frac{3}{2}))^{2}(\sqrt{t})^{3\times3-2}e^{-t}.
\end{align}
Step by step, we obtain that
\begin{align}\label{E3.29}
\nonumber\varpi_{m}(t)&=\varpi_{m-1}(t)*(\sqrt{t}e^{-mt})\\
\nonumber&=\int_{0}^{t}\varpi_{m-1}(s)\sqrt{t-s}e^{-m(t-s)}ds\\
\nonumber&\leq ({\cal{B}}(\frac{3}{2},\frac{3}{2}))^{m-2}\int_{0}^{t}(\sqrt{t})^{3(m-1)-2}e^{-s}\sqrt{t-s}e^{-m(t-s)}ds\\
\nonumber&\leq ({\cal{B}}(\frac{3}{2},\frac{3}{2}))^{m-2}
(\sqrt{t})^{3(m-1)-3}e^{-t}\int_{0}^{t}\sqrt{t-s}\sqrt{s}ds\\
\nonumber&= ({\cal{B}}(\frac{3}{2},\frac{3}{2}))^{m-2}
(\sqrt{t})^{3(m-1)-3}e^{-t}({\cal{B}}(\frac{3}{2},\frac{3}{2})t^{2})\\
\nonumber&= ({\cal{B}}(\frac{3}{2},\frac{3}{2}))^{m-2}
(\sqrt{t})^{3m-6}e^{-t}{\cal{B}}(\frac{3}{2},\frac{3}{2})(\sqrt{t})^{4}\\
&=({\cal{B}}(\frac{3}{2},\frac{3}{2}))^{m-1}(\sqrt{t})^{3m-2}e^{-t}.
\end{align}
By the induction principle, we confirm the following inequality
\begin{align}\label{E3.30}
\varpi_{m}(t)\leq ({\cal{B}}(\frac{3}{2},\frac{3}{2}))^{m-1}(\sqrt{t})^{3m-2}e^{-t},\hspace{0.2cm}m\geq2.
\end{align}
The above inequality proves that the Laplace transform of
$\varpi_{m}(t)$ is well defined. Using the formula \cite[p.92]{Sch}
\begin{align}\label{E3.31}
\mathscr{L}(f*g)=\mathscr{L}(f)\mathscr{L}(g),
\end{align}
and  the  relation \cite[p.28]{Pru}
\begin{align}\label{E3.32}
\mathscr{L}(t^{a}e^{-bt})(k)=\frac{\Gamma(a+1)}{(k+b)^{a+1}},\hspace{0.2cm} a>-1,\hspace{0.2cm} b>0,
\end{align}
we get  the following required equality
\begin{align}\label{E3.33}
\nonumber \mathscr{L}(\varpi_{m}(t))(k)&=\mathscr{L}(\sqrt{t}e^{-t}*\sqrt{t}e^{-2t}*...*\sqrt{t}e^{-mt})(k)\\
&=\frac{(\Gamma(\frac{3}{2}))^{m}}{[(k+1)(k+2)...(k+m)]^{\frac{3}{2}}}.
\end{align}
\end{proof}
Now, we are in position to prove the theorem $(\ref{T3.1})$.
\begin{proof}
We will proceed as in theorem $(\ref{T2.1})$. To do so, we begin by considering the following kernel
\begin{align}\label{E3.34}
 \tilde{K}_{\nu,m}(z,x)=\sum_{j=0}^{+\infty}\varphi_{j}(x)\psi_{j}^{m}(z),
\end{align}
associated with the orthonormal basis $\{\psi^m_{j}\}_{i\in\mathbb{Z}_{+}}$, appearing in  $(\ref{E3.8})$ and
the orthonormal basis of $L^{2}(\mathbb{R},\hspace{0.2cm}e^{-x^{2}}dx)$ given in $(\ref{E2.71})$.\\
First, we have to calculate the kernel $ K_{\nu,m}(z,x)$. Indeed, we express it in terms of the explicit forms of
 $\varphi_{j}(x)$ and $\psi_{j}^{m}(z)$ as follows
\begin{align}\label{E3.35}
 \nonumber \tilde{K}_{\nu,m}(z,x)&=\pi^{\frac{-3}{4}}[\sum_{j=0}^{m-1}2^{\frac{-j}{2}}\nu^{\frac{j+1}{2}}\frac{z^{j}}{j!}H_{j}(x)
+\sum_{j=m}^{+\infty}\nu^{\frac{j-m+1}{2}}2^{-\frac{j}{2}}\frac{\sqrt{\Gamma(j-m+1)}}{j!\sqrt{j!}}z^{j}H_{j}(x)]\\
\nonumber&=\pi^{\frac{-3}{4}}[\sum_{k=0}^{m-1}2^{\frac{-k}{2}}\nu^{\frac{k+1}{2}}\frac{z^{k}}{k!}H_{k}(x)
+\sum_{k=0}^{+\infty}\nu^{\frac{k+1}{2}}2^{-\frac{k+m}{2}}\frac{\sqrt{\Gamma(k+1)}k!}{[(k+m)!]^{\frac{3}{2}}}
\frac{z^{k+m}}{k!}H_{k+m}(x)]\\
&=\pi^{\frac{-3}{4}}\sqrt{\nu}[\sum_{k=0}^{m-1}\frac{(\sqrt{\frac{\nu}{2}}z)^{k}}{k!}H_{k}(x)
+2^{\frac{-m}{2}}z^{m}\sum_{k=0}^{+\infty}\frac{1}{[(k+1)(k+2)...(k+m)]^{\frac{3}{2}}}
\frac{(\sqrt{\frac{\nu}{2}}z)^{k}}{k!}H_{k+m}(x)].
\end{align}
As mentioned in the last section, the series
\begin{align}\label{E3.36}
 \sum_{k=0}^{+\infty}\frac{1}{[(k+1)(k+2)...(k+m)]^{\frac{3}{2}}} \frac{(\sqrt{\frac{\nu}{2}}z)^{k}}{k!}H_{k+m}(x)
\end{align}
involved in the right hand side of the equation $(\ref{E3.35})$ does
not appear in the literature as a standard closed
 generating formula for Hermite polynomials. To overcome  this problem, we use the point $(2)$ of the lemma $(\ref{L3.1})$.
  Then, the  equality $(\ref{E3.35})$ can be rewritten as
\begin{align}\label{E3.37}
 \tilde{K}_{\nu,m}(z,x)=\pi^{\frac{-3}{4}}\sqrt{\nu}[\sum_{k=0}^{m-1}\frac{(\sqrt{\frac{\nu}{2}}z)^{k}}{k!}H_{k}(x)
+2^{\frac{-m}{2}}(\Gamma(\frac{3}{2}))^{-m}z^{m}\sum_{k=0}^{+\infty}\int_{0}^{+\infty}e^{-kt}\varpi_{m}(t)
\frac{(\sqrt{\frac{\nu}{2}}z)^{k}}{k!}H_{k+m}(x)dt].
\end{align}
For given a closed formula of the kernel $\tilde{K}_{\nu,m}(z,x)$, we need to permute the integral and sum in the infinite series
\begin{align}\label{E3.38}
T_{\nu,m}(z,x)=\sum_{k=0}^{+\infty}\int_{0}^{+\infty}e^{-kt}\varpi_{m}(t)
\frac{(\sqrt{\frac{\nu}{2}}z)^{k}}{k!}H_{k+m}(x)dt,
 \end{align}
involved in the right hand side of the equality $(\ref{E3.37})$. Indeed, we use the formula $(\ref{E2.78})$ to obtain
\begin{align}\label{E3.39}
 \mid H_{k+m}(x)\mid\leq C(x)\frac{2^{\frac{k+m}{2}}((k+m)!)^{\frac{1}{2}}}{(k+m)^{\frac{1}{4}}},\hspace{0.2cm}
  \mbox{for} \hspace{0.2cm}m \hspace{0.2cm} \mbox{fixed and}\hspace{0.2cm} k \hspace{0.2cm} \mbox{enough large}.
\end{align}
Let $p_{0}$ be a fixed integer enough large and let $p\geq p_{0}$. Then, we have the following inequality
\begin{align}\label{E3.40}
 \mid\sum_{k=0}^{p}\frac{H_{k+m}(x)}{k!}
e^{-k t}(\sqrt{\frac{\nu}{2}}z)^{k}\mid\leq
\sum_{k=0}^{p_{0}}\frac{\mid H_{k+m}(x)\mid}{k!}
\mid\sqrt{\frac{\nu}{2}}z\mid^{k}
+\sum_{k=p_{0}}^{p}\frac{\mid H_{k+m}(x)\mid}{k!}
\mid\sqrt{\frac{\nu}{2}}z\mid^{k}, \hspace{0.25cm}t\geq0.
\end{align}
By exploiting the inequality $(\ref{E3.39})$, we obtain for the last sum in the right hand side of $(\ref{E3.40})$
\begin{align}\label{E3.41}
\nonumber \sum_{k=p_{0}}^{p}\frac{\mid H_{k+m}(x)\mid}{k!}
\mid\sqrt{\frac{\nu}{2}}z\mid^{k}&\leq C(x)\sum_{k=p_{0}}^{p}\frac{2^{\frac{k+m}{2}}[(k+m)!]^{\frac{1}{2}}}
{k!(k+m)^{\frac{1}{4}}}\mid\sqrt{\frac{\nu}{2}}z\mid^{k}\\
&\leq 2^{\frac{m}{2}} C(x)\sum_{k=p_{0}}^{p}\sqrt{\frac{(k+m)!}{k!}}\frac{1}{\sqrt{k!}}\frac{1}{(k+m)^{\frac{1}{4}}}
\mid\sqrt{\nu}z\mid^{k}.
\end{align}
By using the fact $j!=\Gamma(j+1)$, $j\geq0$, the above inequality can be rewritten as
\begin{align}\label{E3.42}
\sum_{k=p_{0}}^{p}\frac{\mid H_{k+m}(x)\mid}{k!}
\mid\sqrt{\frac{\nu}{2}}z\mid^{k}\leq 2^{\frac{m}{2}} C(x)\sum_{k=p_{0}}^{p}\sqrt{\frac{\Gamma(k+m+1)}
{\Gamma(k+1)}}\frac{1}{\sqrt{k!}}\frac{1}{(k+m)^{\frac{1}{4}}}
\mid\sqrt{\nu}z\mid^{k}.
\end{align}
With the help of the following asymptotic formula \cite[p.22]{Bea}
\begin{align}\label{E3.43}
\frac{\Gamma(y+a)}{\Gamma(y)}=y^{a}(1+O(y^{-1})),\hspace{0.25cm}\mbox{as} \hspace{0.25cm}y\longrightarrow+\infty,
\end{align}
applied to $y=k+1$ and $a=m$, we obtain the following estimate
\begin{align}\label{E3.44}
\sqrt{\frac{\Gamma(k+m+1)}{\Gamma(k+1)}}\leq C_{m} (1+k)^{\frac{m}{2}}, \hspace{0.2cm}\mbox{for} \hspace{0.2cm} k \hspace{0.2cm}
\mbox{enought large},
\end{align}
where $C_{m}$ is a positive constant. This last inequality combined with the asymptotic Stirling formula mentioned in $(\ref{E2.81})$
 help us to obtain from $(\ref{E3.42})$ the inequality

\begin{align}\label{E3.45}
\nonumber \sum_{k=p_{0}}^{p}\frac{\mid H_{k+m}(x)\mid}{k!}
\mid\sqrt{\frac{\nu}{2}}z\mid^{k}&\leq  C_{m}C(x)2^{\frac{m}{2}}\sum_{k=p_{0}}^{p}(1+k)^{\frac{m}{2}}\frac{1}{\sqrt{k!}}\frac{1}{(k+m)^{\frac{1}{4}}}
\mid\sqrt{\nu}z\mid^{k}\\
\nonumber & \leq\tilde{C}_{m}(x)\sum_{k=p_{0}}^{p} \frac{(1+k)^{\frac{m}{2}}}{(k+m)^{\frac{1}{4}}}k^{\frac{-1}{4}}(\frac{k}{e})^{\frac{-k}{2}}
\mid\sqrt{\nu}z\mid^{k}\\
&\leq \tilde{C}_{m}(x)\sum_{k=p_{0}}^{+\infty} \frac{(1+k)^{\frac{m}{2}}}{(k+m)^{\frac{1}{4}}}k^{\frac{-1}{4}}(\frac{k}{e})^{\frac{-k}{2}}\mid\sqrt{\nu}z\mid^{k}.
\end{align}
It is worth noting that the convergence of the last series is assured by the Cauchy convergence criterion. Thanks to the point $(1)$ of the lemma $(\ref{L3.1})$,
 we have the integrability of the function $\varpi_{m}(t)$ over the set $(0,+\infty)$. Then, by using the Lebesgue dominate convergence theorem, we can interchange the sum and the integral in $(\ref{E3.38})$. Then, the latter  becomes
\begin{align}\label{E3.46}
T_{\nu,m}(z,x)=\int_{0}^{+\infty}\sum_{k=0}^{+\infty}e^{-kt}\varpi_{m}(t)
\frac{(\sqrt{\frac{\nu}{2}}z)^{k}}{k!}H_{k+m}(x)dt.
 \end{align}
Applying the generating function mentioned in  $(\ref{E2.86})$
\begin{align}\label{E3.47}
\sum_{k=0}^{+\infty}\frac{H_{k+\ell}(x)}{k!}s^{k}=\exp(2xs-s^{2})H_{\ell}(x-s),
\end{align}
for $\ell=m$ and $s=\sqrt{\frac{\nu}{2}}e^{-t}z$, the equation $(\ref{E3.46})$ becomes
\begin{align}\label{E3.48}
T_{\nu,m}(z,x)=\int_{0}^{+\infty}\varpi_{m}(t)
\exp(x\sqrt{2\nu}e^{-t}z-\nu e^{-2t}\frac{z^{2}}{2})H_{m}(x-\sqrt{\frac{\nu}{2}}e^{-t}z)dt.
 \end{align}
Finally, the equation $(\ref{E3.37})$ takes the form
\begin{align}\label{E3.49}
\nonumber \tilde{K}_{\nu,m}(z,x)&=\pi^{\frac{-3}{4}}\sqrt{\nu}[\sum_{k=0}^{m-1}\frac{(\sqrt{\frac{\nu}{2}}z)^{k}}{k!}H_{k}(x)\\
&+2^{\frac{-m}{2}}(\Gamma(\frac{3}{2}))^{-m}z^{m}\int_{0}^{+\infty}\varpi_{m}(t)
\exp(x\sqrt{2\nu}e^{-t}z-\nu e^{-2t}\frac{z^{2}}{2})H_{m}(x-\sqrt{\frac{\nu}{2}}e^{-t}z)dt].
\end{align}
Now, we will prove that the integral transformation
\begin{align}\label{E3.50}
 \nonumber \tilde{B}_{\nu,m}\hspace{0.2cm}:L^{2}(\mathbb{R},&\hspace{0.2cm} e^{-x^{2}}dx)\longrightarrow \mathcal{A}_{m}^{2,\nu}(\mathbb{C})\\
 &\varphi\longmapsto \tilde{B}_{\nu,m}[\varphi](z):=\int_{-\infty}^{+\infty}\tilde{K}_{\nu,m}(z,x)\varphi(x)e^{-x^{2}}dx
 \end{align}
is an isometry operator. To do so, we prove that the function $\tilde{K}_{\nu,m}(z,.)$ belongs to $L^{2}(\mathbb{R},\hspace{0.2cm}e^{-x^{2}}dx)$
for each fixed $z\in \mathbb{C}.$\\
Applying the Parseval's formula, for the function $\tilde{K}_{\nu,m}(z,.)$ defined in  $(\ref{E3.34})$, we get
\begin{align}\label{E3.51}
 \nonumber \parallel \tilde{K}_{\nu}(z,.)\parallel_{L^{2}(\mathbb{R},\hspace{0.2cm}e^{-x^{2}}dx)}^{2}&
=\sum_{j=0}^{+\infty}\mid\psi_{j}^{m}(z)\mid^{2}\\
\nonumber&=\sum_{j=0}^{m-1}\frac{\nu^{j+1}\mid z^{2}\mid^{j}}{\pi j!}+
\sum_{j=m}^{+\infty}\frac{\nu^{j-m+1}\Gamma(j-m+1)}{\pi(j!)^{2}}\mid z^{2}\mid^{j}\\
&=\sum_{k=0}^{m-1}\frac{\nu\mid \nu z^{2}\mid^{k}}{\pi k!}+
\sum_{k=0}^{+\infty}\frac{\nu^{k+1}\Gamma(k+1)}{\pi[(k+m)!]^{2}}\mid z^{2}\mid^{k+m}.
\end{align}
Then, it is not hard to obtain the following inequality
\begin{align}\label{E3.52}
 \parallel \tilde{K}_{\nu}(z,.)\parallel_{L^{2}(\mathbb{R},\hspace{0.2cm}e^{-x^{2}}dx)}^{2}\leq \frac{\nu}{\pi}(1+\mid z\mid^{2m})\exp(\nu \mid z\mid^{2}).
\end{align}
This proves that $\tilde{K}_{\nu,m}(z,.)\in L^{2}(\mathbb{R},\hspace{0.2cm}e^{-x^{2}}dx)$ for every fixed $z\in \mathbb{C}.$
The integral transform $\tilde{B}_{\nu,m}$ associated with the kernel function $\tilde{K}_{\nu,m}(z,.)$ is defined by
\begin{align}\label{E3.53}
 \nonumber\tilde{B}_{\nu,m}[\varphi](z)&=<\tilde{K}_{\nu,m}(z,.),\overline{\varphi}>_{L^{2}(\mathbb{R},\hspace{0.2cm}e^{-x^{2}}dx)}\\
&=\int_{-\infty}^{+\infty}\tilde{K}_{\nu,m}(z,x)\varphi(x)e^{-x^{2}}dx,
\end{align}
for every $z\in \mathbb{C}$, provided that the integral exists.\\
It is noted that it is quite easy to see that for every nonnegative integer $j\in \mathbb{Z}_{+}$, we have
\begin{align}\label{E3.54}
 \tilde{B}_{\nu,m}[\varphi_{j}](z)=\psi_{j}^{m}(z).
\end{align}
Indeed, this is an immediate consequence of the equations $(\ref{E3.34})$ and $(\ref{E3.53})$.\\
Applying the Cauchy-Schwartz inequality to the first equality in the formula $(\ref{E3.53})$, we get
\begin{align}\label{E3.55}
\mid\tilde{B}_{\nu,m}[\varphi](z)\mid\leq\parallel\tilde{K}_{\nu,m}(z,.)\parallel_{L^{2}(\mathbb{R},\hspace{0.2cm}e^{-x^{2}}dx)}
\parallel\varphi\parallel_{L^{2}(\mathbb{R},\hspace{0.2cm}e^{-x^{2}}dx)}.
\end{align}
From the inequality given $(\ref{E3.52})$, one can obtain the following estimate
\begin{align}\label{E3.56}
\mid\tilde{B}_{\nu,m}[\varphi](z)\mid\leq \sqrt{\frac{\nu}{\pi}[(1+\mid z\mid^{2m})\exp(\nu \mid z\mid^{2})]}
\parallel\varphi\parallel_{L^{2}(\mathbb{R},\hspace{0.2cm}e^{-x^{2}}dx)},
\end{align}
for all $\varphi \in L^{2}(\mathbb{R},\hspace{0.2cm}e^{-x^{2}}dx)$ and $z$ fixed in $\mathbb{C}$. The inequality $(\ref{E3.56})$
 ensures the continuity of the following linear functional
\begin{align}\label{E3.57}
\nonumber L^{2}(\mathbb{R},\hspace{0.2cm} &e^{-x^{2}}dx)\longrightarrow \mathbb{C}\\
 &\varphi\longmapsto \tilde{B}_{\nu,m}[\varphi](z).
 \end{align}
The continuity of the above linear functional ensure that we have
\begin{align}\label{E3.58}
 \nonumber \tilde{B}_{\nu,m}[\varphi](z)&=\sum_{j=0}^{+\infty}\lambda_{j}\tilde{B}_{\nu,m}[\varphi_{j}](z)\\
&=\sum_{j=0}^{+\infty}\lambda_{j}\psi_{j}^{m}(z),
\end{align}
for each fixed $z\in \mathbb{C}$ and every $\varphi=\displaystyle{\sum_{j=0}^{+\infty}}\lambda_{j}\varphi_{j}$ in $L^{2}(\mathbb{R},\hspace{0.2cm}e^{-x^{2}}dx)$.
 Moreover, $\tilde{B}_{\nu,m}[\varphi](z)$ converges absolutely for all $z\in \mathbb{C}$. Indeed, using the Cauchy-Schwarz inequality and $(\ref{E3.52})$,
  we obtain
\begin{align}\label{E3.59}
 \nonumber \mid \tilde{B}_{\nu,m}[\varphi](z)\mid&\leq \sum_{j=0}^{+\infty}\mid\lambda_{j}\mid \mid \psi_{j}^{m}(z)\mid\\
\nonumber &\leq  (\sum_{j=0}^{+\infty}\mid\lambda_{j}\mid^{2})^{\frac{1}{2}}(\sum_{j=0}^{+\infty}\mid \psi_{j}^{m}(z)\mid)^{\frac{1}{2}}\\
&\leq \sqrt{\frac{\nu}{\pi}[(1+\mid z\mid^{2m})\exp(\nu \mid z\mid^{2})]}\parallel\varphi\parallel_{L^{2}(\mathbb{R},\hspace{0.2cm}e^{-x^{2}}dx)}.
 \end{align}
Furthermore, we have
\begin{align}\label{E3.60}
 \nonumber \parallel \tilde{B}_{\nu,m}[\varphi]\parallel_{\nu,m}^{2}&=\sum_{j=0}^{+\infty} \mid \lambda_{j}\mid^{2}\parallel \psi_{j}^{m}\parallel_{\nu,m}^{2}\\
 \nonumber &=\sum_{j=0}^{+\infty} \mid \lambda_{j}\mid^{2}\\
&=\parallel\varphi\parallel^{2}_{L^{2}(\mathbb{R},\hspace{0.2cm}e^{-x^{2}}dx)}.
 \end{align}
This shows that $\tilde{B}_{\nu,m}$ is a well defined isometry from $L^{2}(\mathbb{R},\hspace{0.2cm}e^{-x^{2}}dx)$ onto $\mathcal{A}_{m}^{2,\nu}(\mathbb{C})$.
Finally, by considering the following isometry
\begin{align}\label{E3.61}
 \nonumber B_{\nu,m}\hspace{0.2cm}:L^{2}(\mathbb{R},&\hspace{0.2cm} dx)\longrightarrow \mathcal{A}_{m}^{2,\nu}(\mathbb{C})\\
 &\varphi\longmapsto B_{\nu,m}[\varphi]=\tilde{B}_{\nu,m}\circ T[\varphi],
 \end{align}
where $T$ is the canonical isometry given by
\begin{align}\label{E3.62}
 \nonumber T\hspace{0.2cm}:L^{2}(\mathbb{R},&\hspace{0.2cm} dx)\longrightarrow L^{2}(\mathbb{R},\hspace{0.2cm} e^{-x^{2}}dx)\\
 &\varphi\longmapsto e^{\frac{x^{2}}{2}}\varphi,
 \end{align}
we give the desired result and then the proof of theorem
$(\ref{T3.1})$ is completed.
\end{proof}
\begin{proof}[\textbf{Proof of the proposition }$(\ref{P3.1})$]
After a direct application of the theorem $(\ref{T3.1})$, it remains just to give an explicit formula for the function $\varpi_{m}(t)$,
defined in $(\ref{E3.20})$, in the case when $m=2$. For this, we proceed as follows. Explicitly, we have
\begin{align}\label{E3.63}
\nonumber\varpi_{2}(t)&=(\sqrt{t}e^{-t})*(\sqrt{t}e^{-2t})\\
\nonumber&=\int_{0}^{t}\sqrt{s}e^{-s}\sqrt{t-s}e^{-2(t-s)}ds\\
\nonumber&=e^{-2t}\int_{0}^{t}\sqrt{t-s}\sqrt{s}e^{s}ds\\
\nonumber&=e^{-2t}\int_{0}^{1}\sqrt{t-tx}\sqrt{tx}e^{xt}tdx,\hspace{0.25cm} (x:=\frac{s}{t})\\
&=t^{2}e^{-2t}\int_{0}^{1}\sqrt{1-x}\sqrt{x}e^{xt}dx.
\end{align}
Using the following integral representation of the hypergeometric function $\prescript{}{1}{F}_1^{}(\alpha,\beta;t)$ (\cite[p.331]{Nik})
\begin{align}\label{E3.64}
\prescript{}{1}{F}_1^{}(\alpha,\beta;t)=\frac{\Gamma(\beta)}{\Gamma(\alpha)\Gamma(\beta-\alpha)}\int_{0}^{1}(1-x)^{\beta-\alpha-1}x^{\alpha-1}e^{tx}dx,
 \hspace{0.25cm} 0<\alpha<\beta,
\end{align}
for $\alpha=\frac{3}{2}$ and $\beta=3$, the equation $(\ref{E3.63})$ can be rewritten as
\begin{align}\label{E3.65}
\nonumber \varpi_{2}(t)&=\frac{\Gamma(\frac{3}{2})\Gamma(\frac{3}{2})}{2}t^{2}e^{-2t}\prescript{}{1}{F}_1^{}(\frac{3}{2};3;t)\\
&=\frac{\pi}{8}t^{2}e^{-2t}\prescript{}{1}{F}_1^{}(\frac{3}{2};3;t),
\end{align}
where we have used $\Gamma(\frac{3}{2})=\frac{\sqrt{\pi}}{2}$ and $\Gamma(3)=2!=2$.\\
The proof of proposition is closed.
\end{proof}
\section{Conclusion and open questions}
In this  paper,  we have  reconsidered the  study  of  the
Bargmann-Dirichlet space on  the complex plane $\mathbb{C}$ and its
generalizations considered in  \cite{Elh}. In particular, we    have
given   a new characterization of such spaces as harmonic spaces of
the magnetic Laplacian with suitable domains. Then, we  have
elaborated the associated unitary
integral transforms  of  Bargmann type.\\
The present work comes up with many open  questions. In particular,
it will be intersecting to make contact with quantum dynamic
activities on such   non trivial spaces in the presence of magnetic
sources. Precisely, it would be of interest to bring a physical
interpretation of states  belonging to the studied spaces. Such
questions could be addressed elsewhere.

\addcontentsline{toc}{section}{Bibliographie}

\end{document}